\newtheorem{thm}{Theorem}
\newtheorem{lemma}{Lemma}
\newtheorem{remark}{Remark}
\newtheorem{corollary}{Corollary}
\newtheorem{cond}{Condition}
\tikzset{
	vertex/.style={circle,draw,minimum size=1.5em},
	edge/.style={->,> = latex'}
}
\def\bz{\mathbf{z}}
\def\bx{\mathbf{x}}
\def\tbz{\tilde{\mathbf{z}}}
\def\bc{\mathbf{c}}
\def\be{\mathbf{e}}
\def\pg{\pmb\gamma}
\def\pmu{\pmb\mu}
\def\ezhg{e^{{\mathbf{z}}_{ij}^\top \hat{\pmb\gamma}}}
\def\ezg{e^{{\mathbf{z}}^\top_{ij} \pmb\gamma^0}}
\def\etzg{e^{\tilde{\mathbf{z}}_{ij}^\top \pmb\gamma^0}}
\def\mcN{\mathcal{N}}
\def\mcR{\mathcal{R}}
\def\mcC{\mathcal{C}}
\def\Pr{\text{Pr}}
\def\Bm{\| \bar B \|_{\max} }
\def\nezhg{e^{-{\mathbf{z}}_{ij}^\top \hat{\pmb\gamma}}}
\newcommand{\blind}{1}
\begin{document}

\def\spacingset#1{\renewcommand{\baselinestretch}%
{#1}\small\normalsize} \spacingset{1}


\if1\blind
{
  \title{\bf Pairwise Covariates-Adjusted Block Model for Community Detection\thanks{Huang and Sun contribute equally to this work. Corresponding Author: Yang Feng (yang.feng@nyu.edu)}}
  \author{Sihan Huang, Jiajin Sun
  \hspace{.2cm}\\
    Department of Statistics, Columbia University\\
    Yang Feng\\
    Department of Biostatistics, New York University
   }
   \date{}
  \maketitle
} \fi

\if0\blind
{
  \bigskip
  \bigskip
  \bigskip
  \begin{center}
    {\LARGE\bf Pairwise Covariates-Adjusted Block Model for Community Detection}
\end{center}
  \medskip
} \fi

\bigskip
\begin{abstract}
One of the most fundamental problems in network study is community detection. The stochastic block model (SBM) is a widely used model, and various estimation methods have been developed with their community detection consistency results unveiled. However, the SBM is restricted by the strong assumption that all nodes in the same community are stochastically equivalent, which may not be suitable for practical applications. We introduce a pairwise covariates-adjusted stochastic block model (PCABM), a generalization of SBM that incorporates pairwise covariate information. We study the maximum likelihood estimates of the coefficients for the covariates as well as the community assignments. It is shown that both the coefficient estimates of the covariates and the community assignments are consistent under suitable sparsity conditions. Spectral clustering with adjustment (SCWA) is introduced to efficiently solve PCABM. Under certain conditions, we derive the error bound of community detection under SCWA and show that it is community detection consistent. In addition, we investigate model selection in terms of the number of communities and feature selection for the pairwise covariates, and propose two corresponding algorithms. PCABM compares favorably with the SBM or degree-corrected stochastic block model (DCBM) under a wide range of simulated and real networks when covariate information is accessible.
\end{abstract}

\noindent%
{\it Keywords:}  Covariates-adjusted; Network; Consistency; Community Detection; Spectral Clustering with Adjustment
\vfill

\spacingset{1.4} 

\section{Introduction}\label{sec:intro}
Networks are used to represent connections among subjects within a population of interest, and their wide range of applications has drawn researchers from various fields. In social media, network analysis can reveal people's behaviors and interests through their connections, such as Facebook friends and Twitter followers. In ecology, a food web depicting predator-prey interactions offers valuable insights into individual habits and the structure of biocoenosis. Network analysis also has extensive applications in computer science, biology, physics, and economics \citep{getoor2005link, goldenberg2010survey, newman1963introduction, graham2014econometric}.

Community detection, one of the most studied problems for network data, is concerned with identifying groups of nodes that are densely connected within groups and sparsely connected between groups. Detecting network communities not only aids in understanding the structural features of networks, but also has practical applications. For instance, communities in social networks often share similar interests, which can help the development of recommendation systems. Community detection methods primarily fall into two categories: algorithm-based and model-based. Algorithm-based methods \citep{bickel2009nonparametric,newman2006modularity,zhao2011community,wilson2014testing,wilson2017community} involve devising an objective function (e.g., modularity) and optimizing it for community detection, while model-based methods assume that edges are generated from a probabilistic model. Popular models include the stochastic block model \citep{holland1983stochastic}, mixture model \citep{newman2007mixture}, degree-corrected stochastic block model \citep{karrer2011stochastic}, and latent space models \citep{hoff2002latent, handcock2007model, hoff2008modeling}. For a comprehensive review of statistical network models, refer to \cite{goldenberg2010survey} and \cite{fortunato2010community}.

The classical stochastic block model (SBM) posits that the connection between each pair of nodes depends solely on their community labels. For SBM, community detection consistency has been established for various methods, such as modularity maximization \citep{newman2006modularity}, profile likelihood \citep{bickel2009nonparametric,choi2012stochastic}, spectral clustering \citep{rohe2011spectral,lei2015consistency}, variational inference \citep{bickel2013}, and penalized local maximum likelihood estimation \citep{gao2017achieving}, among others. However, in real-world scenarios, node connections may depend not only on community structure but also on nodal or pairwise covariates. For example, in an ecological network, predator-prey links between species could be influenced by factors such as prey types, habits, body sizes, and living environments. By incorporating nodal and pairwise information into network models, a more accurate community structure can be obtained.

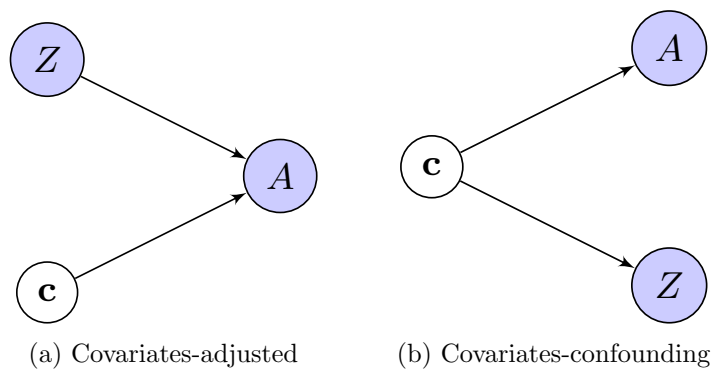
\begin{figure}[!h]
\centering
    \begin{subfigure}[b]{0.35\textwidth}
        \centering
        \resizebox{\linewidth}{!}{
            \begin{tikzpicture}
            \node[vertex,fill=blue!20] (1) at (0,0) {$Z$};
            \node[vertex,fill=blue!20] (2) at (2,-1) {$A$};
            \node[vertex] (3) at (0,-2) {$\mathbf{c}$};
            \draw[edge] (1) -- (2);
            \draw[edge] (3) -- (2);
            \end{tikzpicture}
        }
        \caption{Covariates-adjusted}
        \label{fig:subfig1}
    \end{subfigure}
    \hspace*{2em}%
    \begin{subfigure}[b]{0.35\textwidth}
        \centering
        \resizebox{\linewidth}{!}{
            \begin{tikzpicture}
            \node[vertex] (1) at (0,-1) {$\mathbf{c}$};
            \node[vertex,fill=blue!20] (2) at (2,0) {$A$};
            \node[vertex,fill=blue!20] (3) at (2,-2) {$Z$};
            \draw[edge] (1) -- (2);
            \draw[edge] (1) -- (3);
            \end{tikzpicture}
        }
        \caption{Covariates-confounding}   
        \label{fig:subfig2}
    \end{subfigure}
    \caption{Two different network models including covariates} 
    \label{fig:model}
\end{figure}

Depending on the relationship between communities and covariates, there are generally two classes of models, as depicted in Figure~\ref{fig:model}: \emph{covariates-adjusted} and \emph{covariates-confounding}. The symbols $\mathbf{c}$, $Z$, and $A$ represent latent community labels, pairwise covariates, and the adjacency matrix, respectively. In Figure~\ref{fig:subfig1}, the latent community and covariates jointly determine the network structure. One example of this model is the friendship network among students. Students may become friends for various reasons, such as being in the same class, sharing hobbies, or belonging to the same ethnic group. Without adjusting for these covariates, it is difficult to infer a single community membership from $A$. We will analyze one such example in detail in Section~\ref{sec:realdata}. Conversely, covariates may carry the same community information as the adjacency matrix, as shown in Figure~\ref{fig:subfig2}. The term "confounding" originates from graph models \citep{greenland1999confounding}. The citation network serves as an excellent example of this model \citep{tan2016topic}. When research topics are treated as community labels for articles, citation links largely depend on the research topics of the article pair. Simultaneously, the distribution of keywords is likely driven by the specific topic an article addresses.

Researchers often modify the SBM in the above two ways to incorporate covariate information. For the covariates-confounding model, \cite{newman2016structure} use covariates to construct the prior for community labels and then generate edges using a degree-corrected model.  \cite{zhang2016community} proposes a joint community detection criterion, an analog of modularity, to incorporate nodal features. \cite{deshpande2018contextual} establishes information-theoretic bounds for combining a block model and a spike covariance model that are conditionally independent given class assignments. \cite{yan2021covariate} suggests a semidefinite programming framework to aggregate network and covariate information, while \cite{xu2022covariate} considers an augmented adjacency tensor approach under an analogous setting in multilayer SBM. \cite{weng2016community} employs a logistic model as the prior for community labels. For the covariates-adjusted model, \cite{yan2019statistical} proposes a directed network model with a logistic function, but it does not consider potential community structures. \cite{wu2017generalized} introduces a generalized linear model with low-rank effects to model network edges, which could imply a community structure or a latent space structure, although not explicitly mentioned; \cite{ma2020universal} presents algorithms for a latent space model that incorporates edge covariates; both of these works consider penalized MLE with convex relaxation and gradient-based algorithms.

In this work, we propose a simple yet effective model called \emph{Pairwise Covariates-Adjusted Stochastic Block Model} (PCABM), which extends the SBM by adjusting the probability of connections according to the contribution of pairwise covariates\footnote{Note that these are ``edge-level" covariates instead of the nodal or vertex-level covariates that are often considered in other parts of the literature. Having said that, one can incorporate nodal information into our model by converting it into pairwise covariates, where an example will be presented in Section \ref{subsec::polblog}.}. Through this model, we can learn how each covariate affects the connections by examining its corresponding regression coefficient, for which asymptotic normality is established. In addition, we investigate the likelihood-based community detection method and propose an efficient pseudo-likelihood expectation-maximization (PLEM) algorithm. Consistency results for both the MLE and the PLEM algorithm are provided. Apart from likelihood methods, we also propose a novel spectral clustering method for PCABM. We prove desirable theoretical properties for the spectral clustering method, and demonstrate that, as a fast algorithm, using it as an initial estimator for the likelihood method results in more accurate community detection than random initialization. Furthermore, we consider the model selection problems of estimating the number of communities and selecting the important confounding covariates, providing algorithms to address these two issues based on the edge cross-validation framework proposed by \cite{li2020network}.

The remainder of the paper is organized as follows. In Section~\ref{sec:pcabm}, we introduce the PCABM. We then present the asymptotic properties of the coefficient estimates in Section~\ref{sec:gamma}. After that, we introduce two methods for community detection: a likelihood approach in Section~\ref{sec:likethm} and a spectral approach in Section~\ref{sec:scwa}. In addition, we present two algorithms for model selection in Section~\ref{sec:model_select}. Simulations and applications on real networks are discussed in Sections~\ref{sec:simu} and~\ref{sec:realdata}, respectively. We conclude the paper with a brief discussion in Section~\ref{sec:discussion}. All proofs are relegated to the Supplementary Materials.

Here, we introduce some notations to facilitate the discussion.  For a square matrix $M\in \mathbb{R}^{n\times n}$,  let $\|M\|$ be the operator norm of $M$, $\|M\|_F=\sqrt{\mbox{trace}(M^TM)}$, $\|M\|_{\infty}=\max_i\sum_{j=1}^n|M_{ij}|$, $\|M\|_0=\#\{(i,j)|M_{ij}\neq0\}$, and $\|M\|_{\max}=\max_{ij}|M_{ij}|$. $\lambda_{\min}(M)$ is the minimum eigenvalue of $M$.  For index sets $I,J\subset[n]:=\{1,2,\cdots,n\}$, $M_{I\cdot}$ and $M_{\cdot J}$ are the sub-matrices of $M$ consisting the corresponding rows and columns, respectively. For a vector $\mathbf{x}\in \mathbb{R}^n$, let $\|\mathbf{x}\|=\sqrt{\sum_{i=1}^n x_i^2}$ and $\|\mathbf{x}\|_{\infty}=\max_i |x_i|$. We define the Kronecker power by $\mathbf{x}^{\otimes(k+1)}=\mathbf{x}^{\otimes k}\otimes\mathbf{x}$, where $\otimes$ is the Kronecker product.

For any positive integer $K$, we define $I_K\in\mathbb{R}^{K\times K}$ to be the identity matrix and $\mathbf{1}_K$ to be the all-one vector. When there is no confusion, we will sometimes omit the subscript $K$. For a vector $\mathbf{x}\in\mathbb{R}^K$,  $D(\mathbf{x})\in\mathbb{R}^{K\times K}$ represents the diagonal matrix whose diagonal elements take the value of $\mathbf{x}$. For an event $A$, its indicator function is written as $\mathbbm{1}(A)$. 
For two real number sequences $x_n$ and $y_n$, we say $x_n=o(y_n)$ or $y_n = \omega(x_n) $ if $\lim_{n\to\infty}x_n/y_n=0$, $x_n=O(y_n)$ or $y_n = \Omega(x_n) $ if $\limsup_{n\to\infty}|x_n/y_n|\leq\infty$.

\section{Pairwise Covariates-Adjusted Stochastic Block Model}\label{sec:pcabm}

We consider a graph with $n$ nodes and $K$ communities, where $K$ could be fixed or increase with $n$. In this paper, we focus on undirected weighted graphs without self-loops. All edge information is incorporated into a symmetric adjacency matrix $A=[A_{ij}]\in\mathbb{N}^{n\times n}$ with diagonal elements being zero, where $\mathbb{N}$ represents the set of nonnegative integers. The total number of possible edges is denoted by $N_n=n(n-1)/2$. The true node labels $\mathbf{c}=\{c_1,\cdots,c_n\}\in \{1,\cdots,K\}^n$ are drawn independently from a multinomial distribution with parameter vector $\boldsymbol\pi=(\pi_1,\cdots,\pi_K)^T$, where $\sum_{k=1}^K\pi_k=1$ and $\pi_k>0$ for all $k$. The community detection problem aims to find a disjoint partition of the nodes, or equivalently, estimated node labels $\mathbf{e}=\{e_1,\cdots,e_n\}\in \{1,\cdots,K\}^n$ that is close to $\mathbf{c}$, where $e_i\in\{1,\cdots,K\}$ is the label for node $i$.

In the classical SBM, we assume $\text{Pr}(A_{ij}=1|\boldsymbol{c}) =B_{c_ic_j}$, where $B=[B_{ab}]\in [0,1]^{K\times K}$ is a symmetric matrix with no identical rows. In practice, the connection between two nodes may depend not only on the communities they belong to, but also on the nodal information (e.g., gender, age, religion). To fix the idea, assume in addition to $A$, we have observed a pairwise $p$-dimensional vector $\mathbf{z}_{ij}$ between nodes $i$ and $j$. Denote the collection of the pairwise covariates among nodes as $Z=[\mathbf{z}^T_{ij}]\in\mathbb{R}^{n^2\times p}$. Here, we assume $\mathbf{z}_{ij} = \mathbf{z}_{ji}$ and $\mathbf{z}_{ii}=\mathbf{0}$, for all $i$ and $j$.

Now, we are ready to introduce the \emph{Pairwise Covariates-Adjusted Stochastic Block Model} (PCABM).  For $i<j$, conditional on the community label $\mathbf{c}$ and the pairwise covariate matrix $Z$, $A_{ij}$'s are independent and 
$$A_{ij}\sim{\rm Poisson}(\lambda_{ij}),\ \lambda_{ij}=B_{c_ic_j}e^{\mathbf{z}_{ij}^T\boldsymbol{\gamma}^0},$$
where $\boldsymbol{\gamma}^0$ is the true coefficient vector for the pairwise covariates. In addition to the goal of recovering the community membership vector $\mathbf{c}$, we would also like to get an accurate estimate for $\boldsymbol{\gamma}^0$.

The specific term $\exp(\mathbf{z}_{ij}^T\boldsymbol{\gamma}^0)$ is introduced here to adjust the connectivity between nodes $i$ and $j$. Here, as in the vanilla SBM, we assume a sparse setting for $B=\rho_n\bar{B}$, with $\bar{B}$ fixed and $\rho_n\to0$ as $n\to\infty$. Note that due to the contribution of $Z$, $\varphi_n=n\rho_n$ is no longer the expected degree as in the vanilla SBM \citep{zhao2012consistency}, but it is still useful as a measure of the network sparsity. It is easy to observe that when $\boldsymbol{\gamma}^0=0$, PCABM reduces into the vanilla Poisson SBM.

Under PCABM, the likelihood function is
\begin{align*}
	\mathcal{L}(\mathbf{e},\boldsymbol{\gamma},B,{\boldsymbol{\pi}}|A,Z)\propto\prod_{i=1}^n\pi_{e_i}\prod_{i<j}B_{e_ie_j}^{A_{ij}}e^{A_{ij}\mathbf{z}_{ij}^T\boldsymbol{\gamma}}\exp\left(-B_{e_ie_j}e^{\mathbf{z}_{ij}^T\boldsymbol{\gamma}}\right).
\end{align*}
Define 
\begin{align*}
	n_k(\mathbf{e})&=\sum_{i=1}^n\mathbbm{1}(e_i=k),\ O_{kl}(\mathbf{e})=\sum_{ij}A_{ij}\mathbbm{1}(e_i=k, e_j=l),\\
	E_{kl}(\mathbf{e},\boldsymbol{\gamma})&=\sum_{i\neq j}e^{\mathbf{z}_{ij}^T\boldsymbol{\gamma}}\mathbbm{1}(e_i=k, e_j=l)=\sum_{(i,j)\in s_{\mathbf{e}}(k,l)}e^{\mathbf{z}_{ij}^T\boldsymbol{\gamma}},
\end{align*}
where $s_{\mathbf{e}}(k,l)=\{(i,j)|e_i=k,e_j=l,i\neq j\}$. Under the assignment $\mathbf{e}$, $n_k(\mathbf{e})$ represents the number of nodes estimated to be in the community $k$. For $k\neq l$, $O_{kl}$ is the total number of edges between estimated communities $k$ and $l$; for $k=l$, $O_{kk}$ is twice the number of edges within estimated community $k$. $E_{kl}$ is the summation of all pair-level factors between estimated communities $k$ and $l$. Up to a constant term, we can write the log-likelihood function as 
\begin{align*}
\log\mathcal{L}(\mathbf{e},\boldsymbol{\gamma},B,\boldsymbol{\pi}|A,Z) = &\sum_kn_k(\mathbf{e})\log\pi_k+\frac{1}{2}\sum_{kl}O_{kl}(\mathbf{e})\log B_{kl}\\
&-\frac{1}{2}\sum_{kl}B_{kl}E_{kl}(\mathbf{e},\boldsymbol{\gamma})+\sum_{i<j}A_{ij}\mathbf{z}_{ij}^T\boldsymbol{\gamma}.
\end{align*}
Given $\mathbf{e}$ and $\boldsymbol{\gamma}$, we derive the MLE $\hat{\pi}_k(\mathbf{e})=\frac{n_k(\mathbf{e})}{n}$ and $\hat{B}_{kl}(\mathbf{e},\boldsymbol{\gamma})=\frac{O_{kl}(\mathbf{e})}{E_{kl}(\mathbf{e},\boldsymbol{\gamma})}$. Plugging $\hat{B}(\mathbf{e},\boldsymbol{\gamma})$ and $\hat{\boldsymbol{\pi}}(\mathbf{e})$ into the original log-likelihood and discarding the constant terms, we have
\begin{align}\label{eq:log-lik}
\begin{split}
&\log\mathcal{L}(\mathbf{e},\boldsymbol{\gamma},\hat{B},\hat{{\boldsymbol{\pi}}}|A,Z)\\
\propto&\frac{1}{2}\sum_{kl}O_{kl}(\mathbf{e})\log\frac{O_{kl}(\mathbf{e})}{E_{kl}(\mathbf{e},\boldsymbol{\gamma})}+\sum_{i<j}A_{ij}\mathbf{z}_{ij}^T\boldsymbol{\gamma}+\sum_kn_k(\mathbf{e})\log\frac{n_k(\mathbf{e})}{n}.
\end{split}
\end{align}
Out target is to maximize (\ref{eq:log-lik}) with respect to $\mathbf{e}$ and $\boldsymbol{\gamma}$. We consider a two-step sequential estimation procedure by first studying the estimation of  $\boldsymbol{\gamma}^0$ in Section \ref{sec:gamma} and then the estimation of $\mathbf{c}$ in Section \ref{sec:likethm} (likelihood method) and Section \ref{sec:scwa} (spectral method).

It is worth mentioning that the proposed model includes  DCBM in the following sense: by choosing $p=1$, $\bz_{ij} =  \log (d_i d_j)$ and $\pg = 1$ where $d_i$ is the degree of node $i$, \eqref{eq:log-lik} becomes
\begin{align*}
\log \mathcal L \left(\be, \pg=1, \hat B, \hat{\pmb\pi} | A, Z = (\log(d_id_j))_{n^2\times 1}\right) \propto \frac12 \sum_{kl} O_{kl}(\be) \log \frac{O_{kl}(\be)}{n_k(\be)n_{l}(\be)},
\end{align*}
which is exactly the profile log-likelihood under DCBM  derived by maximizing over ``$\theta$ and $P$'' (degree parameter and block connection probability) in DCBM. From this perspective, one can view PCABM as a generalization of DCBM.

\section{Estimation of Coefficients for Pairwise Covariates\label{sec:gamma}}
As the first step to maximize the log-likelihood, we consider the estimation of coefficients $\boldsymbol{\gamma}^0$ for pairwise covariates. To this end, we impose the following conditions on $Z$.
\begin{cond}\label{cond:zbd}
$\{\mathbf{z}_{ij},i<j\}$ are i.i.d. and uniformly bounded, i.e., for $\forall i<j$, $\|\mathbf{z}_{ij}\|_{\infty}\leq\zeta$, where $\zeta>0$ is some constant. $\|\pg^0\|_1 $ is also bounded by a constant. Denote $\xi = \exp(\zeta \|\pg^0\|_1 )$.
\end{cond}

\begin{remark}
The bounded support condition for $\mathbf{z}_{ij}$ is introduced to simplify the proof. 
It could be relaxed to $\bz_{ij}$ to have a light tail or to allow the upper bound to grow slowly with network size $n$.
For example, our proofs could still go through if $\exp(\bz_{ij}^\top\pg^0) $ follows a sub-Gaussian distribution (with $\|\pg^0\|_1$ bounded), 
under slightly stronger conditions on the sparsity of the network.
\end{remark}



Under Condition~\ref{cond:zbd}, 
the following expectations exist: $\theta(\boldsymbol{\gamma}^0)\equiv\mathbb{E}e^{\mathbf{z}_{ij}^T\boldsymbol{\gamma}^0}\in\mathbb{R}^+$,  $\boldsymbol{\mu}(\boldsymbol{\gamma}^0)\equiv\mathbb{E}\mathbf{z}_{ij}e^{\mathbf{z}_{ij}^T\boldsymbol{\gamma}^0}\in\mathbb{R}^p$, and $\Sigma(\boldsymbol{\gamma}^0)\equiv\mathbb{E}\mathbf{z}_{ij}\mathbf{z}_{ij}^Te^{\mathbf{z}_{ij}^T\boldsymbol{\gamma}^0}\in\mathbb{R}^{p\times p}$.
To ensure that $\boldsymbol{\gamma}^0$ is the unique solution to maximize the likelihood in the population version, we impose the following regularity condition at the true $\boldsymbol{\gamma}^0$.
\begin{cond}\label{cond:zpd}
    $\Sigma(\boldsymbol{\gamma}^0)-\theta(\boldsymbol{\gamma}^0)^{-1}\boldsymbol{\mu}(\boldsymbol{\gamma}^0)^{\otimes 2}$ is positive definite.
\end{cond}

\begin{remark} 
To understand the implication of Condition \ref{cond:zpd}, consider the function $g(\boldsymbol{\gamma})=\theta(\boldsymbol{\gamma})\Sigma(\boldsymbol{\gamma})-\boldsymbol{\mu}(\boldsymbol{\gamma})^{\otimes2}$. In the special case of SBM where $\boldsymbol{\gamma}^0=\mathbf{0}$, we have  $g(\mathbf{0})=\mathbb{E}[\mathbf{z}^{\otimes2}]-\mathbb{E}[\mathbf{z}]^{\otimes2}={\rm cov}(\mathbf{z})$. To avoid multicollinearity, it's natural for us to require ${\rm cov}(\mathbf{z})$ to be positive definite. For a general PCABM, we require $g(\boldsymbol{\gamma})$ to be positive definite at the true value $\boldsymbol{\gamma^0}$.	
\end{remark}

For a given initial community assignment $\boldsymbol{e}_0$, denote by $\ell_{\mathbf{e}_0}$  the log-likelihood terms in (\ref{eq:log-lik}) containing $\boldsymbol{\gamma}$, which is
    $$\ell_{\mathbf{e}_0}(\boldsymbol{\gamma})\equiv\sum_{i<j}A_{ij}\mathbf{z}_{ij}^T\boldsymbol{\gamma}-\frac{1}{2}\sum_{kl}O_{kl}(\mathbf{e}_0)\log E_{kl}(\mathbf{e}_0,\boldsymbol{\gamma}).$$ 
    We consider the following estimate:
\begin{align}\label{eq::gamma_hat}
    \hat{\boldsymbol{\gamma}}(\boldsymbol{e}_0) = \arg\max_{\boldsymbol{\gamma}} \ell_{\textbf{e}_0}(\boldsymbol{\gamma}).
\end{align}
We point out that $\ell_{\be_0}(\pg)$ is concave in $\pg$, so the global optimizer in \eqref{eq::gamma_hat} can be efficiently solved by a BFGS algorithm.
When there is no ambiguity, we will just write it as $\hat{\boldsymbol{\gamma}}$, as we will see in the theory that under some mild conditions, the asymptotic result does not depend on the choice of $\mathbf{e}_0$.
In fact, one could simply choose $\be_0  = \mathbf 1$, the all-one vector, when estimating $\pg^0$.

 To accommodate the ``$K$ growing with $n$" case, we also need the following stability condition.

\begin{cond}\label{cond:e1} 
$\bar B_{\lim} = \lim_{n\to\infty} \sum_{a,b}^K \pi_a \pi_b \bar B_{ab}$ exists.
\end{cond}

\begin{remark}
Note that when $K$ is fixed, Condition~\ref{cond:e1} is automatically satisfied.  
When $K$ grows with $n$, we need the $\pmb\pi$-weighted average of matrix $\bar B$ to have a limit. This is a mild condition since, otherwise, the sequence of observed graphs indexed by $n$ does not come from a consistent data generating process.
\end{remark}
Now we are ready to present the consistency and asymptotic normality of $\hat{\boldsymbol{\gamma}}$.
\begin{thm}[Consistency and asymptotic normality of MLE of $\pg$]
    \label{THM:ASY}
    Under PCABM, assume Conditions 
    \ref{cond:zbd},
      \ref{cond:zpd}
      and \ref{cond:e1}
       hold, where the number of communities $K$ could either be fixed or grow to $\infty$ at an arbitrary rate.
       Then fixing $\be_0 = \mathbf 1$, as $n\to\infty$, if $N_n\rho_n\to\infty$ and $\rho_n\to 0$, 
       we have $\hat \pg(\be_0) \stackrel{p}{\to} \pg^0$ and
    \begin{equation}
    \label{gm_thm1_eq}
        \sqrt{N_n\rho_n}\left[\hat{\boldsymbol{\gamma}}(\mathbf{e}_0)-\boldsymbol{\gamma}^0\right] \stackrel{d}{\to} \mathcal{N}(\mathbf{0}, \Sigma_\infty^{-1}(\boldsymbol{\gamma}^0)),
    \end{equation}
     where $\Sigma_\infty(\boldsymbol{\gamma}^0)=\bar{B}_{\lim}
     [\Sigma(\boldsymbol{\gamma}^0)-\theta(\boldsymbol{\gamma}^0)^{-1}\boldsymbol{\mu}(\boldsymbol{\gamma}^0)^{\otimes 2}]$.
\end{thm}
Different from \cite{yan2019statistical}, in which the network is dense, the convergence rate is $\sqrt{N_n\rho_n}$ rather than $\sqrt{N_n}$ since the effective number of edges is reduced from $N_n$ to $N_n\rho_n$. The asymptotic covariance matrix $\Sigma_\infty^{-1}(\boldsymbol{\gamma}^0)$ depends on $\theta(\boldsymbol{\gamma}^0)$, $\boldsymbol{\mu}(\boldsymbol{\gamma}^0)$, and  $\Sigma(\boldsymbol{\gamma}^0)$, which can be estimated empirically by the plug-in method.
 
 Now, with a consistent estimate of $\boldsymbol{\gamma}^0$, we are ready to study the estimation of $\boldsymbol{c}$. In the next two sections, we will present two different methods for estimating $\boldsymbol{c}$, namely the likelihood-based estimate in Section \ref{sec:likethm} and the spectral method in Section \ref{sec:scwa}.

\section{Likelihood Based Estimate for Community Labels} \label{sec:likethm}
This section presents a likelihood-based estimate for community labels by maximizing $\log\mathcal{L}$ regarding $\mathbf{e}_0$ with $\hat{\boldsymbol{\gamma}}$ from Section \ref{sec:gamma}.
We only present the fixed $K$ setting here, and the results for the growing $K$ scenario are relegated to the Supplementary Materials, partly because in the class label MLE for growing $K$, we consider a slightly different regime from $B = \rho_n \bar B $: we need the signal-noise-ratio, or approximately in-class probability over between-class probability, to also grow with $n$ and $K$; and the conditions are imposed on $\sum_{i<j} B_{c_ic_j} $ rather than $\rho_n$. See section \ref{append_sect:KtoinftyMLE} in the Supplementary Materials for details.

We will show that as long as  
 $\hat{\boldsymbol{\gamma}}(\mathbf{e}_0)$ is consistent, the consistency of $\hat{\mathbf{c}}(\hat{\boldsymbol{\gamma}})$ is guaranteed. 
Plugging $\hat{\boldsymbol{\gamma}}$ into \eqref{eq:log-lik},  the log-likelihood function can be rewritten as
$$\ell_{\hat{\boldsymbol{\gamma}}}(\mathbf{e})=\frac{1}{2}\sum_{kl}O_{kl}(\mathbf{e})\log\frac{O_{kl}(\mathbf{e})}{E_{kl}(\mathbf{e},\hat{\boldsymbol{\gamma}})}+\sum_kn_k(\mathbf{e})\log\frac{n_k(\mathbf{e})}{n}. $$
Then, our maximum likelihood estimate for the community label is
\begin{align}\label{eq::mle-e}
    \hat{\mathbf{c}} = \hat{\mathbf{c}} (\hat{\boldsymbol{\gamma}}):=
    \arg\max_{\mathbf{e}}\ell_{\hat{\boldsymbol{\gamma}}}(\mathbf{e}). 
\end{align}
Note that here we omit $\mathbf{e}_0$ to avoid confusion.
%
Following \cite{zhao2012consistency}, we consider two versions of community detection consistency. Note that the consistency in community detection is understood under any permutation of the labels. To be more precise, let $\mathcal{P}_k$ be the collection of all permutation functions of $[K]$. (1) We say the label estimate $\hat{\mathbf{c}}$ is \emph{weakly consistent} if 
$\text{Pr}[n^{-1}\min_{\sigma\in \mathcal{P}_K}  \sum_{i=1}^n \mathbbm{1}(\sigma(\hat c_i)\neq c_i)<\varepsilon]\to1$ for any $\varepsilon>0$ as $n\to\infty$. (2) We say $\hat{\mathbf{c}}$ is \emph{strongly consistent}  if $\text{Pr}[\min_{\sigma\in \mathcal{P}_K}  \sum_{i=1}^n \mathbbm{1}(\sigma(\hat c_i)\neq c_i)=0]\to1$, as $n\to\infty$. 
We establish both versions of consistency for MLE $\hat\bc$ in the following theorem.

\begin{thm}\label{THM:MLE_fixK}
    Under PCABM that satisfies the Conditions~\ref{cond:zbd} and \ref{cond:zpd}, when $K$ is fixed, the community label estimate $\hat{\boldsymbol{c}}$ defined in (\ref{eq::mle-e}) is weakly consistent if $\varphi_n\to\infty$ and strongly consistent if $\varphi_n/\log n\to\infty$, where $\varphi_n = n\rho_n$.
    \end{thm}

In addition to the fixed $K$ case considered in Theorem \ref{THM:MLE_fixK}, we have also shown the consistency of maximum likelihood label estimate in the case when $K$ grows as fast as $K=O(\sqrt{n})$, where we require a slightly stronger condition on the sparsity, $\varphi_n/(\log n)^{3+\delta} \to \infty$. Details are presented in the Supplementary Materials, section \ref{append_sect:KtoinftyMLE}.

Finding the MLE involves optimizing over all possible label assignments, which is, in principle, NP-hard. 
General discrete optimization methods such as the tabu search \citep{beasley1998heuristic,zhao2012consistency} could be time-consuming and unstable. 
Taking advantage of the specific structure in our problem, we propose a pseudo-likelihood EM algorithm (PLEM) that computes an approximate solution to \eqref{eq::mle-e} efficiently. 

\begin{algorithm}[H]
    \caption{PCABM.PLEM0}\label{alg:pcabm}
    \SetKwInOut{Input}{Input}
        \SetKwInOut{Output}{Output}
  \Input{Adjacency matrix $A$; pairwise covariates $Z$; initial community assignment $\mathbf{e}$; number of communities $K$; iteration number $T$.}
  \Output{Coefficient estimate $\hat{\boldsymbol{\gamma}}$ and community label estimate $\hat{\mathbf{c}}$. }
        Maximize $\ell(\boldsymbol{\gamma})$ in \eqref{eq::gamma_hat} by some optimization algorithm (e.g., BFGS) to get $\hat{\boldsymbol{\gamma}}$.\\
        Initialization: $\hat\pi_l = \frac{n_l(\be)}n, \hat E_{lk}(\be)= \sum_{(i,j)\in s_{\be}(l,k)} \ezhg , \hat B_{lk} = \frac{O_{lk}(\be)}{\hat E_{lk}(\be)}$. 

         \For{$t=1$ \KwTo $T$}
  {
  Calculate $b_{ik}(\be) = \sum_{j=1}^n A_{ij}  \mathbbm{1}{(e_j = k)}$ and $\hat \Xi_{ik}(\be) = \sum_{j=1}^n e^{ \bz_{ij}^\top \hat\pg }  \mathbbm{1}{(e_j = k)}$;\\
  \While{\text{pseudo-likelihood has not converged}}
  {
  \vspace{0.25cm}
  E-step: $\hat \pi_{il} = \frac{ \hat\pi_l \prod_{k=1}^K \exp(b_{ik} \log\hat B_{lk} -\hat \Xi_{ik} \hat B_{lk}) }{ \sum_{m=1}^K \hat\pi_m \prod_{k=1}^K \exp(b_{ik} \log\hat B_{mk} -\hat \Xi_{ik} \hat B_{mk}) }$;\\
  \vspace{0.3cm}
  M-step: $\hat\pi_l = \frac1n \sum_{i=1}^n \hat\pi_{il}, \hat B_{lk} = \frac{\sum_{i=1}^n \hat\pi_{il}b_{ik} }{\sum_{i=1}^n \hat\pi_{il}\hat \Xi_{ik} } $;
  }
  Update label estimates: $e_i = \arg\max_{l} \hat\pi_{il} $.
  }
  Output $\hat\bc$ with $\hat c_i = e_i $.
\end{algorithm}

The algorithm is outlined in Algorithm \ref{alg:pcabm}. In the outer loop, we update the label estimate and related quantities in each iteration. The inner loop employs a latent class EM algorithm to derive a new label estimate based on an initial one. For each edge $A_{ij}$, the pseudo-likelihood treats node $i$ as belonging to the true community $c_i$ and node $j$ as belonging to an estimated community $e_j$. With this approximation, the latent class variables $c_i$'s are separated in the pseudo log-likelihood function, enabling an analytic expression for the EM updates. A similar idea was proposed in \cite{amini2013pseudo} for SBM and DCBM. In the case of PCABM, we adjust the algorithm to account for the covariates. A detailed derivation of the PLEM algorithm under PCABM, as well as its theoretical guarantees, are provided in Section \ref{sec::supple_plem}.

\section{Spectral Clustering with Adjustment}~\label{sec:scwa}
Though the likelihood-based method has appealing theoretical properties, it can sometimes be slow when the network size is large. In addition, the community detection results can be sensitive to the initial label assignments $\mathbf{e}$. In that concern, we aim to propose a computationally efficient algorithm in the flavor of spectral clustering \citep{rohe2011spectral}, which can also be used as the initial community label assignments for the likelihood-based methods. 

\subsection{A Brief Review on Spectral Clustering}
First, we introduce some notations and briefly review the classical spectral clustering with $K$-means for SBM. Let $\mathbb{M}_{n,K}$ be the space of all $n\times K$ matrices where each row has exactly one 1 and $(K-1)$ 0's. We usually call $M\in\mathbb{M}_{n,K}$ a \emph{membership matrix} with $M_{ic_i}=1$ for node $i$ with community label $c_i$. Note that $M$ contains the same information as $\mathbf{c}$, and is only introduced to facilitate the discussion.

From now on, we use PCABM$(M,B,Z,\boldsymbol{\gamma}^0)$ to represent PCABM generated with parameters in parentheses.  Let $G_k=G_k(M)=\{1\leq i\leq n:c_i=k\}$ and $n_k=|G_k|$ for $k=1,\cdots,K$. Let $n_{\min}=\min_{1\leq k\leq K} n_k$, $n_{\max}=\max_{1\leq k\leq K} n_k$ and $n'_{\max}$ is the second largest community size. 

For convenience, we define matrix $P=[P_{ij}]\in [0,\infty)^{n\times n}$, where $P_{ij}=B_{c_ic_j}$. Then it is easy to observe $P=MBM^T$. When $A$ is generated from a SBM with $(M,B)$, the $K$-dimensional eigen-decomposition of $P=UDU^T$ and $A=\hat{U}\hat{D}\hat{U}^T$ are expected to be close, where $\hat{U}^T\hat{U}=I_K$ and $D,\hat{D}\in\mathbb{R}^{K\times K}$. Since $U$ has only $K$ unique rows, which represent the community labels, the $K$-means clustering on the rows of $\hat{U}$ usually leads to a good estimate of $M$. 
While finding a global minimizer for the $K$-means problem 
is NP-hard \citep{aloise2009np}, for any positive constant $\epsilon$, we have efficient algorithms to find an $(1+\epsilon)$-approximate solution \citep{kumar2004simple,lu2016statistical}: 
\begin{align*}
(\hat{M},\hat{X})&\in\mathbb{M}_{n,K}\times\mathbb{R}^{K\times K}\quad
s.t. \quad \|\hat{M}\hat{X}-\hat{U}\|_F^2&\leq(1+\epsilon)\min_{M\in\mathbb{M}_{n,K},X\in\mathbb{R}^{K\times K}}\|MX-\hat{U}\|_F^2.
\end{align*}

The goal of community detection is to find $\hat{M}$ that is close to $M$. To define a loss function, we need to take permutation into account. Let  $\mathcal{S}_K$ be the space of all $K\times K$ permutation matrices. Following  \cite{lei2015consistency}, we define two measures of estimation error: the overall error and the worst-case relative error:
\begin{align*}
L_1(\hat{M},M)&=n^{-1}\min_{S\in \mathcal{S}_K}\|\hat{M}S-M\|_0,\quad
L_2(\hat{M},M)&=\min_{S\in \mathcal{S}_K}\max_{1\leq k\leq K}n_k^{-1}\|(\hat{M}S)_{G_k\cdot}-M_{G_k\cdot}\|_0.
\end{align*}
It can be seen that $0\leq L_1(\hat{M},M)\leq L_2(\hat{M},M)\leq2$. While $L_1$ measures the overall proportion of mis-clustered nodes, $L_2$ measures the worst-case performance across all communities.

Vanilla spectral clustering on SBM requires the average degree of the network to be of the order $\Omega(\log n)$ \citep{lei2015consistency}, mainly because sparser networks do not have desired concentration properties like $||A-\mathbb E A|| = O(\sqrt{\varphi_n})$.
In particular, because the true $\mathbb E A$ has elements of the same scale, one can imagine a node with a very large degree will harm the closeness between $A$ and $\mathbb E A$, which is the basis that spectral clustering lies on.
Recent works \citep{le2017concentration,gao2017achieving,joseph2016impact} have shown that regularized versions of spectral clustering  \citep{amini2013pseudo,qin2013regularized},
which basically means performing spectral clustering on a regularized adjacency matrix, 
could enable the concentration of the adjacency matrix under sparser settings and thus relax the average degree assumption required in vanilla spectral clustering.
In our algorithms, we adopt the ``reduce weight of edges proportionally to the excess of degrees'' version of regularization \citep{le2017concentration}, i.e. assigning weight $\sqrt{\lambda_i\lambda_j}$ to $A_{ij}$, where $\lambda_i := \min\{2 d / d_i, 1\}$, $d = \max_{ij} n P_{ij}$, and $d_i$ is the degree of node $i$. 
As $d$ is unknown, in practice we can take $\lambda_i = \min\{\lambda^R\bar d / d_i, 1 \} $, where $\bar d = \sum_{i=1}^n d_i /n $ is the average degree, and $\lambda^R$ is a  constant. For theoretical guarantee we need $\lambda^R$ to be large enough, but in practice $\lambda^R=2$ is sufficient to give satisfactory results from our simulation experience.

\subsection{Regularized Spectral Clustering with Adjustment}

The existence of covariates in PCABM prevents us from
applying (regularized) spectral clustering directly on $A$. Unlike SBM where $A$ is generated from a low-rank matrix $P$, $A$ in PCABM consists of both community and covariate information. Since $P_{ij}=\mathbb{E}[A_{ij}/e^{\mathbf{z}^T_{ij}\boldsymbol{\gamma}^0}]$, an intuitive idea to take advantage of the low-rank structure is to remove the covariate effects, i.e. using the adjusted adjacency matrix $[A_{ij}/e^{\mathbf{z}^T_{ij}{\boldsymbol{\gamma}^0}}]$ for spectral clustering. 

\begin{algorithm}[h]
    \caption{PCABM.SCWA}\label{alg:scwa}
     \SetKwInOut{Input}{Input}
        \SetKwInOut{Output}{Output}
        \Input{Adjacency matrix $A$; pairwise covariates $Z$; initial community assignment $\mathbf{e}$; number of communities $K$; approximation parameter $\epsilon$; constant $\lambda^R$.}
        \Output{Coefficient estimate $\hat{\boldsymbol{\gamma}}$; community estimate $\hat{\mathbf{c}}$.}
        Maximize $\ell(\boldsymbol{\gamma})$ as in \eqref{eq::gamma_hat} by some optimization algorithm (e.g., BFGS) to derive $\hat{\boldsymbol{\gamma}}$.
        \\
        Compute the adjusted adjacency matrix $A'=[A'_{ij}]$ where $A'_{ij} = A_{ij}\exp(-\mathbf{z}_{ij}^T\hat{\boldsymbol{\gamma}})$.
        \\
        Compute the weighted adjusted adjacency matrix $A'^R=[A'^R_{ij}]$, where $A'^R_{ij} = A'_{ij} \sqrt{\lambda_i \lambda_j}$, where $\lambda_i = \min\{\lambda^R d'/ d_i', 1\}$, $d_i'$ is the degree of node $i$ in $A'$ and $d' =\sum_i d_i' /n$.
        \\
        Calculate $\hat{U}\in\mathbb{R}^{n\times K}$ consisting of the leading $K$ eigenvectors (ordered in absolute eigenvalue) of $A'^R$.
        \\
        Calculate the $(1+\epsilon)$-approximate solution $\hat{M}$ to the $K$-means problem with $K$ clusters and input matrix $\hat{U}$.
        \\
        Output $\hat{\mathbf{c}}$ according to $\hat{M}$.
\end{algorithm}

In practice, we don't know the true value of the parameter $\boldsymbol{\gamma}^0$. Naturally, we replace $\boldsymbol{\gamma}^0$ with the empirical estimate $\hat{\boldsymbol{\gamma}}$ from \eqref{eq::gamma_hat}, and define the \emph{adjusted adjacency matrix} as $A'=[A'_{ij}]$ where $A'_{ij}=A_{ij}\exp(-\mathbf{z}_{ij}^T\hat{\boldsymbol{\gamma}})$. 
Furthermore, for regularized spectral clustering, define the weighted version of $A'$ to be $A'^R$, called \emph{weighted adjusted adjacency matrix}.
By the asymptotic properties of $\hat{\boldsymbol{\gamma}}$ proved in Theorem~\ref{THM:ASY}, we show that $\|A'^R-P\|$ achieves the desirable spectral bound of order $O_p(\sqrt{\varphi_n})$; 
the proof is given in Section~\ref{sec:supple:spectralbd} of the Supplementary Materials.

Based on this bound, we could then apply the regularized spectral clustering algorithm on matrix $A'$ to detect the communities. We call this adjustment scheme the \emph{Spectral Clustering with Adjustment} (SCWA) algorithm, which is elaborated in Algorithm \ref{alg:scwa}.

To show the consistency of Algorithm~\ref{alg:scwa}, one natural requirement is that $A'^R$ and $P$ are close enough, which is stated rigorously in the following theorem.

\begin{thm}[Spectral bound of adjusted, regularized Poisson random matrices]
    \label{THM:SC}
    Let $A$ be the adjacency matrix generated by the undirected PCABM $(M, B, Z, \pg^0)$.
Assume Conditions \ref{cond:zbd}, \ref{cond:zpd}, \ref{cond:e1} hold.
Further assume each element of $\bar B$ is bounded from above by a constant $C_{\bar B}$
and below by a constant $c_{\bar B} $.
For any $r>1$, the following holds with probability at least $1 - 5n^{-r} - C_{\eta} \exp(-v_{\eta} n) $ (where $\eta = (p\zeta)^{-1}$, $C_{\eta}$ and $v_{\eta}$ are constants in Lemma \ref{lem:eta}): 
the regularized adjusted adjacency matrix
$A'^{R}$ in Algorithm \ref{alg:scwa}
 satisfies
\begin{equation}
\label{specbd:A'R}
\|A'^{R} - P\| \leq C\sqrt{\varphi_n}
\end{equation}
where $C$ is a constant that depends on $p, r, \xi, \zeta, C_{\bar B}$ and $c_{\bar B}$.
\end{thm}

Similarly to the proof of Theorem 3.1 in \cite{lei2015consistency}, we can prove the following Theorem~\ref{main} by combining Lemmas 5.1 and 5.3 in \cite{lei2015consistency}, and Theorem~\ref{THM:SC}. Without loss of generality, we now assume $\|\bar{B}_{\max}\|\leq1$, which makes the statement of the theorem simpler.

\begin{thm}
    \label{main}
    In addition to the conditions of Theorem~\ref{THM:SC}, assume that $P=MBM^T$ is of rank $K$ with the smallest absolute non-zero eigenvalue at least $\varrho_n$. Let $\hat{M}$ be the output of spectral clustering using $(1+\epsilon)$ approximate $K$-means on $A'^R$ (defined in Algorithm \ref{alg:scwa}, step 3). For any constant $r>0$, there exists an absolute constant $C>0$, such that, if 
    \begin{equation}\label{eq:cond}
    (2+\epsilon)\frac{Kn\rho_n}{\varrho_n^2}<C,
    \end{equation}
    then, with probability at least $1 - 5n^{-r} - C_{\eta} \exp(-v_{\eta} n)$, there exist subsets $H_k\subset G_k$ for $k=1,\cdots,K$, and a $K\times K$ permutation matrix $J$ such that $\hat{M}_{G\cdot}J=M_{G\cdot}$, where $G=\cup_{k=1}^K(G_k\setminus H_k)$, and 
    \begin{equation}\label{eq:error}
    \sum_{k=1}^K\frac{|H_k|}{n_k}\leq C^{-1}(2+\epsilon)\frac{Kn\rho_n}{\varrho_n^2}.
    \end{equation}
\end{thm}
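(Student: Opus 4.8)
The plan is to follow the three-stage template of \cite{lei2015consistency}: turn the operator-norm control from Theorem~\ref{THM:SC} into an eigenspace perturbation bound, then feed that bound into a $K$-means misclassification count. Since Theorem~\ref{THM:SC} already supplies $\|A'-P\|\leq C\sqrt{\varphi_n}$ with probability at least $1-n^{-r}$, I would fix a realization on that event and argue deterministically from there; all probability has been discharged by Theorem~\ref{THM:SC}, so the remaining work is perturbation theory and bookkeeping.

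For the first stage I would invoke a Davis--Kahan argument (Lemma 5.1 of \cite{lei2015consistency}). Because $P=MBM^T$ has rank $K$ with smallest absolute nonzero eigenvalue at least $\xi_n$, there is a $K\times K$ orthogonal matrix $O$ with
$$\|\hat U O-U\|_F\leq \frac{2\sqrt{2K}\,\|A'-P\|}{\xi_n}\leq \frac{2\sqrt{2K}\,C\sqrt{\varphi_n}}{\xi_n},$$
the factor $\sqrt K$ coming from passing to the Frobenius norm over the $K$-dimensional invariant subspace; thus $\|\hat U O-U\|_F^2\lesssim K\varphi_n/\xi_n^2$. For the second stage I would record the geometry of the population embedding: writing $U=MX$ and using $U^TU=I_K$ forces $X^TD(\mathbf n)X=I_K$ with $\mathbf n=(n_1,\dots,n_K)$, hence $X=D(\mathbf n)^{-1/2}Q$ for an orthogonal $Q$, and the distinct rows satisfy $\|X_{k\cdot}-X_{l\cdot}\|^2=n_k^{-1}+n_l^{-1}\geq n_k^{-1}$ for $k\neq l$. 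This per-community separation is exactly the quantity that will supply the $1/n_k$ weights in \eqref{eq:error}.

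The third stage is the $K$-means count (Lemma 5.3 of \cite{lei2015consistency}). Using the rotation invariance of $K$-means I replace $\hat U$ by $\tilde U:=\hat U O$ without changing the returned membership $\hat M$; since $U=MX$ is feasible, the $(1+\epsilon)$-approximation gives $\|\hat M\hat X-\tilde U\|_F^2\leq(1+\epsilon)\|U-\tilde U\|_F^2$, and the triangle inequality then yields $\|\hat M\hat X-U\|_F^2\leq 2(2+\epsilon)\|\hat U O-U\|_F^2$. Defining $H_k$ as the nodes $i\in G_k$ whose fitted center $(\hat M\hat X)_{i\cdot}$ lies at least half the community-$k$ separation away from $U_{i\cdot}$, each such node contributes at least $(4n_k)^{-1}$ to the Frobenius error, so
$$\sum_{k}\frac{|H_k|}{n_k}\leq 4\,\|\hat M\hat X-U\|_F^2\lesssim(2+\epsilon)\frac{K\varphi_n}{\xi_n^2}=(2+\epsilon)\frac{Kn\rho_n}{\xi_n^2},$$
which is \eqref{eq:error}. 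Finally, condition~\eqref{eq:cond} makes this right-hand side small enough that $|H_k|<n_k$ for every $k$; outside $\cup_k H_k$ every fitted center is closer to its own true center than to any other, so the restricted clustering matches the truth under a single permutation matrix $J$, giving $\hat M_{G\cdot}J=M_{G\cdot}$ on $G=\cup_k(G_k\setminus H_k)$.

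The main obstacle is this third stage: the counting must simultaneously extract the per-community weighting $1/n_k$ from the non-constant row separations $n_k^{-1}+n_l^{-1}$, absorb the $(1+\epsilon)$-approximation cleanly into the constant $(2+\epsilon)$, and—most delicately—use \eqref{eq:cond} to certify that fewer than a full community's worth of nodes are misassigned so that the matching permutation $J$ is well defined. Everything else is a direct transcription of \cite{lei2015consistency} once Theorem~\ref{THM:SC} is in hand, since the adjusted matrix $A'$ plays exactly the role that $A$ plays for the vanilla SBM.
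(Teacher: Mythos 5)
Your proposal is correct and follows exactly the route the paper takes: it combines the operator-norm bound of Theorem~\ref{THM:SC} with Lemma 5.1 (Davis--Kahan) and Lemma 5.3 (the $(1+\epsilon)$-approximate $K$-means counting argument) of \cite{lei2015consistency}, which is precisely how the paper proves Theorem~\ref{main}. The constants, the $2(2+\epsilon)$ absorption of the approximation factor, the $\delta_k^2\geq n_k^{-1}$ separation supplying the $1/n_k$ weights, and the use of \eqref{eq:cond} to guarantee $|H_k|<n_k$ so that the matching permutation exists are all as in the reference.
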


Inequality~(\ref{eq:error}) provides an error bound for 
the overall relative error. 
Theorem~\ref{main} doesn't provide us with an error bound in a straightforward form since $\varrho_n$ contains $\rho_n$. The following corollary gives us a clearer view of the error bound in terms of model parameters. The condition that the maximum normalized probability equals 1 can be replaced by any constant, but we just use 1 here for simplicity, since any constant can always be absorbed into the sparsity parameter $\rho_n$. 

\begin{corollary}\label{cor:scwa}
    In addition to  the conditions of Theorem~\ref{THM:SC}, assume that $\bar{B}'s$ minimum absolute eigenvalue bounded below is by $\tau>0$ and $\max_{kl}\bar{B}(k,l)=1$. Let $\hat{M}$ be the output of spectral clustering using $(1+\epsilon)$ approximate $K$-means on $A'^R$. For any constant $r>0$, there exists an absolute constant $C$ such that if 
    $$(2+\epsilon)\frac{Kn}{n^2_{\min}\tau^2\rho_n}<C,$$
    then with probability at least $1 - 5n^{-r} - C_{\eta} \exp(-v_{\eta} n) $,
    $$
    L_2(\hat M, M)
    \leq C^{-1}(2+\epsilon)\frac{Kn}{n^2_{\min}\tau^2\rho_n}, \quad
    L_1(\hat M, M)
    \leq C^{-1}(2+\epsilon)\frac{Kn'_{\max}}{n^2_{\min}\tau^2\rho_n}.$$
\end{corollary}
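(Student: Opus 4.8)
The plan is to derive Corollary~\ref{cor:scwa} from Theorem~\ref{main} in two moves: first replace the abstract eigengap $\xi_n$ by explicit model quantities, and then convert the per-community bound~\eqref{eq:error} into the losses $L_1$ and $L_2$.

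The first and most substantive step is to lower bound $\xi_n$, the smallest absolute non-zero eigenvalue of $P=MBM^T=\rho_n M\bar{B}M^T$. I would use that $M^TM=D(\mathbf{n})$ with $\mathbf{n}=(n_1,\dots,n_K)^T$, so the non-zero eigenvalues of $M\bar{B}M^T$ coincide with those of $\bar{B}D(\mathbf{n})$, which is similar to the symmetric matrix $D(\mathbf{n})^{1/2}\bar{B}D(\mathbf{n})^{1/2}$. Applying the singular value inequality $\sigma_{\min}(XYZ)\ge\sigma_{\min}(X)\sigma_{\min}(Y)\sigma_{\min}(Z)$ together with $\sigma_{\min}(D(\mathbf{n})^{1/2})=n_{\min}^{1/2}$ and $\sigma_{\min}(\bar{B})=\tau$ gives $\xi_n\ge\tau\rho_n n_{\min}$, hence $\xi_n^2\ge\tau^2\rho_n^2 n_{\min}^2$ and $Kn\rho_n/\xi_n^2\le Kn/(\tau^2\rho_n n_{\min}^2)$. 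The normalization $\max_{kl}\bar{B}(k,l)=1$ simply fixes the scale to match the $\|\bar{B}_{\max}\|\le1$ convention of Theorem~\ref{main}. This shows the corollary's sparsity condition implies~\eqref{eq:cond}, and substituting the eigenvalue bound directly turns~\eqref{eq:error} into the stated right-hand side $C^{-1}(2+\epsilon)Kn/(n_{\min}^2\tau^2\rho_n)$.

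The $L_2$ bound is then immediate: on $G=\cup_k(G_k\setminus H_k)$ the estimated labels agree with the truth up to the permutation $J$, so the misclustered nodes of community $k$ are confined to $H_k$ and $L_2\le\max_k|H_k|/n_k\le\sum_k|H_k|/n_k$, which is exactly~\eqref{eq:error}. The delicate part, and the step I expect to be the main obstacle, is the $L_1$ bound, because the factor there is the second largest size $n'_{\max}$ rather than $n_{\max}$. From the sum bound~\eqref{eq:error} alone one can only extract $L_1\le n^{-1}\sum_k|H_k|\le (n_{\max}/n)\sum_k|H_k|/n_k$, which is too weak; indeed the worst case of the associated linear program places the entire error budget in the largest community, yielding only $n_{\max}$. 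To sharpen $n_{\max}$ to $n'_{\max}$ I would reopen the $K$-means perturbation analysis underlying Theorem~\ref{main} (Lemma~5.3 of~\cite{lei2015consistency}): the minimal separation between two distinct population centroid rows of $U$ is of order $(1/n_k+1/n_l)^{1/2}\ge (n'_{\max})^{-1/2}$, so the total number of misclustered nodes is governed by $n'_{\max}\|\hat{U}-UQ\|_F^2$ rather than by $n_{\max}$ times a per-community rate. Combining this with the eigenvector perturbation bound $\|\hat{U}-UQ\|_F^2\lesssim K\|A'-P\|^2/\xi_n^2\lesssim Kn\rho_n/\xi_n^2$ from Theorem~\ref{THM:SC} and the eigenvalue bound above yields $L_1\lesssim n'_{\max}K\rho_n/\xi_n^2\le C^{-1}(2+\epsilon)Kn'_{\max}/(n_{\min}^2\tau^2\rho_n)$, as claimed. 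Everything outside this separation argument is routine substitution.
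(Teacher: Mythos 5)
Your derivation is correct and follows essentially the route the paper intends: the paper states Corollary~\ref{cor:scwa} without an explicit proof, deferring to the analogue of Corollary~3.2 in \cite{lei2015consistency}, and that argument is exactly what you describe --- lower bounding $\xi_n\ge n_{\min}\tau\rho_n$ via the similarity of $M\bar{B}M^T$ to $D(\mathbf{n})^{1/2}\bar{B}D(\mathbf{n})^{1/2}$, reading the $L_2$ bound off \eqref{eq:error}, and recovering the $n'_{\max}$ factor in $L_1$ from the centroid separations $\min_{l\neq k}(n_k^{-1}+n_l^{-1})^{1/2}\ge (n'_{\max})^{-1/2}$ inside the $K$-means perturbation lemma rather than from the summed bound. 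You correctly isolated the only delicate point, namely that $\sum_k|H_k|/n_k$ alone yields only $n_{\max}$, and your fix via Lemma~5.3 of \cite{lei2015consistency} is the standard and correct one.
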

It is worth mentioning that Theorem \ref{THM:SC}, Theorem \ref{main}, and Corollary \ref{cor:scwa} all allow $K$ to go to infinity with $n$.

Compared to SCWA, the pseudo-likelihood EM algorithm can yield more accurate results, especially when provided with good initial labels. On the other hand, the SCWA algorithm is computationally more efficient. To combine the advantages of these two methods, we propose using the results of SCWA as the initial estimate for the pseudo-likelihood EM (PCABM.PL as described in Algorithm \ref{alg:scwapcabm}). We will conduct extensive simulation studies in Section~\ref{sec:simu} to evaluate the performance of both PCABM.SCWA and PCABM.PL.

\begin{algorithm}[H]
    \caption{PCABM.PL}\label{alg:scwapcabm}
        \SetKwInOut{Input}{Input}
        \SetKwInOut{Output}{Output}
        \Input{Adjacency matrix $A$; pairwise covariates $Z$; initial community assignment $\mathbf{e}$; number of communities $K$; approximation parameter $\epsilon$; iteration number $T$.}
        \Output{
        Community estimate $\hat{\mathbf{c}}$.}
        Use Algorithm~\ref{alg:scwa} to get 
        an initial community estimate $\hat\be$.
        \\
        Use Algorithm~\ref{alg:pcabm} with initial community estimate $\hat\be$ to get the community estimate $\hat{\mathbf{c}}$.
\end{algorithm}

\section{Model Selection}
\label{sec:model_select}
So far, we have been treating the number of communities $K$ as given. In practice, the true value of $K$ may be unknown to us. In that case, we would be interested in estimating $K$.
To provide a systematic approach, we propose adapting the edge-sampling cross-validation (ECV) method \citep{li2020network} to the PCABM. The main idea of the ECV procedure can be summarized as follows: in each iteration, we randomly sample a certain proportion of node pairs in the network, and predict the remaining node pairs under specific models based on matrix completion on the adjacency matrix containing the true edge information of the selected node pairs. After all iterations, we compare the average prediction performance or hold-out losses under different models and choose the best model accordingly. Algorithm \ref{ecv_k_pcabm} presents a detailed process of applying this idea to estimate $K$ in the PCABM. The notation $P_\Omega A$ represents the matrix that retains all elements of $A$ in the index set $\Omega$ while setting other elements to 0.

In step 5 of Algorithm \ref{ecv_k_pcabm}, $\hat A'_K$ denotes the rank-$K$ matrix completion from $P_\Omega A'$.
As suggested in \cite{li2020network}, we use the SVD truncation approach to obtain $\hat A'_K$.
In the SVD of $P_\Omega A' = UDV^\top$, we keep the $K$ largest elements of diagonal $D$ and set $\hat A'_K = \frac1p UD_K V^\top$. This simple matrix completion method efficiently serves our model selection goal while remaining computationally inexpensive.

For the loss evaluated in step 7 of Algorithm \ref{ecv_k_pcabm}, there are two options: the scaled negative log-likelihood (snll) $\sum_{(i,j)\in \Omega^c} \left[\hat B_{\hat e_i \hat e_j}  - A_{ij}\exp(-\bz_{ij}^\top \hat\pg) \log \hat B_{\hat e_i \hat e_j}\right]$ and the scaled $L_2$ loss $\sum_{(i,j)\in \Omega^c} \left[ \hat B_{\hat e_i \hat e_j}-A_{ij}\exp(-\bz_{ij}^\top \hat\pg)   \right]^2 $. We scale the loss functions by the covariate effect since the cross-validation is based on the block structure.

We present a theorem establishing the consistency of selecting $K$ using the proposed ECV Algorithm.
\begin{thm}[Consistency of Algorithm \ref{ecv_k_pcabm} under PCABM] \label{THM:ecv_consistency}
Let $A$ be the adjacency matrix generated by the undirected PCABM $(M, B, Z, \pg^0)$.
Assume Conditions \ref{cond:zbd}, \ref{cond:zpd} hold, and  
each element of $\bar B$ is bounded above by a constant $C_{\bar B}$, i.e. $\Bm\leq C_{\bar B}$.
The training proportion $p\in(0,1)$ is a constant.
The number of communities $K$ is fixed and to be estimated.
Further assume $\varphi_n/ \log n \to \infty$. Let $\hat K$ be the selected number of communities by using Algorithm \ref{ecv_k_pcabm} with the scaled $L_2$ loss. Then we have
$\Pr(\hat K < K ) \to 0$.

If we assume $\varphi_n / \sqrt{n} \to \infty $ and additionally assume all entries of $\bar B$ are bounded below by a constant $c_{\bar B} $, then the same result also holds for the scaled negative log-likelihood loss.
\end{thm}

\begin{algorithm}[h]
        \SetKwInOut{Input}{Input}
        \SetKwInOut{Output}{Output}
  \caption{ECV for selecting $K$ in PCABM}
  \label{ecv_k_pcabm}
  \Input{Adjacency matrix $A$, covariates $Z$,  the maximum number of communities to consider $K_{\max}$, training proportion $p$, number of replications $N_{rep}$.}
  \Output{Estimated number of communities $\hat K$.}
  Calculate MLE $\hat\pg$ with $A$, $Z$ with $\be$ being all 1 vector.

  \For{$m=1$ \KwTo $N_{rep} $}
  {
  Randomly choose a subset of node pairs $\Omega$: selecting each pair $(i, j), i < j$ independently
  with probability $p$, and adding $(j, i)$ if $(i, j)$ is selected.\\
  \For{$K=1$ \KwTo $K_{\max}$}
  {
  Apply matrix completion to $P_{\Omega} A'$ with rank constraint $K$ to obtain $\hat A'_K$, where $A'$ denotes the adjusted adjacency matrix $A_{ij}' = A_{ij} / \exp(\bz_{ij}^\top\hat\pg) $.\\
  Run spectral clustering on $\hat A'_K$ to obtain the estimated membership vector $\hat\be^{(m)}_K$.\\
  Estimate the probability matrix $\hat B^{(m)}_K$ with $ \hat B_{kl}(\hat\be,\hat\pg) = O_{kl}^{(\Omega)}(\hat\be)\slash E_{kl}^{(\Omega)}(\hat\be,\hat\pg) $, and
evaluate the corresponding losses $L^{(m)}_K$, by applying the loss function L with the 
 estimated parameters to $A_{ij} , (i, j) \in \Omega^c $.
  }}
  Let $L_K = \sum_{m=1}^{N_{rep}} L_K^{(m)} / N_{rep} $. Return $\hat K = \arg\min_{\{K=1,...,K_{\max}\}} L_K$ as the best model.
\end{algorithm} 

In addition to choosing the number of communities, another model selection problem of interest is distinguishing between covariate-adjusted and covariate-confounding models. 
As introduced in Section \ref{sec:intro}, in the covariate-adjusted model, covariates and class labels are independent, while in the covariate-confounding model, the distribution of covariates is governed by the community labels.

Given prior knowledge that a covariate is correlated with the block effect, one can extract cluster information from both the covariate and the network to improve the estimation accuracy of community labels. However, without that prior knowledge, fitting a confounding covariate in a covariate-adjusted model can undermine clustering performance. This phenomenon is illustrated in a simulation example provided in Section \ref{sec:ecv_z} of the Supplementary Material. A heuristic explanation is that the incorrect model mistakenly identifies the true underlying block effect as the covariate effect of the confounding covariate.

Motivated by the model-selection nature of the problem, we propose a covariate selection procedure based on the ECV framework. We present the detailed procedure in Algorithm \ref{ecv_z_pcabm} in Section \ref{sec:ecv_z} of the Supplementary Material and demonstrate that the proposed algorithm almost perfectly screens out false covariates and selects the correct model under various simulation study settings.

We note that while the proposed feature selection algorithm represents an interesting initial attempt to address confounding covariates, it is still based on the PCABM, which only models covariate adjusting. It would be desirable to propose a comprehensive covariate block model and corresponding community detection methods that could integrate both covariate-adjusted and covariate-confounding models.

\section{Simulations}\label{sec:simu}

For all simulations, we consider $K$ communities with prior probabilities $\pi_i=1/K, i=1,...,K$. In addition, we fix $\bar{B}$ to have all diagonal elements equaling 2 and off-diagonal elements 1; and we fix $K=2$ except in subsection \ref{sec:simu:chooseK} where $K$ varies.
We generate data by applying the following procedure:

S1. Determine parameters $\rho_n$ and $\mathbf{\gamma}^0$. Generate $\mathbf{z}_{ij}$ from certain distributions.

S2.  Generate adjacency matrix $A=[A_{ij}]$ from the Poisson distribution with the parameters estimated using PCABM with parameters in S1.

\subsection{$\boldsymbol{\gamma}$ Estimation\label{subsec::simu-gamma}}

For PCABM, estimating $\boldsymbol{\gamma}$ would be the first step, so we check the consistency and asymptotic normality of $\hat{\boldsymbol{\gamma}}$ claimed in our theory section. 

The pairwise covariate vector $\mathbf{z}_{ij}$ has five variables, generated independently from  $\text{Bernoulli}(0.1)$, $\text{Poisson}(0.1)$, $\text{Uniform}[0,1]$, $\text{Exponential}(0.3)$, and $N(0,0.3)$, respectively. The parameters for each distribution are chosen to make the variances of covariates similar.

We ran 100 simulations respectively for $n=100,300,500$. The parameters are set as $\rho_n=2\log n/n$, $\boldsymbol{\gamma}^0=(0.4,0.8,1.2,1.6,2)^T$. We obtained $\hat{\boldsymbol{\gamma}}$ by using BFGS to optimize the likelihood function under the initial community assignment $\be_0=\mathbf 1$ . 
We present the mean and standard deviation of $\hat{\boldsymbol{\gamma}}$ in Table~\ref{tab:gamma_K=1}. It is clear that $\hat{\boldsymbol{\gamma}}$ is very close to $\boldsymbol{\gamma}^0$ even for a small network. The shrinkage of standard deviation implies the consistency of $\hat{\boldsymbol{\gamma}}$. 
We also repeated the experiment by initializing with random community assignments, which leads to very similar results (Table \ref{tab:gamma} of Supplementary Materials). This validates the observation that estimating $\bm{\gamma}$ and communities is decoupled.
\begin{figure}[H]
    \centering
    \includegraphics[height=3cm]{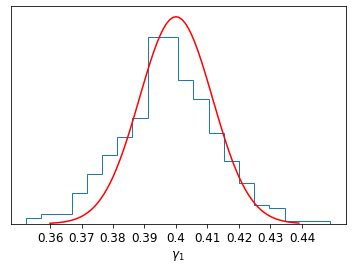}%
    \quad
    \includegraphics[height=3cm]{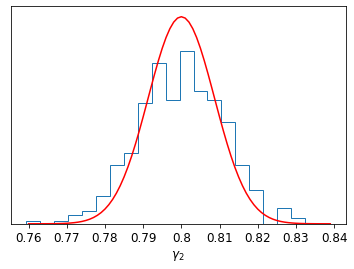}
    \quad
    \includegraphics[height=3cm]{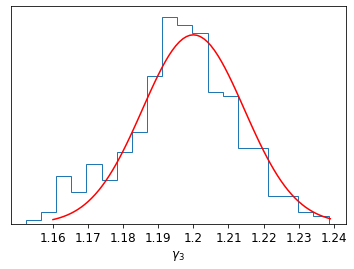}
    \caption{Simulation results for $\hat{\boldsymbol{\gamma}}$ compared with theoretical values. }
    \label{fig:gamma}
\end{figure}

By taking a closer look at the network of size $n=500$, we compare the distribution of $\hat{\boldsymbol{\gamma}}$ with the theoretical asymptotic normal distribution derived in Theorem~\ref{THM:ASY}. We  show the histogram for the first three coefficients in Figure~\ref{fig:gamma}. We can see that the empirical distribution matches well with the theoretical counterpart.
\begin{table}[!h]
    \centering
    \caption{Simulated results of $\hat{\boldsymbol{\gamma}}$ over 100 repetitions, displayed as mean (standard deviation).}
    \begin{tabular}{c|ccccc}
        \hline
        $n$&$\boldsymbol{\gamma}^0_1=0.4$&$\boldsymbol{\gamma}^0_2=0.8$&$\boldsymbol{\gamma}^0_3=1.2$&$\boldsymbol{\gamma}^0_4=1.6$&$\boldsymbol{\gamma}^0_5=2$\\ \hline
        {100}&0.393(0.0471)&0.796(0.0345)&1.206(0.0560)&1.596(0.0410)&2.005(0.0454)\\
        \hline
        {300}&0.399(0.0198)&0.801(0.0160)&1.198(0.0256)&1.603(0.0180)&2.003(0.0213)\\
        \hline
        {500}&0.399(0.0147)&0.800(0.0117)&1.197(0.0162)&1.599(0.0148)&2.002(0.0155)\\
        \hline
    \end{tabular}
    \label{tab:gamma_K=1}
\end{table}


\subsection{Community Detection\label{subsec:simu-cd}}
After obtaining $\hat{\boldsymbol{\gamma}}$, we now move on to the estimation of community labels. There are three parameters that we could tune to change the property of the network: $\boldsymbol{\gamma}^0$, $\rho_n$, and $n$. To illustrate the impact of these parameters on the performance of community detection,  we vary one parameter while fixing the remaining two in each experiment. More specifically, we consider the form $\rho_n = c_{\rho}\log n/n$ and $\boldsymbol{\gamma}^0 = c_{\gamma}(0.4, 0.8, 1.2, 1.6, 2)$ in which we will vary the multipliers $c_{\rho}$ and $c_{\gamma}$. The detailed parameter settings for the three experiments are as follows. 

(a) $n\in\{200,400,600,800,1000\}$, with $c_{\rho}=5$ and $c_{\gamma}=1.2$. 

(b) $c_{\rho}\in  \{2,3,4,5,6\}$, with $n=200$ and $c_{\gamma}=1.2$.

(c) $c_{\gamma} \in \{0,0.4,0.8,1.2,1.6,2.0\}$, with $n=200$ and $c_{\rho}=5$.

The results for the three experiments are presented in panels (a), (b), and (c) in Figure~\ref{fig:pcabm}. Each setting is simulated 100 times. The error rate is reported in terms of the average Adjusted Rand Index (ARI) \citep{hubert1985comparing}, which is a measure of the similarity between two data clusterings. SBM.MLE and SBM.SC refer to the likelihood and spectral clustering methods under SBM, respectively; DCBM.MLE is the maximum likelihood method based on DCBM \citep{zhao2012consistency}; PCABM.PL and PCABM.SCWA refer to Algorithms~\ref{alg:scwapcabm} and~\ref{alg:scwa}, respectively. 

\begin{figure}[h]
    \centering
    
    \begin{subfigure}{.45\textwidth}
  \includegraphics[width=.9\linewidth]{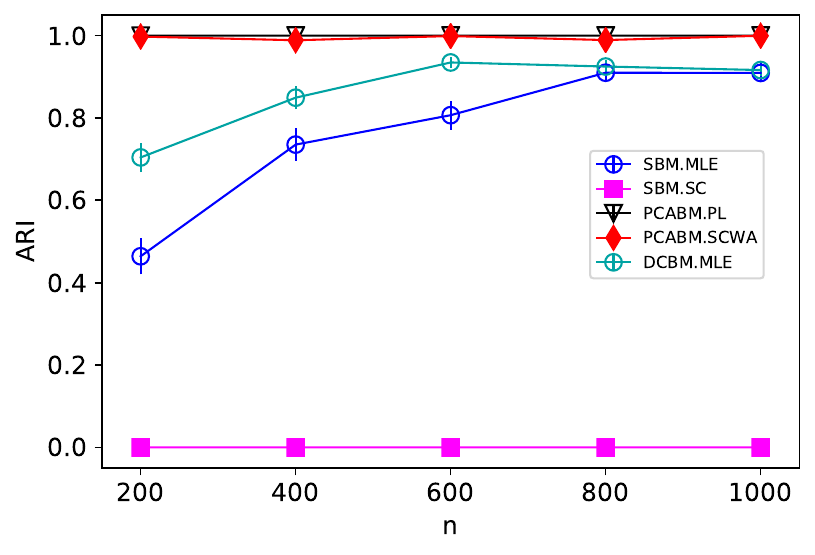}  
  \caption{Number of nodes $n$\label{sub:nodes}}
\end{subfigure}
\hspace{-2em}%
\begin{subfigure}{.45\textwidth}
  \includegraphics[width=.9\linewidth]{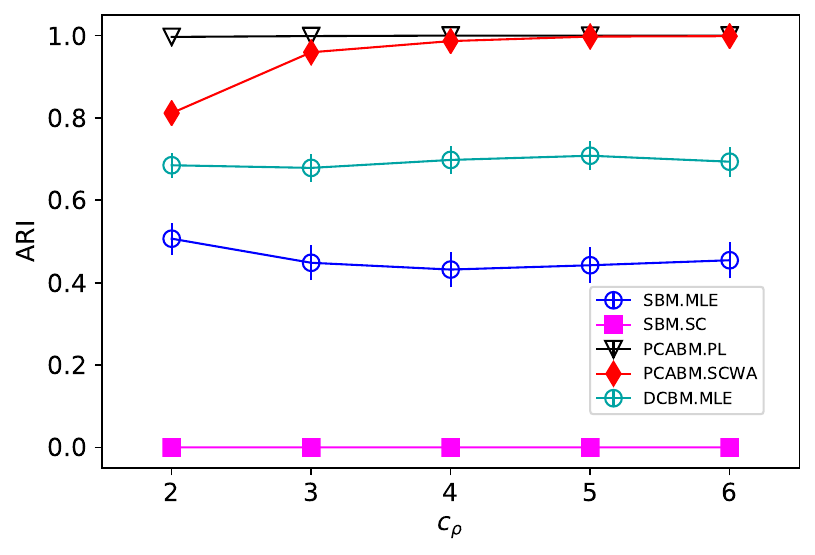}  
  \caption{Multiplier of sparsity $c_\rho$}
\end{subfigure}

\begin{subfigure}{.45\textwidth}
  \includegraphics[width=.9\linewidth]{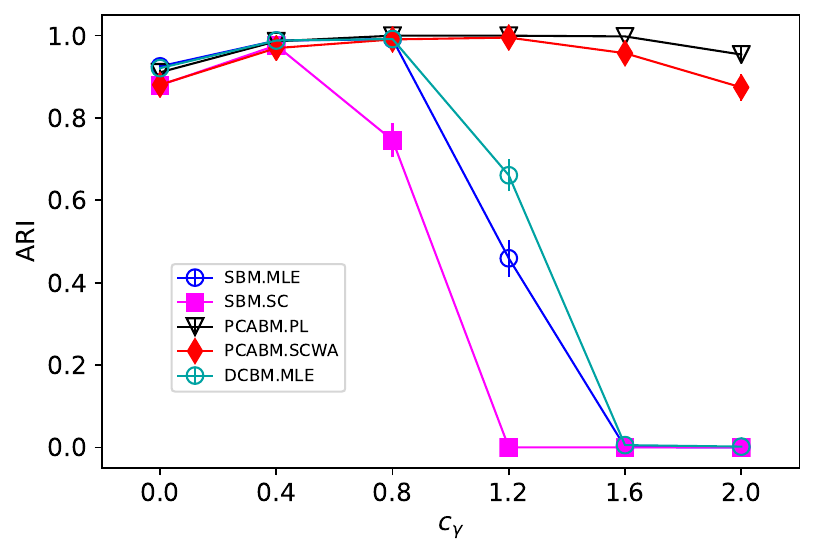}  
  \caption{Multiplier of coefficient $c_{\boldsymbol{\gamma}}$}
\end{subfigure}
\hspace{-2em}%
\begin{subfigure}{.45\textwidth}
  \includegraphics[width=.9\linewidth]{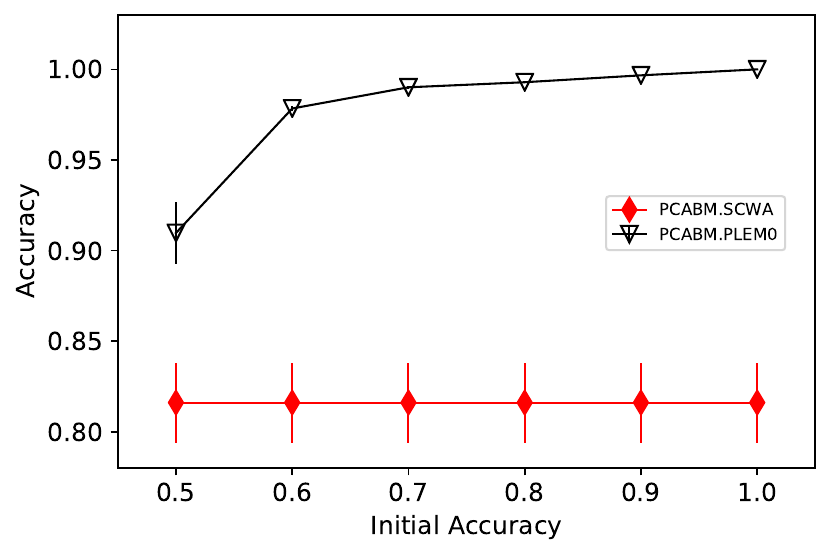}  
  \caption{Initial assignment accuracy}
  \label{fig:sub-second}
\end{subfigure}
    \caption{Simulation results under PCABM for different parameter settings}
    \label{fig:pcabm}
\end{figure}

As the number of nodes increases, it is evident from the first panel \ref{sub:nodes} in Figure \ref{fig:pcabm} that both PCABM-based algorithms perform exceptionally well, with PCABM.PL achieving nearly perfect community detection performance across all values of $n$. Spectral clustering under SBM results in nearly random guesses. DCBM and MLE under SBM perform better when $n$ is large but still underperform PCABM-based algorithms. As the density of the network increases, the performance does not change significantly within this range. When the scale of $\boldsymbol{\gamma}^0$ is changed, both PCABM algorithms continue to yield good results. As we know, when $\boldsymbol{\gamma}^0=\mathbf{0}$, our model reduces to SBM, so it is not surprising that SBM.MLE and SBM.SC both perform well when the magnitude of $\boldsymbol{\gamma}^0$ is relatively small and fail when the magnitude increases.




\subsection{Impact of Initial Assignments Accuracy} 
\label{subsec::simu_initial}
The performance of the pseudo-likelihood EM (Algorithm \ref{alg:pcabm}) depends on the initial assignments. To further understand its influence in our model, we simulate initial community assignments with different accuracy rates and examine how they affect prediction accuracy. The parameters are fixed to be $n=200, c_\rho=2, c_\gamma=1.5$. We change the accuracy of initial assignments from 0.5 to 1. To make the results easier to interpret, we use accuracy rather than ARI to evaluate performance. Note that SCWA does not use class assignment initialization, and we plot its accuracy as a reference flat line in panel (d) of Figure~\ref{fig:pcabm}. On one hand, even with completely random initial assignments, the PLEM algorithm yields satisfactory clustering accuracy. On the other hand, as the accuracy of initial assignments increases, the prediction accuracy of the PLEM method also improves. If we use the prediction of SCWA, with an accuracy of around 0.82, as the initial assignments for the PLEM method, we can enhance the prediction accuracy from around 0.9 (random initial) to almost 1. Therefore, it is preferable to use the output of SCWA as initial assignments for the PLEM method.


\subsection{DCBM\label{subsec::DCBM}}
\begin{figure}[h]
    \centering
    
    \begin{subfigure}{.45\textwidth}
  \includegraphics[width=.9\linewidth]{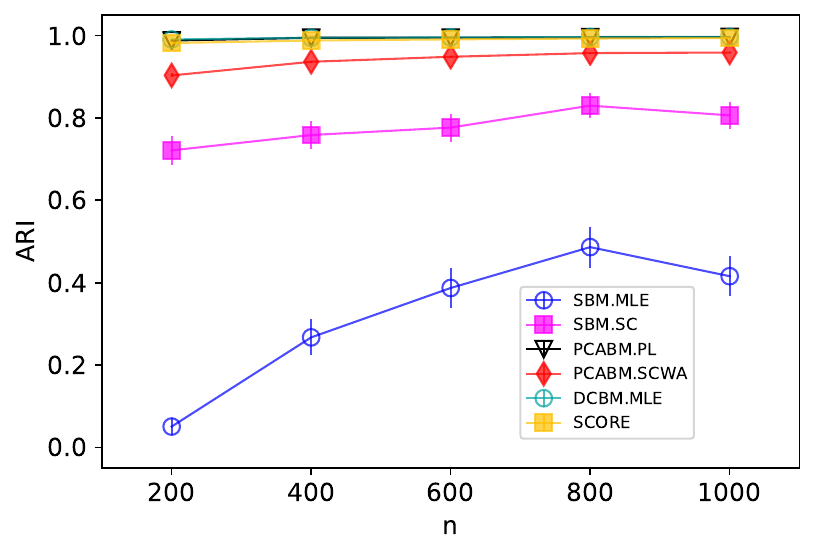}  
  \caption{Differrent $n$ with $c_\rho=3$}
\end{subfigure}
\begin{subfigure}{.45\textwidth}
  \includegraphics[width=.9\linewidth]{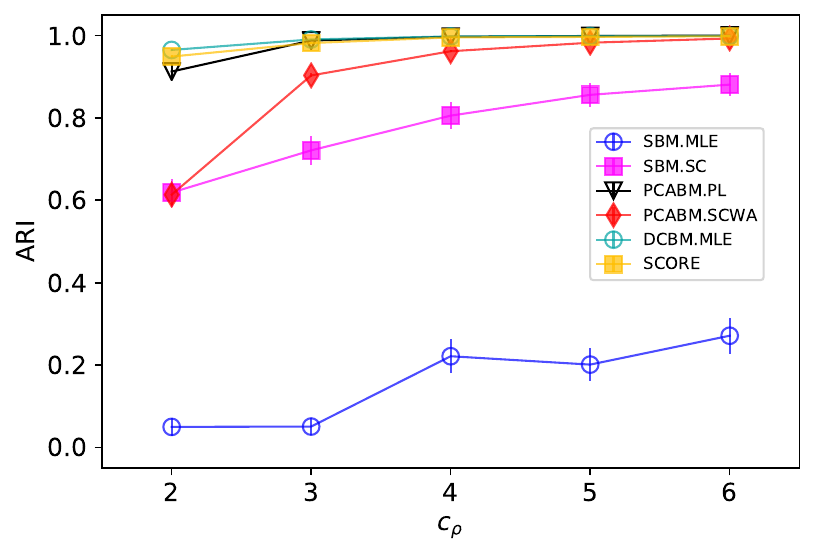}  
  \caption{Different $c_\rho$ with $n=200$}
  \label{fig:sub-second2}
\end{subfigure}

    \caption{Simulation results under DCBM for different parameter settings}
    \label{fig:deg}
\end{figure}

Considering that PCABM includes DCBM as a special case in terms of having the same profile likelihood, we are curious about the performance of Algorithms~\ref{alg:scwa} and~\ref{alg:scwapcabm} on networks generated by DCBM. The degree parameter for each node is chosen from $\{1,4\}$ with equal probability, $\bar{B} = \bigl( \begin{smallmatrix}2 & 1\\ 1 & 2\end{smallmatrix}\bigr)$, and $\rho_n = c_{\rho}\log n/n$. For covariates, we take $z_{ij}=\log d_i+\log d_j$, where $d_i$ is the degree of node $i$. As a comparison, we also implemented the likelihood method in \cite{zhao2012consistency} (DCBM.MLE) and the SCORE method in \cite{jin2015fast}. As in Section \ref{subsec:simu-cd}, we vary one parameter while fixing the remaining one in each experiment. The detailed parameter settings for the two experiments are as follows, with results presented in Figure \ref{fig:deg}.

(a) $n\in\{200,400,600,800,1000\}$, with $c_{\rho}=3$.

(b) $c_{\rho}\in \{2,3,4,5,6\}$, with $n=200$.

From the results, we observe that, except for SBM.MLE and SBM.SC, all the other methods work well, with the ARI being almost 1 when $n$ or $c_\rho$ is large. The flexibility of PCABM allows us to model any factors that may contribute to the network structure in addition to the underlying communities.

\subsection{Estimation of the Number of Communities $K$}
\label{sec:simu:chooseK}
In this subsection, we study the performance of our approach for selecting the number of communities $K$, i.e., Algorithm \ref{ecv_k_pcabm}. We set $\bar B$ to have diagonal elements of 2 and off-diagonal elements of 1. Additionally, we set $n=1000$ and $\rho_n = 5 \log n / n$. $\pg^0$ and covariates $Z$ are generated in the same way as in Section \ref{subsec::simu-gamma}. We consider cases where the true underlying $K$ is 2, 3, or 4, and let $K_{\max}=6$, i.e., selecting $\hat K$ from $\{1,2,...,6\}$. The simulation results are presented in Table \ref{ecv_k_25}.

\begin{table}[H]
\centering
\caption{Counts of ECV estimation of community number $K$ in 100 realizations under scaled negative log-likelihood (snll) and scaled $L_2$ loss.}
{\small
\begin{tabular}{c|c|c|c|c}
\hline
{Loss} & \multicolumn{2}{c|}{snll loss} & \multicolumn{2}{c}{scaled $L_2$ loss} \\
\hline
{$\hat K$} & $\Pr(\hat K = K)$ & $\Pr(\hat K \geq K)$  & $\Pr(\hat K = K)$ & $\Pr(\hat K \geq K)$\\
\hline
{$K=2$} & $100\%$ & $100\%$ & $91\%$ & $100\%$  \\
\hline
{$K=3$} & $99\%$ & $99\%$ & $91\%$ & $99\%$  \\
\hline
{$K=4$} & $95\%$ &$95\%$ & $74\%$ & $100\%$  \\
\hline
\end{tabular}
}
\label{ecv_k_25}
\end{table}
The results show that Algorithm \ref{ecv_k_pcabm} selects the correct $K$ with a high probability. Moreover, the scaled $L_2$ loss is more conservative than the snll loss, in the sense that it sometimes leads to a larger $\hat K$.

\section{Real Data Examples} \label{sec:realdata}

\subsection{Example 1: Political Blogs}\label{subsec::polblog}

The first real-world dataset we used is the network of political blogs created by \cite{adamic2005political}. The nodes represent blogs about US politics, and the edges indicate hyperlinks between them. We treated the network as undirected and focused only on the largest connected component of the network, resulting in a subnetwork with 1,222 nodes and 16,714 edges.

Since there are no other nodal covariates available in this dataset, we created one pairwise covariate by aggregating degree information. We set $z_{ij}=\log(d_i\times d_j)$, where $d_i$ is the degree for the $i$-th node. The coefficient estimate for the covariate $\hat{\boldsymbol{\gamma}}$ is 1.0005 with a 95\% confidence interval of $(0.9898,1.0111)$. Table~\ref{tab:politics} summarizes the performance comparison of PCABM with some existing results on this dataset. In addition to ARI, we also evaluated normalized mutual information (NMI) \citep{danon2005comparing}, which is a measure of mutual dependence.

We observed that the performance of our model is on par with previous methods designed specifically for DCBM, and the error rate is very close to the ideal results mentioned in \cite{jin2015fast}, which is 55/1222. This demonstrates that PCABM provides an alternative approach to DCBM by incorporating degree information into a specific pairwise covariate. As a more flexible model, PCABM also suggests that DCBM is indeed a suitable model for this dataset since the coefficient estimate is close to 1. This is consistent with our argument that PCABM includes DCBM from a profile likelihood perspective. Lastly, PCABM offers a significant improvement over the vanilla SBM, whose NMI is only 0.0001, as reported in \cite{karrer2011stochastic}.

\begin{table}[H]
    \centering
    {\small
    \begin{tabular}{c|ccccc}
        \hline
        &  DCBM.MLE & DCBM.RSC  & DCBM.CMM & SCORE
        & PCABM.PL   \\ \hline
        ARI & 0.819 &-- &--  & 0.819 &{0.813}  \\ 
        NMI    & 0.72 &-- &--        &  0.725&{0.725}  \\
        Errors       &-- &-- &62  &58&{60}\\ 
        Accuracy  &-- & 95\% & 94.9\% & 95.3\% & {95.1\%}   \\
               \hline
    \end{tabular} }
    \caption{Performance comparison on political blogs data. The performance of DCBM.MLE is taken from \cite{karrer2011stochastic,zhao2012consistency}; the performance of SCORE is from \cite{jin2015fast}; the performance of regularized spectral clustering (RSC) based on DCBM is reported in \cite{joseph2016impact}; the performance of convexified modularity maximization (CMM) for DCBM is from \cite{chen2018convexified}.}
    \label{tab:politics}
\end{table}

\subsection{Example 2: School Friendship}
In real networks, people often use specific nodal covariates as the ground ``truth" for community labels to evaluate the performance of various community detection methods. However, there could be different ``true" community assignments based on different nodal covariates (e.g., gender, job, and age). \cite{peel2017ground} mentioned that communities and covariates might capture various aspects of the network, which is in line with the idea presented in this paper. To examine whether PCABM can discover different community structures, in our second example, we treat one covariate as the indicator for the unknown ``true" community assignments while using the remaining covariates to construct the pairwise covariates in our PCABM model.

The dataset is a friendship network of school students from the National Longitudinal Study of Adolescent to Adult Health (Add Health). It contains 795 students from a high school (Grades 9-12) and its feeder middle school (Grades 7-8). The nodal covariates include grade, gender, ethnicity, and the number of friends nominated (up to 10). We focused on the largest connected component with at least one non-missing covariate and treated the network as undirected, resulting in a network with 777 nodes and 4,124 edges. For the nodes without gender, we assigned them to the female group, which is the smaller group. For those without grades, we generated a random grade within their schools.

Unlike traditional community detection methods that can only detect one underlying community structure, PCABM provides us with more flexibility to uncover different community structures by controlling different covariates. Our intuition is that social network is usually determined by multiple underlying structures and cannot be simply explained by one covariate. Sometimes one community structure seems to dominate the network, but if we adjust the covariate associated with that structure, we may discover other interesting community structures. 

In this example, we conducted two community detection experiments. In each experiment, out of the two nodal covariates, school and ethnicity, one was viewed as the proxy for the ``true" underlying community, and community detection was carried out using the pairwise covariates constructed with other covariates. For school and ethnicity, we created indicator variables to represent whether the corresponding covariate values were the same for the pair of nodes. For example, if two students come from the same school, the corresponding pairwise covariate equals 1; if they have different genders, the corresponding pairwise covariate equals 0. We also considered the number of nominated friends in all experiments and grades for predicting ethnicity and gender. For the number of nominated friends, we used $\log(n_i+1)+\log(n_j+1)$ as one pairwise covariate, where $n_i$ is the number of nominated friends by the $i$-th student. We added "+1" because some students did not nominate anyone. For grades, we used the absolute difference to form a pairwise covariate. Using random initial community labels, we computed the estimates $\hat{\boldsymbol{\gamma}}$ in each experiment. In Tables~\ref{tab:inf1} and~\ref{tab:inf2}, we show respectively the estimates when school and ethnicity are taken as the targeted community.

\begin{table}[h]
    \parbox{.45\linewidth}{\centering
        \caption{Inference results when school is targeted community.}
        {\small
        \begin{tabular}{c|cccc}
            \hline
            Covariate&Estimate&$t$ value&\text{Pr}($>|\text{t}|$)\\ \hline
            White&1.251&29.002&$<0.001$***\\ 
            Black&1.999&38.886&$<0.001$***\\
            Hispanic&0.048&0.091&0.927\\ 
            Others &0.019&0.035&0.972\\        
            Gender&0.192&5.620&$<0.001$***\\ 
            Nomination&0.438&18.584&$<0.001$***\\ \hline
        \end{tabular}
        }
        \label{tab:inf1}
        }
    \hspace{1cm}
    \parbox{.45\linewidth}{\centering
        \caption{Inference results when ethnicity is targeted community.}
        {\small
        \begin{tabular}{c|cccc}
            \hline
            Covariate&Estimate&$t$ value&\text{Pr}($>|\text{t}|$)\\ \hline
            School &1.005&13.168&$<0.001$***\\ 
            Grade&-1.100&-39.182&$<0.001$***\\        
            Gender&0.198&5.813&$<0.001$***\\         Nomination&0.498&21.679&$<0.001$***\\ \hline
        \end{tabular}
        }
        \label{tab:inf2}
        }
\end{table}

        


    
    

In both tables, the standard error is calculated using Theorem~\ref{THM:ASY}, with the theoretical values replaced by the estimated counterparts. Thus, we can calculate the $t$ value for each coefficient and perform the corresponding statistical tests. We can see that in both experiments, the coefficients for gender and the number of nominations are positive and significant in the creation of the friendship network. The significant positive coefficient of nominations shows that students with a large number of nominations have a higher chance to be friends with each other, which is intuitive. The positive coefficients of gender and school indicate that students of the same gender and school are more likely to be friends with each other, which aligns with our expectations. The negative coefficient of grade means that students with closer grades are more likely to be friends. If we examine the coefficients of different ethnic groups in Table~\ref{tab:inf1}, we find that only those corresponding to white and black are significant. This is understandable, as we observe that among 777 students, 476 are white, and 221 are black. As for school and grade, students in the same school or grade tend to be friends with each other, as expected.


The network is divided into two communities each time (we only look at white and black students in the second experiment because the sizes for other ethnicities are very small).
We apply our algorithm PCABM.PL, as well as some classic methods on SBM and DCBM, to cluster the network in both experiments.
The results in terms of ARI are shown in Table~\ref{tab:friendship}.
It can be seen that while DCBM can capture one main structure of the network, ``School," which is likely the dominating structure, our method can not only capture ``School" but also capture ``Race" when adjusting for the covariate ``School".
Note that for all methods other than ours, we would obtain only one community structure, whose performance is bound to be suboptimal for capturing different community structures. Additionally, to test the robustness of our method, in the experiment of detecting the ethnicity community, we tried using the square of the grade difference, which led to almost the same ARI.


        


\begin{table}[H]
    \centering
    \caption{ARI comparison on school friendship data.}
    {\small
    \begin{tabular}{c|ccccc}
        \hline
         &PCABM.PL& SBM.MLE   & SBM.SC &DCBM.MLE &  SCORE   \\ \hline
        School&{0.924}    & 0.048   &  0.043 &{0.909} &0.799  \\ 
                Race    &  {0.909}  &0.138 &-0.024 &0.001   &0.012       \\ 
 \hline
    \end{tabular}
    }
    \label{tab:friendship}
\end{table}

\section{Discussion}
\label{sec:discussion}
In this paper, we extend the classical stochastic block model to allow the connection rate between nodes to depend on not only the community memberships but also the pairwise covariates. We prove consistency in terms of both coefficient estimates and community label assignments for MLE under PCABM, and provide an efficient algorithm to solve an approximate MLE. Additionally, we introduce a fast spectral method, SCWA, with theoretical justification, which could serve as a good initial solution for the likelihood-based method. Furthermore, we propose cross-validation-based algorithms for estimating the number of communities and feature selection.

There are many interesting future research directions on PCABM. In our paper, we assume the entries in the adjacency matrix are non-negative integers. However, this can be relaxed to be any non-negative numbers, and we expect similar theoretical results to hold. It would also be interesting to consider highly imbalanced community sizes, where $n_{\min}/n_{\max}=o(1)$. Moreover, when we have high-dimensional pairwise covariates, adding a penalty term to conduct variable selection is worth investigating. For instance, in the estimation of $\pmb\gamma$, we can regularize \eqref{eq::gamma_hat} with an $L_1$ penalty $\hat\pg_{\lambda}(\be_0) = \arg\max_{\pg} \left\{\ell_{\be_0}(\pg) - \lambda\|\pg\|_1 \right\} $ to estimate a sparse high-dimensional $\pg$.

One model assumption in PCABM is the independence among edges conditional on observed covariates. However, the independence might be inappropriate if there are unobserved covariates. To address this, one possible extension is a degree-corrected pairwise covariate-adjusted block model, which can incorporate unobserved nodal covariates. The adjacency matrix could be modeled as, for example, $A_{ij} |\bc, Z, \pmb\theta \sim \text{Poisson}(B_{c_ic_j} \theta_i\theta_j \exp(\bz_{ij}^\top \pg))$, where $\pmb\theta$ represents degree correction parameters. From a modeling perspective, the $\pmb\theta$ term could be one way of incorporating unobserved nodal covariates or random effects. From a model fitting point of view, the first question to ask about this model is whether it is, in some sense, equivalent to PCABM by adding the covariate $\log(d_id_j)$, where $d_i$ is the degree of the $i$th node, or the degree after scaling by the covariate effect.


The code for implementing the proposed algorithms is available on GitHub.

\section*{Acknowledgements}
We thank the editor, the AE, and anonymous reviewers for their insightful comments which have greatly improved the scope and quality of the paper. \if1\blind
{This work was supported by NSF CAREER Grant DMS-2013789, NIH grant 1R21AG074205-01, NYU University Research Challenge Fund, and a grant from NYU School of Global Public Health.} \fi
\spacingset{1}

\bibliographystyle{biometrika}
\bibliography{reference}

\newpage

\setcounter{page}{1}
\setcounter{section}{1}
\renewcommand{\theequation}{A.\arabic{equation}}

\renewcommand{\thetable}{A.\arabic{table}}
\renewcommand{\thelemma}{\thesubsection.\arabic{lemma}}

\setcounter{table}{0}
\setcounter{equation}{0}

\title{\bf Supplementary Materials for ``Pairwise Covariates-Adjusted Block Model for Community Detection"}

\date{}
\maketitle


\spacingset{1.4} 

The Supplementary Material contains the proofs of the theoretical results, presents some technical lemmas, and shows additional simulation results. 

Section \ref{A} presents technical details of the theoretical results. In particular, 
Section \ref{sec:supple:proof_thm1} presents the proof of Theorem \ref{THM:ASY} (consistency and asymptotic normality of MLE of $\pg$). Section \ref{sec:supple:concen_lemmas} presents some concentration inequality lemmas which are going to be  repeatedly used in the proofs. Section \ref{sec:supple:proof_mle} presents the proof of Theorem \ref{THM:MLE_fixK} (consistency of the MLE $\hat\bc$ of community labels when $K$ is fixed); Section \ref{append_sect:KtoinftyMLE} gives a consistency result of the MLE $\hat\bc$ when $K$ grows as fast as $O(\sqrt{n})$. In Section \ref{sec::supple_plem} we derive the PLEM algorithm in detail and establish a theoretical guarantee for the algorithm in the two balanced communities case. Section \ref{sec:supple:spectralbd} presents the proof of Theorem \ref{THM:SC} (spectral bound of adjusted, regularized Poisson random matrices). Section \ref{sec:supple:proof_chooseK} presents the proof of Theorem \ref{THM:ecv_consistency} (consistency of the choose $K$ algorithm under PCABM).

In Section \ref{sec:ecv_z} we discuss the problem of confounding covariates and propose a covariate selection procedure whose utility is illustrated in simulation examples.

Section \ref{sec:supple:addi_simu_realdata} presents some additional results in simulation and real data examples. Section \ref{subsec:supple:addi_randominit_e} presents the simulation result when estimating $\gamma$ with random initial community assignment $\be$. Section \ref{subsec:supple:schoolfriend_visualize} presents a visualization for the estimated clusters in the school friendship data.
\\

\appendix
\renewcommand{\theequation}{\thesection.\arabic{equation}}

\section{Proofs}\label{A} 
\subsection{Proof of Theorem~\ref{THM:ASY}}
\label{sec:supple:proof_thm1}

\begin{proof}

In the following proof, we will use $\hat{\boldsymbol{\gamma}}$ instead of $\hat{\boldsymbol{\gamma}}(\mathbf{e}_0)$ for simplicity.
Since 
\begin{equation}
\label{lgammaTaylor}
- l_{\be_0}'(\pg^0) =l_{\be_0}'(\hat\pg) - l_{\be_0}'(\pg^0) = l_{\be_0}''(\bar\pg) (\hat\pg - \pg^0)
\end{equation}
where $\bar\pg = q \hat\pg + (1-q) \pg^0$ for some $q\in[0,1]$, we want to analyze the asymptotic behavior of $l_{\be_0}'(\pg^0)$.
 Define the empirical version of $\theta(\boldsymbol{\gamma})$, $\boldsymbol{\mu}(\boldsymbol{\gamma})$ and $\Sigma(\boldsymbol{\gamma})$ as
\begin{align*}
	\hat{\theta}(\boldsymbol{\gamma})=&\sum_{u,v\in [n],u\neq v}e^{\mathbf{z}_{uv}^T\boldsymbol{\gamma}}/(n(n-1)),\\
	\hat{\boldsymbol{\mu}}(\boldsymbol{\gamma})=&\sum_{u,v\in [n],u\neq v}\mathbf{z}_{uv}e^{\mathbf{z}_{uv}^T\boldsymbol{\gamma}}/(n(n-1)),\\
	\hat{\Sigma}(\boldsymbol{\gamma})=&\sum_{u,v\in [n],u\neq v}\mathbf{z}_{uv}\mathbf{z}_{uv}^Te^{\mathbf{z}_{uv}^T\boldsymbol{\gamma}}/(n(n-1)).
\end{align*}
For fixed $\boldsymbol{\gamma}$, by Chebyshev's inequality, we know the weak law of large numbers holds, i.e., $\hat{\theta}(\boldsymbol{\gamma})\xrightarrow{p}\theta(\boldsymbol{\gamma})$, $\hat{\boldsymbol{\mu}}(\boldsymbol{\gamma})\xrightarrow{p}\boldsymbol{\mu}(\boldsymbol{\gamma})$ and $\hat{\Sigma}(\boldsymbol{\gamma})\xrightarrow{p}\Sigma(\boldsymbol{\gamma})$. 

For the given cluster assignment $\be_0$, the log-likelihood for covariate coefficient $\pg$ is 
\begin{equation}
l_{\be_0}(\pg) = \sum_{i<j} A_{ij} \bz_{ij}^\top \pg - \frac12 \sum_{kl} O_{kl}(\be_0) \log E_{kl}(\be_0,\pg).
\end{equation}
Note that the likelihood is a concave random function of $\pg$. Thus a direct application of Theorem II.1 and Corollary II.2 of \cite{andersen1982cox} gives the consistency of the MLE $\hat\pg$.

Next we show the asymptotic normality of $\hat\pg$. The score function for $\pg$ is given by
\begin{equation}
l_{\be_0}'(\pg) = \sum_{i<j} A_{ij} \left[ \bz_{ij} - \frac{\hat\pmu(\pg)}{\hat\theta(\pg)} \right]
\end{equation}
which could be decomposed into two parts
\begin{equation}
\label{gm_twoparts}
\begin{aligned}
l_{\be_0}'(\pg)  &=: \uppercase\expandafter{\romannumeral1}(\pg) + \uppercase\expandafter{\romannumeral2}(\pg); \\
\uppercase\expandafter{\romannumeral1}(\pg)&= \sum_{i<j} A_{ij} \left[ \bz_{ij} - \frac{\pmu(\pg)}{\theta(\pg)} \right]; \\
\uppercase\expandafter{\romannumeral2}(\pg)&= \sum_{i<j} A_{ij} \left[ \frac{\pmu(\pg)}{\theta(\pg)}
 - \frac{\hat\pmu(\pg)}{\hat\theta(\pg)} \right].
\end{aligned}
\end{equation}
In the decomposition \eqref{gm_twoparts}, conditioning on $\bc$, $\uppercase\expandafter{\romannumeral1}(\pg^0)$ is a sum of independent random variables. The mean and variance of each summand in $\uppercase\expandafter{\romannumeral1}(\pg^0)$ (scaled by $\rho_n$) are given by
\begin{align*}
\mathbb E \left[\left. A_{ij} \left( \bz_{ij} - \frac{\pmu(\pg^0)}{\theta(\pg^0)} \right) / \rho_n \right| Z \right]=& \bar B_{c_ic_j} \ezg \left( \bz_{ij} - \frac{\pmu(\pg^0)}{\theta(\pg^0)} \right), \\
\mathbb E \left[ A_{ij} \left( \bz_{ij} - \frac{\pmu(\pg^0)}{\theta(\pg^0)} \right) / \rho_n  \right] =& 0,
\end{align*}
\begin{align*}
Var\left[ A_{ij} \left( \bz_{ij} - \frac{\pmu(\pg^0)}{\theta(\pg^0)} \right) / \rho_n  \right] =& \mathbb E \left\{ Var\left[\left. A_{ij} \left( \bz_{ij} - \frac{\pmu(\pg^0)}{\theta(\pg^0)} \right) / \rho_n \right| Z \right] \right\} \\
&+  Var\left\{ \mathbb E\left[\left. A_{ij} \left( \bz_{ij} - \frac{\pmu(\pg^0)}{\theta(\pg^0)} \right) / \rho_n \right| Z \right] \right\}\\
=& \mathbb E\left[ \frac{\bar B_{c_ic_j} \ezg}{\rho_n} \left( \bz_{ij} - \frac{\pmu(\pg^0)}{\theta(\pg^0)} \right)^{\otimes 2} \right] \\
&+ \mathbb E\left[ {\bar B_{c_ic_j} \ezg} \left( \bz_{ij} - \frac{\pmu(\pg^0)}{\theta(\pg^0)} \right) \right]^{\otimes 2} \\
=& \frac{\bar B_{c_ic_j} }{\rho_n} \left[ \Sigma(\pg^0) - \frac{\pmu(\pg^0)^{\otimes 2}}{\theta(\pg^0)} \ \right] (1+o(1)).
\end{align*}
Thus, by Lyapunov CLT, (noting that the third central moment of Poisson$(\lambda)$ is $\lambda$,) we have 
\begin{equation}
\frac{\uppercase\expandafter{\romannumeral1}(\pg^0)}{\sqrt {\rho_n}} \xrightarrow{d} N\left(\mathbf 0, \sum_{i<j} \bar B_{c_ic_j} \left[ \Sigma(\pg^0) - \frac{\pmu(\pg^0)^{\otimes 2}}{\theta(\pg^0)} \ \right] \right);
\end{equation}
and unconditioning on $\bc$, we obtain
\begin{equation}
\frac{\uppercase\expandafter{\romannumeral1}(\pg^0)}{\sqrt {N_n \rho_n}} \xrightarrow{d} N\left(\mathbf 0,   \Sigma_{\infty}(\pg^0)  \right)
\end{equation}
from the U-statistic type LLN $\sum_{i<j} \bar B_{c_ic_j} / N_n \xrightarrow{d} \bar B_{\lim}$.

Now we analyze part ${\uppercase\expandafter{\romannumeral2}(\pg^0)}$ in $l'_{\be_0}(\pg^0)$. $\sum_{i<j} A_{ij}$ is a sum of independent random variables conditioning on $\bc$ and by triangular array WLLN
\begin{equation}
\frac{\sum_{i<j} A_{ij}}{N_n \rho_n} \xrightarrow{d} \frac{\sum_{i<j}\bar B_{c_ic_j} \mathbb E\ezg  }{N_n};
\end{equation}
and unconditioning on $\bc$ we obtain
\begin{equation}
\label{AijWLLN}
\frac{\sum_{i<j} A_{ij}}{N_n \rho_n} \xrightarrow{d} {\bar B_{\lim} \theta(\pg^0)}.
\end{equation}
$\hat\pmu(\pg^0)$ and $\hat\theta(\pg^0)$ are both averages of independent random variables so by CLT we have
\begin{align*}
\sqrt{N_n} \begin{pmatrix} \hat\pmu(\pg^0) - \pmu(\pg^0) \\\hat\theta(\pg^0) - \theta(\pg^0)  \end{pmatrix} \xrightarrow{d} N\left(\mathbf 0, \begin{bmatrix} Var(\bz_{ij}\ezg) & Cov(\bz_{ij}\ezg, \ezg) \\ Cov(\bz_{ij}\ezg, \ezg)& Var(\bz_{ij}\ezg) \end{bmatrix} \right).
\end{align*}
Since $Z$ is bounded as is assumed in Condition \ref{cond:zbd}, by delta method we could see $ \sqrt{N_n} \left( \frac{\hat\pmu(\pg^0) }{\hat\theta(\pg^0)} - \frac{\pmu(\pg^0) }{\theta(\pg^0)} \right)$ converges to a certain normal distribution with bounded variance. Thus, ${\uppercase\expandafter{\romannumeral2}(\pg^0)} = \sum_{i<j} A_{ij} \left[ \frac{\hat\pmu(\pg^0) }{\hat\theta(\pg^0)} - \frac{\pmu(\pg^0) }{\theta(\pg^0)} \right] $ is of the order $O_p(\rho_n\sqrt{N_n}) $ while ${\uppercase\expandafter{\romannumeral1}(\pg^0)}$ is  of the order $O_p(\sqrt{N_n\rho_n})$. We could now conclude that $l'_{\be_0}(\pg^0) = {\uppercase\expandafter{\romannumeral1}(\pg^0)} (1+o_p(1)) $, and hence ${l'_{\be_0}(\pg^0)}/{\sqrt {N_n \rho_n}} \xrightarrow{d} N\left(\mathbf 0,   \Sigma_{\infty}(\pg^0)  \right)$.

A direct calculation gives us 
\begin{equation}
\label{ldoubleprime}
l''_{\be_0}(\pg) = \sum_{i<j} A_{ij} \left[ \frac{\hat\pmu(\pg)^{\otimes2}}{\hat\theta(\pg)} - \hat\Sigma(\pg)  \right].
\end{equation}
By a typical argument of uniform weak law of large numbers followed by continuous mapping theorem, we get
$$
\frac{\hat\pmu(\bar\pg)^{\otimes2}}{\hat\theta(\bar\pg)} - \hat\Sigma(\bar\pg) \xrightarrow{d} \frac{\pmu(\pg^0)^{\otimes2}}{\theta(\pg^0)} - \Sigma(\pg^0)
$$
where $\bar \pg$ is a mean value of $\hat\pg$ and $\pg^0$.
Thus
\begin{equation}
\frac{l''_{\be_0}(\bar\pg)}{N_n\rho_n} \xrightarrow{d} \Sigma_{\infty}(\pg^0).
\end{equation}
Substituting the above result and the asymptotic normality of $l'_{\be_0}(\pg^0)$ back into equation \eqref{lgammaTaylor} finishes the proof.
\end{proof}

\subsection{Some Concentration Inequalities and Notations}
\label{sec:supple:concen_lemmas}

To prepare later proofs, we introduce some concentration inequalities and additional notations in this part.

One inequality that we will apply repeatedly is an extended version of Bernstein inequality for unbounded random variables introduced in \cite{wellner2005empirical}. 
\begin{lemma}[Bernstein inequality]\label{lem:bernin}
	Suppose $X_1,\cdots,X_n$ are independent random variables with $\mathbb{E}X_i=0$ and $\mathbb{E}|X_i|^k\leq\frac{1}{2}\mathbb{E}X_i^2L^{k-2}k!$ for $k\geq2$. For $M\geq\sum_{i\leq n}\mathbb{E}X_i^2$ and $x\geq0$,
	$$\emph{Pr}(\sum_{i\leq n}X_i\geq x)\leq\exp\left(-\frac{x^{2}}{2(M+xL)}\right).$$
\end{lemma}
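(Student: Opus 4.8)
The plan is to prove this via the classical Chernoff--Cram\'er exponential moment method. For any $t>0$ lying in the range $0<t<1/L$, Markov's inequality applied to the random variable $\exp(t\sum_{i\le n}X_i)$ gives
$$\mathrm{Pr}\left(\sum_{i\le n}X_i\ge x\right)\le e^{-tx}\,\mathbb{E}\exp\left(t\sum_{i\le n}X_i\right)=e^{-tx}\prod_{i\le n}\mathbb{E}e^{tX_i},$$
where the factorization uses independence of the $X_i$. The task then reduces to bounding each individual moment generating function $\mathbb{E}e^{tX_i}$, after which I will optimize the free parameter $t$.

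For the single-variable bound I would expand the exponential and use the hypotheses. Since $\mathbb{E}X_i=0$, the linear term vanishes and
$$\mathbb{E}e^{tX_i}=1+\sum_{k\ge2}\frac{t^k}{k!}\mathbb{E}X_i^k.$$
Applying $|\mathbb{E}X_i^k|\le\mathbb{E}|X_i|^k\le\tfrac12\mathbb{E}X_i^2\,L^{k-2}k!$ cancels the factorial exactly and collapses the tail into a geometric series:
$$\sum_{k\ge2}\frac{t^k}{k!}\mathbb{E}X_i^k\le\frac{\mathbb{E}X_i^2}{2}\sum_{k\ge2}t^kL^{k-2}=\frac{t^2\,\mathbb{E}X_i^2}{2(1-tL)},$$
valid precisely because $tL<1$. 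Using $1+u\le e^u$ yields $\mathbb{E}e^{tX_i}\le\exp\!\big(t^2\mathbb{E}X_i^2/(2(1-tL))\big)$, and taking the product over $i$ together with $M\ge\sum_{i\le n}\mathbb{E}X_i^2$ gives
$$\mathrm{Pr}\left(\sum_{i\le n}X_i\ge x\right)\le\exp\left(-tx+\frac{t^2M}{2(1-tL)}\right).$$

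It remains to choose $t$. I would take the explicit value $t=x/(M+xL)$, which indeed satisfies $0<t<1/L$, so that $1-tL=M/(M+xL)$. Substituting, the two terms in the exponent combine as $-x^2/(M+xL)+\tfrac12 x^2/(M+xL)=-x^2/\big(2(M+xL)\big)$, delivering the claimed bound. The only genuinely delicate step is the moment-generating-function estimate in the middle paragraph: the Bernstein moment condition is tailored so that the $k!$ cancels and the series is summable, but one must verify the interchange of expectation and summation (justified by nonnegativity of terms after bounding by absolute moments, or by dominated convergence on the partial sums) and keep the constraint $tL<1$ in force throughout. The final optimization is then a routine one-dimensional calculation, and the stated $t$ can be verified directly rather than obtained by differentiation.
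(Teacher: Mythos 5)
Your proof is correct. Note that the paper does not actually prove Lemma \ref{lem:bernin} at all --- it is quoted from the cited reference on empirical processes and used as a black box --- so there is no in-paper argument to compare against; what you have written is the standard Chernoff--Cram\'er derivation that underlies the cited result, and all the steps check out: the moment condition makes the series for $\mathbb{E}e^{tX_i}$ collapse to $\tfrac{t^2\mathbb{E}X_i^2}{2(1-tL)}$ for $tL<1$ (with the interchange of sum and expectation justified by $\mathbb{E}e^{t|X_i|}<\infty$), and the choice $t=x/(M+xL)$ gives exactly the stated exponent. The only degenerate cases ($x=0$, or $M=0$) are trivially covered, so nothing is missing.
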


To show that all Poisson distributions satisfy the above Bernstein condition uniformly under some constant $\bar{L}$, we give the following lemma.
\begin{lemma}[Bernstein condition]\label{lem:bern}
Assume $A\sim Pois(\lambda)$, let $X=A-\lambda$, then for any $0<\lambda<1/2$,  there exists a constant $\bar{L}>0$ s.t. for any integer $k>2$, $\mathbb{E}[|X^k|]\leq\mathbb{E}[X^{2}]\bar{L}^{k-2}k!/2$.  
\end{lemma}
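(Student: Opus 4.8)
The plan is to estimate the absolute central moments $\mathbb{E}[|X|^k]$ directly from the Poisson mass function and compare them against $\mathbb{E}[X^2]=\mathrm{Var}(A)=\lambda$. The first thing I would record is that, because $0<\lambda<\tfrac12<1$ and $A$ takes nonnegative integer values, for every $m\ge 1$ we have $0<m-\lambda<m$, so $|m-\lambda|^k=(m-\lambda)^k\le m^k$, while the $m=0$ term contributes $\lambda^k$. This is the point I would stress: the absolute value, which looks like the obstacle (for odd $k$ the central moment is not a polynomial in $\lambda$), is in fact harmless here precisely because $\lambda<1$, so no cancellation has to be tracked. Writing out the expectation then gives
\[
\mathbb{E}[|X|^k]=e^{-\lambda}\lambda^k+e^{-\lambda}\sum_{m\ge 1}(m-\lambda)^k\frac{\lambda^m}{m!}\le \lambda^k+e^{-\lambda}\sum_{m\ge 1}m^k\frac{\lambda^m}{m!}.
\]

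Next I would recognize the remaining sum as the $k$-th raw moment of $A$. By the classical Touchard/Bell identity, for $A\sim\mathrm{Pois}(\lambda)$ one has $\mathbb{E}[A^k]=e^{-\lambda}\sum_{m\ge 0}m^k\lambda^m/m!=T_k(\lambda)=\sum_{j=1}^k S(k,j)\lambda^j$, where $S(k,j)$ are Stirling numbers of the second kind (the $m=0$ term vanishes for $k\ge 1$). Hence $\mathbb{E}[|X|^k]\le \lambda^k+T_k(\lambda)$. Since $\lambda<1$ gives $\lambda^j\le\lambda$ for every $j\ge 1$, I would bound $T_k(\lambda)\le \lambda\sum_{j=1}^k S(k,j)=\lambda B_k$, where $B_k$ is the $k$-th Bell number, and likewise $\lambda^k\le\lambda$. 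Combining yields
\[
\mathbb{E}[|X|^k]\le \lambda(1+B_k)\le 2\lambda B_k=2B_k\,\mathbb{E}[X^2].
\]

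The final step is to extract the factorial growth demanded by Bernstein's condition. It suffices to find a constant $\bar L$ with $2B_k\le \bar L^{\,k-2}k!/2$ for every integer $k>2$, i.e. $\bar L^{\,k-2}\ge 4B_k/k!$. The genuine content of the lemma lives here, and the key estimate is $B_k\le k!$, which I would establish by induction from the recursion $B_{k+1}=\sum_{i=0}^k\binom{k}{i}B_i$: assuming $B_i\le i!$ for $i\le k$ gives $B_{k+1}\le k!\sum_{\ell=0}^k 1/\ell!\le e\,k!\le (k+1)!$ as soon as $k\ge 2$, with the base cases $B_0=B_1=1$, $B_2=2$ checked directly. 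Consequently $4B_k/k!\le 4$, so any $\bar L\ge 4$ works, since for $k\ge 3$ we have $\bar L^{\,k-2}\ge \bar L\ge 4$; this $\bar L$ is uniform in both $k$ and $\lambda\in(0,\tfrac12)$, as required. The main difficulty is thus not the absolute value but controlling the rate of growth of the Poisson moments, i.e. the bound $B_k\le k!$ that converts the moment estimate into the factorial form needed to invoke Lemma~\ref{lem:bernin}.
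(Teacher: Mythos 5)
Your proof is correct, and it shares the paper's skeleton --- isolate the $A=0$ term, use $|m-\lambda|^k=(m-\lambda)^k\le m^k$ for $m\ge 1$ (valid because $\lambda<1$), and invoke Touchard's formula to write the raw moment $\mathbb{E}[A^k]$ as a Stirling-number polynomial in $\lambda$ --- but it diverges at the key combinatorial step. The paper keeps the weights $\lambda^{i-1}$ and controls each Stirling number of the second kind via $\tfrac{1}{2}\binom{k}{i}i^{k-i}$, followed by a chain of Stirling-type inequalities that ultimately bounds $2\mathbb{E}[|X|^k]/(\lambda k!)$ by $\bigl(\tfrac{e^2+1}{1-\lambda}\bigr)^{k-2}$; the $\lambda$-dependence of that bound is exactly why the hypothesis $\lambda<1/2$ is needed there, to fix $\bar L=2(e^2+1)$. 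You instead use $\lambda^j\le\lambda$ to collapse the polynomial to $\lambda B_k$ with $B_k$ the Bell number, and then prove $B_k\le k!$ by induction on the Bell recursion. Your route is more elementary, produces a smaller and fully explicit constant ($\bar L=4$ versus roughly $16.8$), and --- since every step only needs $\lambda<1$ --- actually establishes the conclusion uniformly on $(0,1)$ rather than just $(0,1/2)$, a mild strengthening of the lemma. The induction itself is sound: the base cases $B_0=B_1=1$, $B_2=2$ hold, and $B_{k+1}\le k!\sum_{\ell=0}^{k}1/\ell!\le e\,k!\le(k+1)!$ for $k\ge2$.
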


\begin{proof}
\begin{align*}
	&\frac{2\mathbb{E}[|A-\lambda|^k]}{\lambda k!}=\frac{2}{\lambda k!}\mathbb{E}[(A-\lambda)^k|A\geq1]\text{Pr}(A\geq1)+\frac{2\lambda^{k-1}e^{-\lambda}}{k!}\\
	\leq&\frac{2}{\lambda k!}\mathbb{E}[A^k|A\geq1]\text{Pr}(A\geq1)+e^{-\lambda}=\frac{2}{\lambda k!}\mathbb{E}[A^k]+e^{-\lambda}\\
	=&\frac{2}{k!}\sum_{i=1}^k\begin{Bmatrix}k\\i\end{Bmatrix}\lambda^{i-1}+e^{-\lambda}\leq\frac{1}{k!}\sum_{i=1}^k{k \choose i}i^{k-i}\lambda^{i-1}+e^{-\lambda}\\
	\leq&\frac{e^{k-1}}{k^k}\sum_{i=1}^k\left(\frac{ek}{i}\right)^ii^{k-i}\lambda^{i-1}+e^{-\lambda}=\sum_{i=1}^ke^{i+k-1}i^{k-2i}k^{i-k}\lambda^{i-1}+e^{-\lambda}\\
	<&\sum_{i=1}^ke^{i+k-1}e^{-i}\lambda^{i-1}+e^{-\lambda}=e^{k-1}\frac{1-\lambda^k}{1-\lambda}+e^{-\lambda}\leq\frac{e^{k-1}}{1-\lambda}+1\\
	\leq&\left(\frac{e^2+1}{1-\lambda}\right)^{k-2}.
\end{align*}
Notice that when $\lambda$ is bounded away from $1$, say $\lambda<1/2$, we can simply set $\bar{L}=2(e^2+1)$, then Bernstein condition is satisfied uniformly for all $\lambda$.
\end{proof}

We introduce some notations. Let $|\mathbf{e}-\mathbf{c}| = \sum_{i=1}^n\mathbbm{1}(e_i\neq c_i)$. 
Given a community assignment $\mathbf{e}\in[K]^n$, we define $R(\mathbf{e})\in\mathbb{R}^{K\times K}$ with its elements being $R_{ka}(\mathbf{e})=\frac{1}{n}\sum_{i=1}^n\mathbbm{1}(e_i=k, c_i=a)$,
and define $ V(\mathbf{e})\in\mathbb{R}^{K\times K}$ with their elements being
\begin{align*}
V_{ka}(\mathbf{e})=\frac{\sum_{i=1}^n\mathbbm{1}(e_i=k,c_i=a)}{\sum_{i=1}^n\mathbbm{1}(c_i=a)}=\frac{R_{ka}(\mathbf{e})}{\pi_a(\mathbf{c})}.
\end{align*}

One can view $R$ as the empirical joint distribution of $\mathbf{e}$ and $\mathbf{c}$, and $V$ as the empirical conditional distribution of $\mathbf{e}$ given $\mathbf{c}$. We can see that $V(\mathbf{e})=R(\mathbf{e})(D(\mathbf{c}))^{-1}$, where $D(\mathbf{c})=\text{diag}(\boldsymbol{\pi}(\mathbf{c}))$. Also, note that $V(\mathbf{e})^T\mathbf{1}=\mathbf{1}$, $V(\mathbf{e})\boldsymbol{\pi}(\mathbf{c})=\boldsymbol{\pi}(\mathbf{e})$ and $V(\mathbf{c})=I_K$. For the convenience of later proof, we also define $W(\mathbf{c})=D(\mathbf{c})\bar{B}D(\mathbf{c})$ and
\begin{align*}
	\hat{T}(\mathbf{e})\triangleq& R(\mathbf{e})\bar{B}R(\mathbf{e})^T=V(\mathbf{e})W(\mathbf{c})V(\mathbf{e})^T,\\
	\hat{S}(\mathbf{e})\triangleq& V(\mathbf{e})\boldsymbol{\pi}(\mathbf{c})\boldsymbol{\pi}(\mathbf{c})^TV(\mathbf{e})^T
\end{align*}

Replacing the empirical distribution $\boldsymbol{\pi}(\mathbf{c})$ by the true distribution $\boldsymbol{\pi}_0$, we define $W_0=D(\boldsymbol{\pi}_0)\bar{B}D(\boldsymbol{\pi}_0)$, where $D(\boldsymbol{\pi}_0)=\text{diag}(\boldsymbol{\pi}_0)$, and $T(\mathbf{e}), S(\mathbf{e})\in\mathbb{R}^{K\times K}$ as
\begin{align*}
T(\mathbf{e})\triangleq& V(\mathbf{e})W_0V(\mathbf{e})^T,\\
S(\mathbf{e})\triangleq& V(\mathbf{e})\boldsymbol{\pi}_0\boldsymbol{\pi}_0^TV(\mathbf{e})^T.
\end{align*}

The population version of $F\left(\frac{O}{2N_n\rho_n},\frac{E}{2N_n}\right)$ is
$$F(\theta(\boldsymbol{\gamma}^0)T(\mathbf{e}),\theta(\hat{\boldsymbol{\gamma}})S(\mathbf{e})).$$ 
To measure the discrepancy between empirical and population version of $F$, we define $X(\mathbf{e}), Y(\mathbf{e},\hat{\boldsymbol{\gamma}})\in\mathbb{R}^{K\times K}$ to be the rescaled difference between $O, E$ and their expectations
\begin{align*}
X(\mathbf{e})\triangleq&\frac{O(\mathbf{e})}{2N_n\rho_n}-\theta(\boldsymbol{\gamma}^0)\hat{T}(\mathbf{e}),\\
Y(\mathbf{e},\hat{\boldsymbol{\gamma}})\triangleq&\frac{E(\mathbf{e},\hat{\boldsymbol{\gamma}})}{2N_n}-\theta(\hat{\boldsymbol{\gamma}})\hat{S}(\mathbf{e}).
\end{align*}

Before we establish bound for $Y(\mathbf{e},\hat{\boldsymbol{\gamma}})$, we present the following lemma for $\hat{\boldsymbol{\gamma}}$. 
\begin{lemma}\label{lem:phi}
	For any constant $\phi>0$, $\exists$ positive constants $C_\phi$ and $v_\phi$ s.t., $\emph{Pr}(\|\hat{\boldsymbol{\gamma}}-\boldsymbol{\gamma}^0\|_\infty>\phi)<C_\phi\exp(-v_\phi N_n\rho_n)$.
\end{lemma}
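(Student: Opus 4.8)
The plan is to exploit the concavity of $\ell_{\mathbf{e}}$ and the quadratic growth of the log-likelihood near $\boldsymbol{\gamma}^0$ to convert the event $\{\|\hat{\boldsymbol{\gamma}}-\boldsymbol{\gamma}^0\|_\infty>\phi\}$ into a large-deviation event for the score $\ell'_{\mathbf{e}}(\boldsymbol{\gamma}^0)$. First I would record that $\ell_{\mathbf{e}}$ is concave: for any $\mathbf{a}\in\mathbb{R}^p$, Cauchy--Schwarz applied to the positive weights $e^{\mathbf{z}_{uv}^T\boldsymbol{\gamma}}$ gives $(\mathbf{a}^T\hat{\boldsymbol{\mu}}^{kl}(\boldsymbol{\gamma}))^2\le\hat{\theta}^{kl}(\boldsymbol{\gamma})\,\mathbf{a}^T\hat{\Sigma}^{kl}(\boldsymbol{\gamma})\mathbf{a}$, so the bracket in $\ell''_{\mathbf{e}}$ is negative semidefinite and $\ell''_{\mathbf{e}}(\boldsymbol{\gamma})\preceq\mathbf{0}$ for every $\boldsymbol{\gamma}$. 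Hence $\hat{\boldsymbol{\gamma}}$ is a global maximizer and, along any ray, the directional derivative of $\ell_{\mathbf{e}}$ is nonincreasing.

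\textbf{Reduction via concavity.} Since $\|v\|_\infty\le\|v\|_2$, we have $\{\|\hat{\boldsymbol{\gamma}}-\boldsymbol{\gamma}^0\|_\infty>\phi\}\subseteq\{\|\hat{\boldsymbol{\gamma}}-\boldsymbol{\gamma}^0\|_2>\phi\}$. On the latter set put $\mathbf{u}=(\hat{\boldsymbol{\gamma}}-\boldsymbol{\gamma}^0)/\|\hat{\boldsymbol{\gamma}}-\boldsymbol{\gamma}^0\|_2$. The map $t\mapsto\mathbf{u}^T\ell'_{\mathbf{e}}(\boldsymbol{\gamma}^0+t\mathbf{u})$ is nonincreasing and vanishes at $t=\|\hat{\boldsymbol{\gamma}}-\boldsymbol{\gamma}^0\|_2>\phi$, hence is nonnegative at $t=\phi$; the mean value theorem on $[0,\phi]$ then gives
$$0\le\mathbf{u}^T\ell'_{\mathbf{e}}(\boldsymbol{\gamma}^0+\phi\mathbf{u})=\mathbf{u}^T\ell'_{\mathbf{e}}(\boldsymbol{\gamma}^0)+\phi\,\mathbf{u}^T\ell''_{\mathbf{e}}(\tilde{\boldsymbol{\gamma}})\mathbf{u},\qquad\tilde{\boldsymbol{\gamma}}\in[\boldsymbol{\gamma}^0,\boldsymbol{\gamma}^0+\phi\mathbf{u}].$$
On the curvature event $\mathcal{E}=\{\inf_{\|\boldsymbol{\gamma}-\boldsymbol{\gamma}^0\|_2\le\phi}\lambda_{\min}(-\ell''_{\mathbf{e}}(\boldsymbol{\gamma}))\ge c_1 N_n\rho_n\}$ this forces $\|\ell'_{\mathbf{e}}(\boldsymbol{\gamma}^0)\|_2\ge\phi c_1 N_n\rho_n$, so
$$\mathrm{Pr}(\|\hat{\boldsymbol{\gamma}}-\boldsymbol{\gamma}^0\|_\infty>\phi)\le\mathrm{Pr}(\mathcal{E}^c)+\mathrm{Pr}\big(\|\ell'_{\mathbf{e}}(\boldsymbol{\gamma}^0)\|_2\ge\phi c_1 N_n\rho_n\big).$$
It remains to bound both terms by $C_\phi\exp(-v_\phi N_n\rho_n)$.

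\textbf{Score tail.} For the second term I would condition on $Z$. Each coordinate of $\ell'_{\mathbf{e}}(\boldsymbol{\gamma}^0)$ is $\sum_{i<j}A_{ij}w_{ij}$ with $w_{ij}=\mathbf{z}_{ij}-\hat{\boldsymbol{\mu}}^{e_ie_j}(\boldsymbol{\gamma}^0)/\hat{\theta}^{e_ie_j}(\boldsymbol{\gamma}^0)$ uniformly bounded by Condition~\ref{cond:zbd}, and the $A_{ij}$ independent Poisson satisfying the Bernstein condition of Lemma~\ref{lem:bern}. The conditional variance is of order $N_n\rho_n$ (as in the computation of $\Sigma_\ell$), so Lemma~\ref{lem:bernin} bounds the probability that the centered part exceeds order $N_n\rho_n$ by $\exp(-cN_n\rho_n)$, uniformly in $Z$. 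The conditional mean $\boldsymbol{\mu}_\ell(\mathbf{e},\boldsymbol{\gamma}^0)$ satisfies $\boldsymbol{\mu}_\ell/(N_n\rho_n)\to\boldsymbol{\mu}_\infty(\boldsymbol{\gamma}^0)=\mathbf{0}$, and being a functional of $\asymp N_n$ bounded i.i.d. covariates it concentrates at its limit with Hoeffding rate $\exp(-cN_n)\le\exp(-cN_n\rho_n)$; combining the two handles the score term.

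\textbf{Curvature --- the main obstacle.} The crux is $\mathrm{Pr}(\mathcal{E}^c)$. Here $-\ell''_{\mathbf{e}}(\boldsymbol{\gamma})/(N_n\rho_n)=\sum_{kl}\frac{O_{kl}(\mathbf{e})/(N_n\rho_n)}{2\hat{\theta}^{kl}(\boldsymbol{\gamma})^2}\big[\hat{\theta}^{kl}(\boldsymbol{\gamma})\hat{\Sigma}^{kl}(\boldsymbol{\gamma})-\hat{\boldsymbol{\mu}}^{kl}(\boldsymbol{\gamma})^{\otimes2}\big]$, whose population limit is a positive combination of the matrices $g(\boldsymbol{\gamma})=\theta(\boldsymbol{\gamma})\Sigma(\boldsymbol{\gamma})-\boldsymbol{\mu}(\boldsymbol{\gamma})^{\otimes2}$ from the remark following Condition~\ref{cond:zpd}. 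I would first show $g(\boldsymbol{\gamma})\succ\mathbf{0}$ for \emph{every} $\boldsymbol{\gamma}$: it equals $\theta(\boldsymbol{\gamma})$ times the covariance of $\mathbf{z}_{ij}$ under the tilted law proportional to $e^{\mathbf{z}_{ij}^T\boldsymbol{\gamma}}$, which is equivalent to the original law, so the non-degeneracy guaranteed at $\boldsymbol{\gamma}^0$ by Condition~\ref{cond:zpd} is preserved at all $\boldsymbol{\gamma}$; on the compact ball this yields a uniform lower eigenvalue bound $c_1>0$ for the limit. The remaining work is to transfer this to the empirical Hessian with exponential slack: $O_{kl}(\mathbf{e})\asymp N_n\rho_n$ concentrates by the Poisson Bernstein bound (Lemmas~\ref{lem:bernin} and~\ref{lem:bern}) with lower-tail probability $\exp(-cN_n\rho_n)$, while $\hat{\theta}^{kl},\hat{\boldsymbol{\mu}}^{kl},\hat{\Sigma}^{kl}$ are averages of $\asymp N_n$ bounded i.i.d. terms deviating by a fixed amount only with probability $\exp(-cN_n)$. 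Finally an $\varepsilon$-net over the ball together with the Lipschitz continuity of $\boldsymbol{\gamma}\mapsto-\ell''_{\mathbf{e}}(\boldsymbol{\gamma})/(N_n\rho_n)$ (again from boundedness of $\mathbf{z}_{ij}$) upgrades the pointwise estimates to the uniform bound defining $\mathcal{E}$. The main difficulty is exactly this uniform-over-$\boldsymbol{\gamma}$ curvature control at the exponential rate $N_n\rho_n$---ensuring the binding rate comes from the edge counts $O_{kl}$ rather than the faster $N_n$---after which collecting the score and curvature estimates yields the claim with $v_\phi$ of order $\phi^2c_1^2$.
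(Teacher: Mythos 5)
Your argument is essentially correct, but it is a genuinely different --- and far more substantive --- route than the paper's. The paper does not actually prove Lemma~\ref{lem:phi}: it introduces it with the single remark that it is ``a direct corollary of Theorem~\ref{THM:ASY}.'' That theorem is a convergence-in-distribution statement; it yields $\mathrm{Pr}(\|\hat{\boldsymbol{\gamma}}-\boldsymbol{\gamma}^0\|_\infty>\phi)\to0$ and heuristically suggests a Gaussian-type rate, but it cannot by itself deliver a non-asymptotic tail of order $\exp(-v_\phi N_n\rho_n)$, so the quantitative content has to come from somewhere else. Your route supplies exactly that: global concavity of $\ell_{\mathbf{e}}$ via weighted Cauchy--Schwarz, reduction of the deviation event to $\{\|\ell'_{\mathbf{e}}(\boldsymbol{\gamma}^0)\|_2\ge\phi c_1 N_n\rho_n\}\cup\mathcal{E}^c$, a Bernstein bound for the score conditional on $Z$ (via Lemmas~\ref{lem:bernin} and~\ref{lem:bern}) combined with Hoeffding-rate concentration of the $Z$-dependent centering at $\boldsymbol{\mu}_\infty(\boldsymbol{\gamma}^0)=\mathbf{0}$, and a uniform curvature bound obtained from the tilted-measure interpretation of $g(\boldsymbol{\gamma})$, an $\varepsilon$-net over the compact ball, and a lower tail for the edge counts. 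This is the standard M-estimation large-deviation argument, and you correctly identify that the binding rate $\exp(-cN_n\rho_n)$ comes from the Poisson edge counts while all $Z$-only averages concentrate at the faster rate $\exp(-cN_n)$. Two small details to write out: you should justify that the maximizer in \eqref{eq::gamma_hat} is attained and is a stationary point so that $\mathbf{u}^T\ell'_{\mathbf{e}}(\hat{\boldsymbol{\gamma}})=0$ (or run the same concavity argument from the boundary of the ball of radius $\phi$, which sidesteps attainment entirely), and you should note that $|s_{\mathbf{e}}(k,l)|$ double-counts unordered pairs so the empirical averages $\hat{\theta}^{kl},\hat{\boldsymbol{\mu}}^{kl},\hat{\Sigma}^{kl}$ involve $\asymp N_n$ independent summands, which is all the Hoeffding step needs. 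With those points made explicit, your proof closes a gap that the paper leaves open rather than reproducing an existing argument.
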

We omit the proof of the Lemma as it  follows easily from the proof of Theorem~\ref{THM:ASY}. Conditioned on $\|\hat{\boldsymbol{\gamma}}-\boldsymbol{\gamma}^0\|_\infty\leq\phi$, we have $|e^{\mathbf{z}_{ij}\hat{\boldsymbol{\gamma}}}-\mathbb{E}e^{\mathbf{z}_{ij}\hat{\boldsymbol{\gamma}}}|\leq\exp\{p\alpha(\phi+\|\boldsymbol{\gamma}^0\|_\infty)\}\equiv\chi$ uniformly for any $i,j\in[n]$ and $\hat{\boldsymbol{\gamma}}$. Under this condition, we establish Lemma~\ref{lem:con} using Bernstein inequality.

\begin{lemma}\label{lem:con}
\begin{align}
\emph{Pr}(\max_{\mathbf{e}}\|X(\mathbf{e})\|_\infty\geq\epsilon)&\leq2K^{n+2}\exp(-C_1\epsilon^2N_n\rho_n)\label{eq:lm1}
\end{align}
for $\epsilon<\chi\|\bar{B}\|_{\max}/\bar{L}$.
\begin{align}\label{eq:lm31}
\begin{split}
&\emph{Pr}(\max_{|\mathbf{e}-\mathbf{c}|\leq m}\|X(\mathbf{e})-X(\mathbf{c})\|_\infty\geq\epsilon)\\
\leq&2{n \choose m}K^{m+2}\exp\left(-\frac{C_3n}{m}\epsilon^2N_n\rho_n\right)
\end{split}
\end{align}
for $\epsilon<\eta m/n$, where $\eta=2\chi\|\bar{B}\|_{\max}/\bar{L}$.
\begin{align}\label{eq:lm32}
\begin{split}
&\emph{Pr}(\max_{|\mathbf{e}-\mathbf{c}|\leq m}\|X(\mathbf{e})-X(\mathbf{c})\|_\infty\geq\epsilon)\leq2{n \choose m}K^{m+2}\exp\left(-C_4\epsilon N_n\rho_n\right)
\end{split}
\end{align}
for $\epsilon\geq\eta m/n$.
\begin{align}
\emph{Pr}(\max_{\mathbf{e}}\|Y(\mathbf{e},\hat{\boldsymbol{\gamma}})\|_\infty\geq\epsilon)&\leq2K^{n+2}
\max\left\{\left(\frac{C_M}{\epsilon\sqrt{n}}\right)^p,1 \right\}
\exp(-C_2\epsilon^2N_n)\label{eq:lm2}
\end{align}
for $\epsilon<\chi \kappa_2^2$, where $C_M$ is a constant.
\begin{align}\label{eq:lm41}
\emph{Pr}(\max_{|\mathbf{e}-\mathbf{c}|\leq m}\|Y(\mathbf{e},\hat{\boldsymbol{\gamma}})-Y(\mathbf{c},\hat{\boldsymbol{\gamma}})\|_\infty\geq\epsilon)&\leq2{n \choose m}K^{m+2}
\max\left\{\left(\frac{C_M}{\epsilon\sqrt{n}}\right)^p,1 \right\}
\exp\left(-\frac{C_5n}{m}\epsilon^2N_n\right)
\end{align}
for $\epsilon<\frac{2\chi m}{n}$.
\begin{align}\label{eq:lm42}
\emph{Pr}(\max_{|\mathbf{e}-\mathbf{c}|\leq m}\|Y(\mathbf{e},\hat{\boldsymbol{\gamma}})-Y(\mathbf{c},\hat{\boldsymbol{\gamma}})\|_\infty\geq\epsilon)&\leq2{n \choose m}K^{m+2}
\max\left\{\left(\frac{C_M}{\epsilon\sqrt{n}}\right)^p,1 \right\}
\exp\left(-C_6\epsilon N_n\right)
\end{align}
for $\epsilon\geq\frac{2\chi m}{n}$.
\end{lemma}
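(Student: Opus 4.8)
The plan is to prove all six inequalities with one template: obtain a Bernstein tail bound for a single \emph{fixed} assignment via Lemma~\ref{lem:bernin}, then inflate by a union bound over the relevant family of assignments. I would first express each entry as a normalized sum of independent, centered summands. Conditioning on $Z$, the term $O_{kl}(\mathbf{e})/(2N_n\rho_n)$ equals $(2N_n\rho_n)^{-1}\sum_{i<j}A_{ij}\mathbbm{1}(e_i=k,e_j=l)$, a sum of rescaled Poisson variables; subtracting its $Z$-conditional mean isolates a Poisson-fluctuation term, and the remaining gap to $\theta(\boldsymbol{\gamma}^0)\hat{T}_{kl}(\mathbf{e})$ is a normalized sum of the bounded i.i.d. variables $e^{\mathbf{z}_{ij}^T\boldsymbol{\gamma}^0}-\theta(\boldsymbol{\gamma}^0)$. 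The Poisson part satisfies the Bernstein moment condition with the uniform constant $\bar{L}$ by Lemma~\ref{lem:bern} (using $\rho_n\to0$, so $\lambda_{ij}<1/2$ eventually), while the covariate part satisfies it trivially since $e^{\mathbf{z}_{ij}^T\boldsymbol{\gamma}^0}\in[\beta_l,\beta_u]$ under Condition~\ref{cond:zbd}. Hence Lemma~\ref{lem:bernin} applies to $X_{kl}(\mathbf{e})$ for each fixed $(\mathbf{e},k,l)$.

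The substance is then bookkeeping on the two constants in the Bernstein exponent $x^2/\{2(M+xL)\}$. For a fixed assignment, $X_{kl}(\mathbf{e})$ aggregates $O(N_n)$ summands, giving total variance $M$ of order $(N_n\rho_n)^{-1}$ and range $L$ of the same order; thus for $\epsilon<\beta_u\|\bar{B}\|_{\max}/\bar{L}$ the variance term dominates and the single-assignment bound is $\exp(-C\epsilon^2N_n\rho_n)$. Multiplying by $2$ (two sides), $K^2$ (entries), and $K^n$ (all assignments) yields the $2K^{n+2}$ prefactor of \eqref{eq:lm1}. For the increments \eqref{eq:lm31}--\eqref{eq:lm32}, observe that if $|\mathbf{e}-\mathbf{c}|\leq m$ then only the $O(mn)$ pairs touching a relabeled node contribute to $X(\mathbf{e})-X(\mathbf{c})$, so $M$ shrinks to order $(m/n)(N_n\rho_n)^{-1}$ while $L$ is unchanged; the crossover $M\asymp\epsilon L$ occurs exactly at $\epsilon\asymp\eta m/n$, which is why the sub-Gaussian rate $C(n/m)\epsilon^2N_n\rho_n$ governs $\epsilon<\eta m/n$ and the sub-exponential rate $C\epsilon N_n\rho_n$ governs $\epsilon\geq\eta m/n$. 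Here the union runs over the $\binom{n}{m}K^m$ assignments within Hamming distance $m$ of $\mathbf{c}$, producing the $\binom{n}{m}K^{m+2}$ prefactor.

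The $Y$ bounds follow the identical pattern once two points are settled. First, since $\hat{S}(\mathbf{e})=\boldsymbol{\pi}(\mathbf{e})\boldsymbol{\pi}(\mathbf{e})^T$, the centering $\theta(\hat{\boldsymbol{\gamma}})\hat{S}_{kl}(\mathbf{e})$ is exactly the $\mathbf{z}$-conditional mean of $E_{kl}(\mathbf{e},\hat{\boldsymbol{\gamma}})/(2N_n)$, so $Y_{kl}$ is genuinely centered. Second, because $Y$ carries no Poisson noise, its natural fluctuation scale is $N_n$ rather than $N_n\rho_n$; as $\rho_n\to0$ gives $N_n\rho_n\leq N_n$, the bounds stated at scale $N_n\rho_n$ are the weaker (hence valid) versions, which keeps all six inequalities in a common currency for the downstream consistency proof. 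On the event $\{\|\hat{\boldsymbol{\gamma}}-\boldsymbol{\gamma}^0\|_\infty\leq\phi\}$ the summands $e^{\mathbf{z}_{ij}^T\hat{\boldsymbol{\gamma}}}-\theta(\hat{\boldsymbol{\gamma}})$ are bounded by $\chi$, so the Bernstein condition again holds with range and variance controlled by $\chi$, the thresholds $\chi\kappa_2^2$ and $2\chi m/n$ marking the same variance/range crossover; the complementary event costs only $C_\phi\exp(-v_\phi N_n\rho_n)$ by Lemma~\ref{lem:phi}, an exponentially small term of the same order that is absorbed into the constants.

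I expect the genuine obstacle to be precisely this data-dependence of $\hat{\boldsymbol{\gamma}}$: because $\hat{\boldsymbol{\gamma}}$ is computed from the same covariates $\{\mathbf{z}_{ij}\}$ that appear in $E_{kl}(\mathbf{e},\hat{\boldsymbol{\gamma}})$, the summands are not literally independent, so the independence Lemma~\ref{lem:bernin} requires must be recovered rather than assumed. The clean route is to prove the bound \emph{uniformly} over $\boldsymbol{\gamma}$ in the ball $\|\boldsymbol{\gamma}-\boldsymbol{\gamma}^0\|_\infty\leq\phi$, where for each fixed $\boldsymbol{\gamma}$ the summands are truly i.i.d. and bounded by $\chi$, and then evaluate at $\boldsymbol{\gamma}=\hat{\boldsymbol{\gamma}}$; this uniformity can be supplied by covering the ball on a fine grid together with the Lipschitz dependence of $E_{kl}(\mathbf{e},\cdot)$ and the continuity of $\theta,\boldsymbol{\mu},\Sigma$. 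Everything else is routine Bernstein accounting; it is this passage from a fixed $\boldsymbol{\gamma}$ to the estimated $\hat{\boldsymbol{\gamma}}$ that needs the care.
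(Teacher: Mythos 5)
Your proposal follows essentially the same route as the paper's proof: a Bernstein bound (Lemma~\ref{lem:bernin}, with the Poisson moment condition supplied by Lemma~\ref{lem:bern}) for each fixed assignment and entry, the same decomposition of $X(\mathbf{e})$ into a Poisson-fluctuation part $X^1$ and a covariate part controlled via $Y(\mathbf{e},\boldsymbol{\gamma}^0)$, the same variance/range crossover explaining the two regimes in the increment bounds, and the same union-bound prefactors $2K^{n+2}$ and $2\binom{n}{m}K^{m+2}$. The only place you go beyond the paper is in explicitly flagging the data-dependence of $\hat{\boldsymbol{\gamma}}$ and proposing a covering argument uniform over the ball $\|\boldsymbol{\gamma}-\boldsymbol{\gamma}^0\|_\infty\le\phi$; the paper handles this point only implicitly, asserting its bound ``for any $\hat{\boldsymbol{\gamma}}$'' after conditioning on the event of Lemma~\ref{lem:phi}, so your extra care is a legitimate tightening of the same argument rather than a different approach.
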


\begin{proof}
The proofs are all given conditioned on $|e^{\mathbf{z}_{ij}\hat{\boldsymbol{\gamma}}}-\mathbb{E}e^{\mathbf{z}_{ij}\hat{\boldsymbol{\gamma}}}|\leq\chi$. By combining Lemma~\ref{lem:phi}, we could have the conclusion directly. For any fixed $\be$ and $\hat{\boldsymbol{\gamma}}$, by Bernstein inequality, when $\epsilon<\chi \kappa_2^2$,
\begin{align*}
	&\text{Pr}(|Y_{kl}(\mathbf{e},\hat{\boldsymbol{\gamma}})|\geq\epsilon)\leq2\exp\left(-\frac{\frac{1}{2}(2N_n\epsilon)^2}{|s_{\mathbf{e}}(k,l)|\chi^2+\frac{2}{3}\chi N_n\epsilon}\right)\\
	\leq&2\exp\left(-\frac{6N_n\epsilon^2}{3\kappa_2^2\chi^2+2\chi\epsilon}\right)\leq2\exp\left(-\frac{6}{5\kappa_2^2\chi^2}\epsilon^2N_n\right).
\end{align*}
Note that in the above concentration inequality we are seeing $\hat\pg$ as fixed. To transfer this result to a random $\hat\pg$, we note that $\|\hat\pg - \pg^0\|_1 \sim O_P(1/\sqrt{N_n\rho_n}) $ from Theorem \ref{THM:ASY}. Thus it suffices to take supremum over $\pg \in B(\pg^0,C_M/\sqrt{n})$ as $n\rho_n\to\infty $, where $B(x,r)$ denotes a ball with center at $x$ and radius $r$, and $C_M$ is a  constant. By applying a mid-value theorem, we have $|Y_{kl}(\be,\pg) - Y_{kl}(\be,\pg')|\leq 3\chi \|\pg - \pg'\|_1  $ for $\pg,\pg'\in B(\pg^0,C_M/\sqrt{n})$. Therefore it suffices to take the supremum over grids with distance, say, $\epsilon/12$; with that we obtain
\begin{equation}
\label{eq::Y_sup_be_gamhat}
\begin{aligned}
\text{Pr}(\max_{\be} |Y_{kl}(\mathbf{e},\hat{\boldsymbol{\gamma}})|\geq\epsilon) \leq& \text{Pr}\left(\sup_{\pg\in B(\pg^0,C_M/\sqrt{n})} \max_{\be} |Y_{kl}(\mathbf{e},\pg)|\geq\epsilon\right) \\ \leq& 2\max\left\{\left(\frac{12C_M}{\epsilon\sqrt{n}}\right)^p,1 \right\} K^{n+2} \exp(-C_2 \epsilon^2 N_n).
\end{aligned}
\end{equation}

Let $X^{(1)}(\mathbf{e})=\frac{O(\mathbf{e})-\mathbb{E}[O(\mathbf{e})|Z]}{2N_n\rho_n}$ and $X^{(2)}(\mathbf{e})=X(\mathbf{e})-X^{(1)}(\mathbf{e})$, and we establish bound for $X^{(1)}(\mathbf{e})$ and $X^{(2)}(\mathbf{e})$ respectively.
By Lemma~\ref{lem:bernin}, for any $k,l\in[K]$, let $M=\chi\|\bar{B}\|_{\max}\chi|s_{\mathbf{e}}(k,l)|\rho_n$, $L=\bar{L}$, and $x=2N_n\rho_n\epsilon$, then for $\epsilon<\chi\|\bar{B}\|_{\max}\chi/\bar{L}$,
\begin{align*}
&\text{Pr}(X^{(1)}_{kl}(\mathbf{e})\geq\epsilon)\leq\exp\left(-\frac{4N_n^2\rho_n^2\epsilon^2}{2(\chi\|\bar{B}\|_{\max}\chi|s_{\mathbf{e}}(k,l)|\rho_n+2N_n\rho_n\epsilon\bar{L})}\right)\\
	\leq&\exp\left(-\frac{N_n\rho_n\epsilon^2}{\chi\|\bar{B}\|_{\max}\chi+\epsilon\bar{L}}\right)\leq\exp\left(-\frac{\epsilon^2N_n\rho_n}{2\chi\|\bar{B}\|_{\max}\chi}\right).
\end{align*}

Notice that $|X_{kl}^{(2)}(\mathbf{e})|/\|\bar{B}\|_{\max}\leq|Y_{kl}(\mathbf{e},\boldsymbol{\gamma}^0)|$. Thus, for $\epsilon<\chi \kappa_2^2\|\bar{B}\|_{\max}$,
\begin{align*}
	\text{Pr}(|X^{(2)}_{kl}(\mathbf{e})|\geq\epsilon)\leq&\text{Pr}\left(|Y_{kl}(\mathbf{e},\boldsymbol{\gamma}^0)|\geq\frac{\epsilon}{\|\bar{B}\|_{\max}}\right)\\
	\leq&2\exp\left(-\frac{6}{5\kappa_2^2\chi^2\|\bar{B}\|_{\max}}\epsilon^2N_n\right).
\end{align*}
Thus, the bound of $X(\mathbf{e})$ will be dominated by $X^{(1)}(\mathbf{e})$, and we will ignore the second term in the bound because it is just a small order and can be absorbed into the first one.

Similar to the arguments in \cite{zhao2012consistency}, for $|\mathbf{e}-\mathbf{c}|\leq m$, we have $Var[E_{kl}(\be,\hat\pg) - E_{kl}(\bc,\hat\pg)|\bc,\hat\pg] \leq 5mn\chi^2$ and $Var[O_{kl}(\be) - O_{kl}(\bc)|\bc] \leq 4mn\chi\|\bar B\|_{\max} \rho_n$;
then it follows that for $\epsilon<\frac{2\chi m}{n}$,
\begin{align*}
	\text{Pr}(|Y_{kl}(\mathbf{e},\hat{\boldsymbol{\gamma}})-Y_{kl}(\mathbf{c},\hat{\boldsymbol{\gamma}})|\geq\epsilon)\leq&~2\exp\left(-\frac{N_n\epsilon^2/2}{5\chi^2mn/N_n+2\chi\epsilon/3}\right)\\
  \leq&~2\exp\left(-\frac{n}{22\chi^2m}\epsilon^2N_n\right).
\end{align*}
By a similar argument as in \eqref{eq::Y_sup_be_gamhat}, we bypass the randomness in $\hat\pg$ and $\be$ via taking supremums, and obtain
\begin{equation}
\label{eq:supover_e_pghat}
  \text{Pr}(\max_{|\be-\bc|\leq m} |Y_{kl}(\mathbf{e},\hat{\boldsymbol{\gamma}})-Y_{kl}(\mathbf{c},\hat{\boldsymbol{\gamma}})|\geq\epsilon)
  \leq 2 \max\left\{\left(\frac{12C_M}{\epsilon\sqrt{n}}\right)^p,1 \right\} {n\choose m} K^{m+2}  \exp\left(-\frac{C_5n}{m}\epsilon^2N_n\right).
\end{equation}
For $\epsilon\geq\frac{2\chi m}{n}$,
\begin{align*}
  \text{Pr}(|Y_{kl}(\mathbf{e},\hat{\boldsymbol{\gamma}})-Y_{kl}(\mathbf{c},\hat{\boldsymbol{\gamma}})|\geq\epsilon)\leq&~2\exp\left(-\frac{N_n\epsilon^2/2}{5\chi^2mn/N_n+2\chi\epsilon/3}\right)\\
  \leq&~2\exp\left(-\frac{1}{2\chi}\epsilon N_n\right).
\end{align*}


Also, for $\epsilon<\frac{2\chi\|\bar{B}\|_{\max}m}{n\bar{L}}$,
\begin{align*}
&\text{Pr}(|X^{(1)}_{kl}(\mathbf{e})-X^{(1)}_{kl}(\mathbf{c})|\geq\epsilon)\leq\exp\left(-\frac{N_n\rho_n\epsilon^2}{\chi\|\bar{B}\|_{\max}mn/N_n+\epsilon\bar{L}}\right)\\
	\leq&\exp\left(-\frac{n-1}{2\chi\|\bar{B}\|_{\max}m}\epsilon^2N_n\rho_n\right)\leq\exp\left(-\frac{n}{4\chi\|\bar{B}\|_{\max}m}\epsilon^2N_n\rho_n\right).
\end{align*}
For $\epsilon\geq\frac{2\chi\|\bar{B}\|_{\max}m}{n\bar{L}}$,
\begin{align*}
\text{Pr}(|X^{(1)}_{kl}(\mathbf{e})-X^{(1)}_{kl}(\mathbf{c})|\geq\epsilon)\leq&\exp\left(-\frac{N_n\rho_n\epsilon^2}{\chi\|\bar{B}\|_{\max}mn/N_n+\epsilon\bar{L}}\right)\\
	\leq&\exp\left(-\frac{1}{3\bar{L}}\epsilon N_n\rho_n\right).
\end{align*}
We omit the bound for $|X^{(2)}_{kl}(\mathbf{e})-X^{(2)}_{kl}(\mathbf{c})|$ since it's a smaller order. By similar arguments as in \eqref{eq:supover_e_pghat}, we take supremum over $|\be-\bc|\leq m$ and $\hat\pg\in B(\pg^0,C_M/\sqrt{n}) $, and arrive at the stated results in the lemma. 
\end{proof}

\subsection{Proof of Theorem~\ref{THM:MLE_fixK}}
\label{sec:supple:proof_mle}
\subsubsection{Consistency of a General Class of Criteria}
Instead of directly analyzing $\ell_{\hat{\boldsymbol{\gamma}}}(\mathbf{e})$, similar to \cite{zhao2012consistency}, we first investigate the maximizer of a general class of criteria defined as 
    \begin{align}\label{eq:cri}
        Q(\mathbf{e},\hat{\boldsymbol{\gamma}}):= F\left(\frac{O(\mathbf{e})}{2N_n\rho_n},\frac{E(\mathbf{e},\hat{\boldsymbol{\gamma}})}{2N_n}\right),
    \end{align}
    where $O(\mathbf{e})=[O_{kl}(\mathbf{e}), k,l\in[K]]$ and $E(\mathbf{e},\hat{\boldsymbol{\gamma}})=[E_{kl}(\mathbf{e},\hat{\boldsymbol{\gamma}}), k,l\in[K]]$.
Then, we show our log-likelihood function falls in this class of criteria, implying the consistency of label estimation. We say the criterion $Q$ is \emph{consistent} if the estimated labels, obtained by maximizing the criterion, $\hat{\mathbf{c}}=\arg\max_{\mathbf{e}}Q(\mathbf{e},\hat{\boldsymbol{\gamma}})$ is \emph{consistent}.

One key condition of $Q$ for implying consistent community detection is that it reaches the maximum at $\boldsymbol{c}$ under the true parameter $\boldsymbol{\gamma}^0$ in the ``population version", which is $F\left(\frac{\mathbb{E}[O(\mathbf{c})]}{2N_n\rho_n},\frac{\mathbb{E}[E(\mathbf{c},\boldsymbol{\gamma}^0)]}{2N_n}\right)$. To further demonstrate what the ``population version" is, we introduce some notations. Given a community assignment $\mathbf{e}\in[K]^n$, we define $R(\mathbf{e})\in\mathbb{R}^{K\times K}$ with its elements being $R_{ka}(\mathbf{e})=\frac{1}{n}\sum_{i=1}^n\mathbbm{1}(e_i=k, c_i=a)$. One can view $R$ as the empirical joint distribution of $\mathbf{e}$ and $\mathbf{c}$.  Next, we introduce the key condition for the function $F$ in terms of $R$ as follows.
\begin{cond}\label{cond:max}
    $F(R\bar{B}R^T,RJR^T)$ is uniquely \footnote{The uniqueness is interpreted up to a permutation of the labels.} maximized over $\mathcal{R}=\{R:R\geq0,R^T\mathbf{1}=\boldsymbol{\pi}\}$ by $R=D(\boldsymbol{\pi})$, where $J$ is the matrix of ones and $D(\boldsymbol{\pi})$ is the diagonal matrix with diagonal entries $\boldsymbol{\pi}$.
\end{cond}

Besides the common factor $\mathbb{E}[\exp(\mathbf{z}_{ij}^T\boldsymbol{\gamma})^0]$, the first term is $\bar{B}$ weighted by pairwise community proportions, the second term is the normalized pairwise count between two communities. This reduces the criteria to the form described in \cite{zhao2012consistency}, thus similar methods can be applied to show the consistency of community detection. 
In addition, we need more regularity conditions for $F$, analogous to those in \cite{zhao2012consistency}.
\begin{cond}\label{cond:lip}
Some regularity conditions hold for $F$.
\begin{enumerate}
\item $F$ is Lipschitz in its arguments and $F(cX_0,cY_0)=cF(X_0,Y_0)$ for constant $c\neq0$.
\item The directional derivatives $\frac{\partial^2F}{\partial\epsilon^2}(X_0+\epsilon(X_1-X_0),Y_0+\epsilon(Y_1-Y_0))|_{\epsilon=0+}$ are continuous in $(X_1,Y_1)$ for all $(X_0,Y_0)$ that is in a neighborhood of $(D(\boldsymbol{\pi})\bar{B}D(\boldsymbol{\pi})^T,\boldsymbol{\pi}\boldsymbol{\pi}^T)$.
\item Let $G(R,\bar{B})=F(R\bar{B}R^T,RJR^T)$. On $\mathcal{R}$, for all $\boldsymbol{\pi}$, $\bar{B}$ and some constant $C>0$, the gradient satisfies $\frac{\partial G((1-\epsilon)D(\boldsymbol{\pi})+\epsilon R,\bar{B})}{\partial \epsilon}|_{\epsilon=0+}<-C$. 
\end{enumerate}
\end{cond}

Notice that the first condition in Condition~\ref{cond:lip} ensures that we could extract the common exponential factor. Thus we can ignore that term when we consider the population maximum in Condition~\ref{cond:max}. Naturally, the consistency of $\hat{\boldsymbol{\gamma}}$ is also required to ensure that the ``sample version'' is close to the ``population version''. Now the main theorem is stated as follows.

\begin{thm} \label{thm:gene}
    Under PCABM, if Conditions \ref{cond:zbd} and \ref{cond:zpd} hold for $Z$, then the criteria function $Q$ of the form (\ref{eq:cri}), which satisfies Conditions~\ref{cond:max},~\ref{cond:lip}, is weakly consistent if $\varphi_n\to\infty$ and strongly consistent if $\varphi_n/\log n\to\infty$.
\end{thm}

\begin{proof}

We divide the proof into three steps.

\emph{Step 1} : sample and population version comparison.
	We prove $\exists\ \epsilon_n\to0$, such that 
	\begin{align}\label{eq:sample}
	{\rm Pr}\left(\max_{\mathbf{e}}\left|F\left(\frac{O(\mathbf{e})}{2N_n\rho_n},\frac{E(\mathbf{e},\hat{\boldsymbol{\gamma}})}{2N_n}\right)-F(\theta(\boldsymbol{\gamma}^0)T(\mathbf{e}),\theta(\boldsymbol{\gamma}^0)S(\mathbf{e}))\right|\leq\epsilon_n\right)\to1,
	\end{align} 
	if $\varphi_n\to\infty$ and $\hat{\boldsymbol{\gamma}}\xrightarrow{p}\boldsymbol{\gamma}^0$.
	
	Since 
	\begin{align*}
	&\left|F\left(\frac{O(\mathbf{e})}{2N_n\rho_n},\frac{E(\mathbf{e},\hat{\boldsymbol{\gamma}})}{2N_n}\right)-\theta(\boldsymbol{\gamma}^0)F(T(\mathbf{e}),S(\mathbf{e}))\right|\\
	\leq&\left|F\left(\frac{O(\mathbf{e})}{2N_n\rho_n},\frac{E(\mathbf{e},\hat{\boldsymbol{\gamma}})}{2N_n}\right)-F(\theta(\boldsymbol{\gamma}^0)\hat{T}(\mathbf{e}),\theta(\hat{\boldsymbol{\gamma}})\hat{S}(\mathbf{e}))\right|\\
	&+\left|F(\theta(\boldsymbol{\gamma}^0)\hat{T}(\mathbf{e}),\theta(\hat{\boldsymbol{\gamma}})\hat{S}(\mathbf{e}))-\theta(\boldsymbol{\gamma}^0)F(\hat{T}(\mathbf{e}),\hat{S}(\mathbf{e}))\right|\\
	&+\theta(\boldsymbol{\gamma}^0)\left|F(\hat{T}(\mathbf{e}),\hat{S}(\mathbf{e}))-F(T(\mathbf{e}),S(\mathbf{e}))\right|,
	\end{align*}
	it is sufficient to bound these three terms uniformly. By Lipschitz continuity, 
	\begin{align}
	\begin{split}
	&\left|F\left(\frac{O(\mathbf{e})}{2N_n\rho_n},\frac{E(\mathbf{e},\hat{\boldsymbol{\gamma}})}{2N_n}\right)-\theta(\boldsymbol{\gamma}^0)F\left(\hat{T}(\mathbf{e}),\hat{S}(\mathbf{e})\right)\right|\\
	\leq&M_1\|X(\mathbf{e})\|_\infty+M_2\|Y(\mathbf{e},\hat{\boldsymbol{\gamma}})\|_\infty,\label{eq:sm1}\\
	\end{split}
	\end{align}
	\begin{align}
	\begin{split}
	&\left|F(\theta(\boldsymbol{\gamma}^0)\hat{T}(\mathbf{e}),\theta(\hat{\boldsymbol{\gamma}})\hat{S}(\mathbf{e}))-\theta(\boldsymbol{\gamma}^0)F(\hat{T}(\mathbf{e}),\hat{S}(\mathbf{e}))\right|\\
	\leq&M_2|\theta(\hat{\boldsymbol{\gamma}})-\theta(\boldsymbol{\gamma}^0)| \|\hat{S}(\mathbf{e})\|_\infty.\label{eq:sm2}
	\end{split}
	\end{align}
	By (\ref{eq:lm1}) and (\ref{eq:lm2}), (\ref{eq:sm1}) converges to $0$ uniformly if $\varphi_n\to\infty$. Since $\|\hat{S}(\mathbf{e})\|_\infty$ is uniformly bounded by $1$, (\ref{eq:sm2}) also converges to $0$ uniformly. 
	\begin{align}\label{eq:sm3}
	\begin{split}
	&\left|F\left(\hat{T}(\mathbf{e}),\hat{S}(\mathbf{e})\right)-F\left(T(\mathbf{e}),S(\mathbf{e})\right)\right|\\
	\leq& M_1\|\hat{T}(\mathbf{e})-T(\mathbf{e})\|_\infty+ M_2\|\hat{S}(\mathbf{e})-S(\mathbf{e})\|_\infty.
	\end{split}	
	\end{align}
	Since $\boldsymbol{\pi}(\mathbf{c})\xrightarrow{p}\boldsymbol{\pi}_0$, (\ref{eq:sm3}) converges to $0$ uniformly. So we prove (\ref{eq:sample}).
	
\emph{Step 2} : proof of weak consistency.
	We prove that there exists $\delta_n\to0$, such that
	\begin{align}\label{eq:weak}
		{\rm Pr}\left(\max_{\{\mathbf{e}:\|V(\mathbf{e})-I_K\|_1\geq\delta_n\}}F\left(\frac{O(\mathbf{e})}{2N_n\rho_n},\frac{E(\mathbf{e},\hat{\boldsymbol{\gamma}})}{2N_n}\right)<F\left(\frac{O(\mathbf{c})}{2N_n\rho_n},\frac{E(\mathbf{c},\hat{\boldsymbol{\gamma}})}{2N_n}\right)\right)\to1.
	\end{align}
	
	By continuity property of $F$ and Condition~\ref{cond:max}, there exists $\delta_n\to0$, such that 
	\begin{align*}
	\theta(\boldsymbol{\gamma}^0)F(T(\mathbf{c}),S(\mathbf{c}))-\theta(\boldsymbol{\gamma}^0)F(T(\mathbf{e}),S(\mathbf{e}))>2\epsilon_n
	\end{align*}
	if $\|V(\mathbf{e})-I_K\|_1\geq\delta_n$, where $I_K=V(\mathbf{c})$. Thus, following (\ref{eq:sample}),	
	\begin{alignat*}{3}
	&\text{Pr}\Bigg( &&\max_{\{\mathbf{e}:\|V(\mathbf{e})-I_K\|_1\geq\delta_n\}}F\left(\frac{O(\mathbf{e})}{2N_n\rho_n},\frac{E(\mathbf{e},\hat{\boldsymbol{\gamma}})}{2N_n}\right)<F\left(\frac{O(\mathbf{c})}{2N_n\rho_n},\frac{E(\mathbf{c},\hat{\boldsymbol{\gamma}})}{2N_n}\right)\Bigg)\\
	\geq &\text{Pr}\Bigg(&&\Bigg|\max_{\mathbf{e}:\|V(\mathbf{e})-I_K\|_1\geq\delta_n}\theta(\boldsymbol{\gamma}^0)F(T(\mathbf{e}),S(\mathbf{e}))\\
	& &&-\max_{\mathbf{e}:\|V(\mathbf{e})-I_K\|_1\geq\delta_n}F\left(\frac{O(\mathbf{e})}{2N_n\rho_n},\frac{E(\mathbf{e},\hat{\boldsymbol{\gamma}})}{2N_n}\right)\Bigg|\leq\epsilon_n,\\
	& &&\Bigg|\theta(\boldsymbol{\gamma}^0)F(T(\mathbf{c}),S(\mathbf{c}))-F\left(\frac{O(\mathbf{c})}{2N_n\rho_n},\frac{E(\mathbf{c},\hat{\boldsymbol{\gamma}})}{2N_n}\right)\Bigg|\leq\epsilon_n\Bigg)\to1.
	\end{alignat*}
	
	(\ref{eq:weak}) implies $\text{Pr}(\|V(\mathbf{e})-I_K\|<\delta_n)\to1$. Since 
	\begin{align*}
	&\frac{1}{n}|\mathbf{e}-\mathbf{c}|=\frac{1}{n}\sum_{i=1}^n\mathbbm{1}(c_i\neq e_i)=\sum_k\pi_k(1-V_{kk}(\mathbf{e}))\\
	\leq&\sum_k(1-V_{kk}(\mathbf{e}))=\|V(\mathbf{e})-I_K\|_1/2,
	\end{align*}
	weak consistency follows.

\emph{Step 3} : proof of strong consistency.
	
	To prove strong consistency, we need to show
	\begin{align}\label{eq:strong}
		{\rm Pr}\left(\max_{\{\mathbf{e}:0<\|V(\mathbf{e})-I_K\|_1<\delta_n\}}F\left(\frac{O(\mathbf{e})}{2N_n\rho_n},\frac{E(\mathbf{e},\hat{\boldsymbol{\gamma}})}{2N_n}\right)<F\left(\frac{O(\mathbf{c})}{2N_n\rho_n},\frac{E(\mathbf{c},\hat{\boldsymbol{\gamma}})}{2N_n}\right)\right)\to1.
	\end{align}

	Combining (\ref{eq:weak}) and (\ref{eq:strong}), we have 
	$${\rm Pr}\left(\max_{\{\mathbf{e}:\mathbf{e}\neq\mathbf{c}\}}F\left(\frac{O(\mathbf{e})}{2N_n\rho_n},\frac{E(\mathbf{e},\hat{\boldsymbol{\gamma}})}{2N_n}\right)<F\left(\frac{O(\mathbf{c})}{2N_n\rho_n},\frac{E(\mathbf{c},\hat{\boldsymbol{\gamma}})}{2N_n}\right)\right)\to1,$$
	which implies strong consistency.

	By Lipschitz continuity and the continuity of derivative of $F$ w.r.t. $V(\mathbf{e})$ in the neighborhood of $I_K$, we have
	\begin{align}\label{eq:st1}
	\begin{split}
	&F\left(\frac{O(\mathbf{e})}{2N_n\rho_n},\frac{E(\mathbf{e},\hat{\boldsymbol{\gamma}})}{2N_n}\right)-F\left(\frac{O(\mathbf{c})}{2N_n\rho_n},\frac{E(\mathbf{c},\hat{\boldsymbol{\gamma}})}{2N_n}\right)\\
	=&\theta(\boldsymbol{\gamma}^0)F(\hat{T}(\mathbf{e}),\hat{S}(\mathbf{e}))-\theta(\boldsymbol{\gamma}^0)F(\hat{T}(\mathbf{c}),\hat{S}(\mathbf{c}))+\Delta(\mathbf{e},\mathbf{c}),
	\end{split}
	\end{align}
	where $|\Delta(\mathbf{e},\mathbf{c})|\leq M_3(\|X(\mathbf{e})-X(\mathbf{c})\|_\infty)+M_4\|Y(\mathbf{e},\hat{\boldsymbol{\gamma}})-Y(\mathbf{c},\hat{\boldsymbol{\gamma}})\|_\infty)$, and 
	$$F(T(\mathbf{e}),S(\mathbf{e}))-F(T(\mathbf{c}),S(\mathbf{c}))\leq-\bar{C}\|V(\mathbf{e})-I_K\|_1+o(\|V(\mathbf{e})-I_K\|_1).$$
	
	Since the derivative of $F$ is continuous w.r.t. $V(\mathbf{e})$ in the neighborhood of $I_K$, there exists a $\delta'$ such that, 
	\begin{align}\label{eq:st2}
	F(\hat{T}(\mathbf{e}),\hat{S}(\mathbf{e}))-F(\hat{T}(\mathbf{c}),\hat{S}(\mathbf{c}))\leq-(C'/2)\|V(\mathbf{e})-I_K\|_1+o(\|V(\mathbf{e})-I_K\|_1)
	\end{align}
	holds when $\|\boldsymbol{\pi}(\mathbf{c})-\boldsymbol{\pi}_0\|_\infty\leq\delta'$. Since $\boldsymbol{\pi}(\mathbf{c})\to\boldsymbol{\pi}_0$, (\ref{eq:st2}) holds with probability approaching $1$. Combining (\ref{eq:st1}) and (\ref{eq:st2}), it is easy to see strong consistency follows if we can show
	$$\text{Pr}(\max_{\{\mathbf{e}\neq\mathbf{c}\}}|\Delta(\mathbf{e},\mathbf{c})|\leq C'\|V(\mathbf{e})-I_K\|_1/4)\to1.$$
	Note $\frac{1}{n}|\mathbf{e}-\mathbf{c}|\leq\frac{1}{2}\|V(\mathbf{e})-I_K\|_1$. So for each $m\geq1$,
	\begin{align}\label{eq:st3}
	\begin{split}
		&\text{Pr}\left(\max_{|\mathbf{e}-\mathbf{c}|=m}|\Delta(\mathbf{e},\mathbf{c})|>C'\|V(\mathbf{e}-I_K)\|_1/4\right)\\
		\leq&\text{Pr}\left(\max_{|\mathbf{e}-\mathbf{c}|\leq m}\|X(\mathbf{e})-X(\mathbf{c})\|_\infty>\frac{C'm}{4M_3n}\right)(\equiv I_1)\\
		&+\text{Pr}\left(\max_{|\mathbf{e}-\mathbf{c}|\leq m}\|Y(\mathbf{e},\hat{\boldsymbol{\gamma}})-Y(\mathbf{c},\hat{\boldsymbol{\gamma}})\|_\infty>\frac{C'm}{4M_4n}\right)(\equiv I_2).
	\end{split}
	\end{align}

	Let $\eta_1=C'/4M_3$, if $\eta_1<\eta$, by (\ref{eq:lm31}),
	\begin{align*}
		I_1\leq&2K^{m+2}n^m\exp(-\eta_1^2\frac{C_3m}{n}N_n\rho_n)=2K^2[K\exp(\log n-\eta_1^2C_3N_n\rho_n/n)]^m.
	\end{align*}
	If $\eta_1>\eta$, by (\ref{eq:lm32}),
	\begin{align*}
		I_1\leq&2K^{m+2}n^m\exp(-\eta_1\frac{C_4m}{n}N_n\rho_n)=2K^2[K\exp(\log n-\eta_1C_4N_n\rho_n/n)]^m.
	\end{align*}
	
	Similar arguments hold for $I_2$ by using (\ref{eq:lm41}) and (\ref{eq:lm42}). In all cases, since $\varphi_n/\log n\to\infty$, 
	\begin{align*}
	&\text{Pr}(\max_{\{\mathbf{e}\neq\mathbf{c}\}}|\Delta(\mathbf{e},\mathbf{c})|>C'\|V(\mathbf{e})-I_K\|_1/4)\\
	=&\sum_{m=1}^{\infty}\text{Pr}(\max_{|\mathbf{e}-\mathbf{c}|=m}|\Delta(\mathbf{e},\mathbf{c})|>C'\|V(\mathbf{e})-I_K\|_1/4)\to0.
	\end{align*}
	as $n\to\infty$. The proof is completed.

\end{proof}

\subsubsection{Proof of Theorem~\ref{THM:MLE_fixK}}\label{D}
By Theorem \ref{thm:gene}, it suffices to show that the log-likelihood satisfies the above conditions~\ref{cond:max} and \ref{cond:lip}.
By scaling $\ell_{\hat{\boldsymbol{\gamma}}}(\mathbf{e})$, we have
\begin{align*}
&\frac{1}{N_n}\ell_{\hat{\boldsymbol{\gamma}}}(\mathbf{e})=\rho_n F\left(\frac{O}{2N_n\rho_n},\frac{E}{2N_n}\right)+(\rho_n\log\rho_n)\sum_{kl}\frac{O_{kl}(\mathbf{e})}{2N_n\rho_n}+O(n^{-1}),
\end{align*}
where $F(X,Y)=\sum_{kl}X_{kl}\log\left(\frac{X_{kl}}{Y_{kl}}\right)$, $X,Y\in\mathbb{R}^{K\times K}$. Note that $F$ is closely related to the likelihood criterion used in \cite{zhao2012consistency}. In our case, 
\begin{align*}
&F(\theta(\boldsymbol{\gamma}^0)R\bar{B}R^T,\theta(\boldsymbol{\gamma}^0)\boldsymbol{\pi}\boldsymbol{\pi}^T)-F(\theta(\boldsymbol{\gamma}^0)R\bar{B}R^T,\theta(\hat{\boldsymbol{\gamma}})\boldsymbol{\pi}\boldsymbol{\pi}^T)\\
=&\theta(\boldsymbol{\gamma}^0)\log\frac{\theta(\boldsymbol{\gamma}^0)}{\theta(\hat{\boldsymbol{\gamma}})}(\mathbf{1}^TR\bar{B}R^T\mathbf{1}),
\end{align*}
which is basically the population degree up to a constant. This fact shows that the consistency of $\hat{\boldsymbol{\gamma}}$ is unnecessary to ensure the consistency of community detection. Thus, we can plug in any random fixed $\hat{\boldsymbol{\gamma}}$, which is different from our general theorem. For simplicity, just assume we use the true value $\boldsymbol{\gamma}^0$ here. Observe that
\begin{align*}
F(\theta(\boldsymbol{\gamma}^0)R\bar{B}R^T,\theta(\boldsymbol{\gamma}^0)\boldsymbol{\pi}\boldsymbol{\pi}^T)=\theta(\boldsymbol{\gamma}^0)F(R\bar{B}R^T,\boldsymbol{\pi}\boldsymbol{\pi}^T),
\end{align*}
then the form of $F$ is exactly the same as $F$ defined in \cite{bickel2009nonparametric}, which automatically satisfies all conditions for $F$.

\subsection{Consistency of Maximum Likelihood Community Detection When Number of Communities $K$ Grows with $n$}
\label{append_sect:KtoinftyMLE}
In this section we consider the MLE for cluster assignment $\bc$ when the number of communities $K$ grows with $n$.
We start with some notations and definitions.
In our model $A_{ij} | Z \stackrel{indpt}\sim \text{Poisson}(B_{e_ie_j}\exp({\bz_{ij}^\top \pg}))$ we denote the true value of parameters $\be,B$ and $\pg$ by $\bc,B^0$ and $\pg^0$, respectively. Let $P_{ij} := B^0_{c_ic_j}$. Recall the log-likelihood of our model is
\begin{equation}
\label{logmathcalL}
\log \mathcal L(\be,\pg,B,\pmb\pi|A,Z) \propto \sum_{i=1}^n \pi_i + \sum_{i<j} A_{ij} \log B_{e_ie_j} + \sum_{i<j} A_{ij} \bz_{ij}^\top \pg - \sum_{i<j} B_{e_ie_j} \exp(\bz_{ij}^\top \pg).
\end{equation}
Since we already have an estimate of $\pg$, the $\sum_{i<j} A_{ij} \bz_{ij}^\top \pg$ term in \eqref{logmathcalL} does not contribute to the estimation of $\be$. Besides, under the average degree $ = \omega(\text{Poly}(\log n))$ regime, the $\sum_{i=1}^n \pi_i$ term is smaller order of the other terms. Thus, for the MLE of $\be$ we are (asymptotically) actually optimizing the following loss function:
\begin{equation}
L(A,Z;\be,B,\pg) := \sum_{i<j} A_{ij} \log B_{e_ie_j} - \sum_{i<j} B_{e_ie_j} \exp(\bz_{ij}^\top \pg).
\end{equation}
We define a ``population'' version of the above loss as
\begin{equation}
L_p(Z;\be,B,\pg) := \sum_{i<j} P_{ij} \exp(\bz_{ij}^\top \pg) \log B_{e_ie_j} - \sum_{i<j} B_{e_ie_j} \exp(\bz_{ij}^\top \pg)
\end{equation} which is the expectation of $L(A,Z;\be,B,\pg)$ given $\be$ and $Z$.
$L(A,Z;\be,B,\pg)$ and $L_p(Z;\be,B,\pg)$ could be optimized with respect to $B$ with $\hat B^{(\be)}_{ab} := \frac{O_{ab}(\be)}{E_{ab}(\be)}$ and $\tilde B^{(\be)}_{ab} := \frac{\sum_{(i,j)\in s_{\be}(a,b)} P_{ij} \exp(\bz_{ij}^\top \pg)}{E_{ab}(\be)}$. Thus, profiling $B$ we define 
\begin{equation}
\begin{aligned}
L(A,Z;\be, \hat B^{(\be)}, \pg)  &=\frac12 \sum_{ab} [O_{ab} \log(\hat B_{ab}) - E_{ab} \hat B_{ab}] =: PL(A,Z;\be,\pg) ,\\
L_p(Z;\be, \tilde B^{(\be)}, \pg)&= \frac12    \sum_{ab} [E_{ab} \tilde B_{ab} \log(\tilde B_{ab}) - E_{ab} \tilde B_{ab}] =: PL_p(Z;\be,\pg) ,
\end{aligned}
\end{equation}
where for simplicity we omit the $(\be)$ for $\hat B$ and $\tilde B$ when there is no confusion. 

The goal is to prove consistency of the MLE of cluster assignment $\hat\be=\arg\max_{\be} PL(A,Z;\be,\pg^0)$. In fact, the main idea is to adopt the very classical approach of first showing a  ``uniform weak law of large numbers'' type result (Theorem \ref{Ktoinf_thm12}), and then establish some identifiability conditions such that the expected likelihood $PL_p(Z;\be,\pg^0)$ is large and close to $PL_p(Z;\bc,\pg^0)$ only if $\be$ is close enough to the true $\bc$ (Theorem \ref{Ktoinfty_thm3b}).

First we state some conditions we work on. We want to point that in this section (class assignment MLE when $K$ grows) we are not using the $B = \rho_n \bar B$ setting, and are imposing assumptions directly on $P_{ij}$'s. This is mainly because when $K\to\infty$ we need stronger signal to noise ratio, which can be approximately understood as in-class probability over between-class probability, to identify the communities (see the remark under Theorem \ref{Ktoinfty_thm3b} for more detailed discussions).

\begin{cond}\label{avgdeg_logn3} 
$M = \sum_{i<j} P_{ij}$ satisfies $M = \omega(n(\log n)^{3+\delta})$ for some positive constant $\delta$.
\end{cond}

\begin{cond}\label{K_rate} 
The number of communities $K$ satisfies $K = O(n^{1/2})$.
\end{cond}

\begin{cond}\label{bdd_B}
There exist some constants $0 < c_{B} < C_{B}$ such that
$c_{B} / (n^2) \leq P_{ij} \leq n^2 C_{B}$.
\end{cond}

Condition \ref{avgdeg_logn3} requires the average degree to grow in at least Poly$(\log n)$ rate, which is still a sparse network setting. 
Condition \ref{K_rate} allows the number of communities $K$ to grow at a rate as fast as $n^{1/2}$, which matches the growth rate allowed in SBM's MLE consistency \citep{choi2012stochastic}. Condition \ref{bdd_B} is a mild condition: 
note that roughly speaking $B_{c_ic_j}$ is of the order $\rho_n = \omega((\log n)^{3+\delta}/n)$, so the required range from $O(n^{-2})$ to $O(n^2)$ is very loose for $B$. The sparsity of the network is already controlled by Condition \ref{avgdeg_logn3}; and Condition \ref{bdd_B} is not a sparsity condition, but just some technical requirement so that $\log B_{c_ic_j}$ does not blow up too much.

Now we present our main results.
\begin{thm}\label{Ktoinf_thm12}
Under Conditions \ref{cond:zbd},  \ref{avgdeg_logn3}, \ref{K_rate} and \ref{bdd_B},
\begin{equation}
\label{KinfMLE_lem234}
\max_{\be} \{|PL(A,Z;\be,\pg^0) - PL_p(Z;\be,\pg^0)| \} = o_p(M),
\end{equation}
where $M=\sum_{i<j} P_{ij}$.{}
\end{thm}

Theorem \ref{Ktoinf_thm12} states a uniform concentration result of the profile likelihood around its population version, which plays the role of ``uniform weak law of large numbers'' in the classical MLE consistency proof, where $M$ is the actual ``sample size''.
Our next step is to show $\hat\be$ is close enough to $\bc$ in their expected likelihood.

\begin{thm} \label{Ktoinfty_thm3a}
Let $\hat\be = \arg\max_{\be} PL(A,Z;\be,\hat\pg)$ to be the MLE for the true communnity assignment $\bc$. 
Then under Conditions \ref{cond:zbd}, \ref{cond:e1}, \ref{avgdeg_logn3}, \ref{K_rate} and \ref{bdd_B},
we have
\begin{equation}
\label{eq_Kinf_thm3a}
PL_p(Z; \bc, \pg^0) - PL_p(Z; \hat\be, \pg^0) = o_p(M).
\end{equation}
\end{thm}

Theorem \ref{Ktoinfty_thm3a} already shows that $\hat\be$ is close to the true $\bc$ in some sense. With some identifiability conditions, we will be able to translate this closeness into consistency. In particular, this consistency is defined in terms of a notion of classification error used in \cite{choi2012stochastic}: $N(\hat\be)$, the number of incorrect class assignment under $\hat\be$, is counted for every node whose
true class under $\bc$ is not in the majority within its estimated class $\hat\be$; and $\hat\be$ achieves weak consistency when $N(\hat\be) = o_p(n)$.
\begin{thm} \label{Ktoinfty_thm3b}
Suppose the following two conditions hold for $\forall a,b,c\in [K]$:
\begin{enumerate}
    \item  $\min_{a}(n_a(\bc)) = \Omega(n/K)$, i.e. all cluster sizes are of the same scale;
    \item $\min_{a\neq b} \max_{c} D'(B_{ac}^0,B^0_{bc}) = \Omega\left(\frac{MK}{n^2}\right)$ where $D'(a,b) := a\log a + b\log b - (a+b) \log \frac{(a+b)}2 $.
\end{enumerate}
Then \eqref{eq_Kinf_thm3a} implies $N(\hat\be) = o_p(n)$, where $\hat\be$ denotes the MLE.
\end{thm}
The second condition is basically saying any two classes $a$ and $b$ are ``well-separated'' in the sense that there exists a class $c$ that connects with $a$ and $b$ very differently so that one can distinguish $a$ and $b$ through their connections with $c$. 
This condition may not be trivially satisfied. A simple calculation could show that the left hand side of the  condition is of order $\min_a\max_c B^0_{ac}$, while the right hand side is $\Omega(\text{Poly}(\log n)K/n)$. When $K$ is fixed, this is satisfied in the usual $\rho_n =  O(\text{Poly}(\log n)/n)$ scenario. When $K$ is growing at a rate no faster than $O(\sqrt n) $,
an example scenario for this to hold would be $B^0_{aa} = O(\text{Poly}(\log n)K/n)$ and $B^0_{ab} = O(\text{Poly}(\log n)/ n)$ for $a\neq b$. Note that in the second scenario, though we have a stronger requirement for in-class edge probability, the average degree of each node is still of the order $O(\text{Poly}(\log n))$, which is not beyond the sparse setting satisfying Condition \ref{avgdeg_logn3}.

The proofs of our main theorems are basically divided into several lemmas.

\begin{lemma} For any $\be,B$,
\label{KinfMLE_lem0}
\begin{align*}
L_p(Z;\bc,B^0,\pg^0) - L_p(Z;\be,B,\pg^0) = \sum_{i<j} \ezg D(P_{ij} || B_{e_ie_j}) \geq 0,
\end{align*}
where $D(\lambda||\mu) := \lambda \log(\lambda / \mu) - \lambda + \mu \geq 0$ is the KL divergence from a Poisson($\lambda$) distribution to a Poisson($\mu$) one.
\end{lemma}

Lemma \ref{KinfMLE_lem0} basically says the population version of likelihood $L_p(Z;\cdot,\cdot, \pg^0)$ achieves its maximum at true $(\be,B^0)$. 
The next few lemmas establish concentration results of the profile likelihood around its population version. 

\begin{lemma} For any $\be$, \label{KinfMLE_lem1}
\begin{align*}
PL(A,Z;\be,\pg^0) - PL_p(Z;\be,\pg^0) = \frac12 \sum_{ab} \left[
E_{ab} D(\hat B_{ab}||\tilde B_{ab}) + E_{ab} (\hat B_{ab} - \tilde B_{ab}) \log \tilde B_{ab}
\right]
\end{align*}
where $E_{ab} = E_{ab}(\be,\pg^0)$.
\end{lemma}

\begin{lemma} \label{KinfMLE_thm1} Under Conditions \ref{cond:zbd} and \ref{avgdeg_logn3}, \ref{K_rate}, 
\begin{align*}
\max_{\be} \left\{ \sum_{a\leq b} E_{ab} D(\hat B_{ab} || \tilde B_{ab}) \right\} = o_p(M)
\end{align*}
where $E_{ab} = E_{ab}(\be,\pg^0)$.
\end{lemma}
\begin{proof}[Proof of Lemma \ref{KinfMLE_thm1}]
Given covariates $Z$ and class assignment $\be$, let $\hat\Theta_{\epsilon} := \{\hat B : \sum_{a\leq b} E_{ab} D(\hat B_{ab} || \tilde B_{ab}) \geq \epsilon \} $ and $\hat\Theta_1 := \{ \hat B : \exists a,b\in[K] \ s.t. \ \hat B_{ab} \geq 2M\xi/ E_{ab}  \}$.

We first bound the probability of event $\hat\Theta_1$.
Under the conditions $||\bz_{ij}||_{\infty} \leq \zeta$ and $\xi = \exp(\zeta||\pg^0\|_1)$, by Bernstein's inequality for Poisson variables (Lemma A.1 and A.2), 
\begin{align*}
\Pr\left(\hat B_{ab} \geq \frac{2M\xi}{E_{ab}} \right) \leq& \Pr\left(\sum_{(i,j)\in S_{\be}(a,b)} A_{ij} \geq M\xi + \sum_{(i,j)\in S_{\be}(a,b)} P_{ij}\ezg \right) \\
\leq& \exp\left( -\frac{(M\xi)^2}{2(M\xi + M\xi \bar L ) } \right)\\
\leq& 
\exp(-C_1M)
\end{align*}
where $C_1, C_2,...$ are constants. Thus, 
\begin{align*}
\Pr(\hat\Theta_1) \leq \sum_{a\leq b} \Pr\left(\hat B_{ab} \geq \frac{2M\xi}{E_{ab}} \right) \leq K^2 \exp(-C_1M).
\end{align*}

Next we bound the probability of event $\hat\Theta_\epsilon$.
Applying a Poisson variable's Chernoff inequality (\cite{vershynin2018high}, p20) we have
\begin{align*}
&\text{Pr}(\hat B_{ab} \geq \tilde B_{ab} + t) = \Pr\left(O_{ab} \geq \sum_{(i,j)\in S_{\be}(a,b)} P_{ij} \ezg  + E_{ab} t \right) \\
\leq&  \left( \frac{e \sum_{(i,j)\in S_{\be}(a,b)} P_{ij} \ezg }{ \sum_{(i,j)\in S_{\be}(a,b)} P_{ij} \ezg + E_{ab} t} \right)^{\sum_{(i,j)\in S_{\be}(a,b)} P_{ij} \ezg+E_{ab}t} \exp\left(-\sum_{(i,j)\in S_{\be}(a,b)} P_{ij} \ezg\right)\\
=& \frac{e^{E_{ab}t} }{ (1+\frac{t}{\tilde B_{ab}})^{E_{ab}\tilde B_{ab} + E_{ab}t} }.
\end{align*}
Since $D(\tilde B_{ab}+t || \tilde B_{ab}) = (\tilde B_{ab}+t) \log(1+\frac{t}{\tilde B_{ab}}) -t$, we have
\begin{align*}
\exp(-E_{ab} D(\tilde B_{ab}+t || \tilde B_{ab}) ) \geq \text{Pr}(\hat B_{ab} \geq \tilde B_{ab} + t),
\end{align*}
which indicates
\begin{align*}
 \text{Pr}(\hat B_{ab} = v) \leq \exp(-E_{ab} D(v || \tilde B_{ab}) ).
\end{align*}
And by the independence between $\{A_{ij}\}_{i<j}$ we have
\begin{align*}
 \text{Pr}(\hat B) \leq \exp(- \sum_{a\leq b} E_{ab} D(\hat B_{ab} || \tilde B_{ab}) ).
\end{align*}
Thus, for any $\hat B \in \hat\Theta_\epsilon$, we have $\Pr(\hat B) \leq \exp(-\epsilon)$.
Now we bound $\Pr(\hat B \in \hat\Theta_\epsilon)$ by
\begin{align*}
\Pr(\hat\Theta_\epsilon) \leq& \Pr(\hat\Theta_1) + \Pr(\hat\Theta_\epsilon \setminus \hat\Theta_1)\\
\leq& K^2\exp(-C_1 M) +  |\hat\Theta_1^c| e^{-\epsilon}\\
\leq& K^2\exp(-C_1 M) +  (2M\xi+1)^{\frac{K^2+K}2 } e^{-\epsilon},
\end{align*}
where the bound on cardinality of complement of $\hat\Theta_1$ comes from the fact that $\hat B_{ab} = O_{ab} / E_{ab}$ and $O_{ab}$ only takes integer values. Therefore, for any $\epsilon'>0$, a union bound over all $[K]^n$ possible $\be$'s gives
\begin{equation}
\label{Ktoinfty_thm1_finalineq}
\begin{aligned}
&\Pr\left(\max_{\be} \left\{ \sum_{a\leq b} E_{ab} D(\hat B_{ab} || \tilde B_{ab}) \right\} > \epsilon' M \right) \\ 
\leq&  \exp\left((n+2)\log K -C_1 M\right) + \exp\left(n\log K + \frac{K^2+K}2 \log(2M\xi+1) - \epsilon' M \right).
\end{aligned}
\end{equation}
Under the conditions $M = \omega(n(\log n)^{3+\delta} )$ and $K = O(n^{1/2})$, the probability bound in \eqref{Ktoinfty_thm1_finalineq} goes to $0$ as $n\to\infty$, which proves the desired result.
\end{proof}

\begin{lemma} \label{KinfMLE_thm2} Under Conditions \ref{cond:zbd} and  \ref{avgdeg_logn3}, \ref{K_rate} and \ref{bdd_B},
\begin{align*}
\max_{\be} \left\{ \left| \sum_{a b} E_{ab} (\hat B_{ab} - \tilde b_{ab}) \log \tilde B_{ab} \right| \right\} = o_p(M)
\end{align*}
where $E_{ab} = E_{ab}(\be,\pg^0)$.
\end{lemma}
\begin{proof}[Proof of Lemma \ref{KinfMLE_thm2}]
Given $Z$ and $\be$, let $X_{ij} := A_{ij} \log \tilde B_{e_ie_j}$, then $\mathbb E X_{ij} = P_{ij} \ezg \log \tilde B_{e_ie_j}$. 
The term we are considering could be expressed as $\sum_{a b} E_{ab} (\hat B_{ab} - \tilde b_{ab}) \log \tilde B_{ab} = 2 \sum_{i<j} (X_{ij} -\mathbb E X_{ij}) =: 2(X-\mathbb E X)$.
$\{X_{ij}\}_{i<j}$ follow independent scaled Poisson distributions, so they satisfy the Bernstein condition in Lemma A.2 with $L = 2  \bar L \log n\geq \bar L \log |\tilde B_{e_ie_j}|$. Thus, by Bernstein's inequality, for any $\epsilon>0$
\begin{align*}
\Pr(|X-\mathbb E X|\geq \epsilon) \leq 2 \exp\left( -\frac{\epsilon^2}{2 (M \xi (2\log n)^2 + 2\epsilon \bar L \log n) } \right).
\end{align*}
And a union bound over all possible $\be$ gives
\begin{equation}
\label{Ktoinfty_thm2_finalineq}
\begin{aligned}
\Pr(\max_{\be} |X-\mathbb E X|\geq \epsilon M) \leq& 2 K^n \exp\left( -\frac{\epsilon^2 M^2}{4 (2M \xi (\log n)^2 + \epsilon M \bar L \log n) } \right)\\
\leq& 2 \exp\left(n\log K -  \frac{C_2 \epsilon^2M}{(\log n)^2} \right).
\end{aligned}
\end{equation}
Under the conditions $M = \omega(n(\log n)^{3+\delta} )$ and $K = O(n^{1/2})$, the probability bound in \eqref{Ktoinfty_thm2_finalineq} goes to $0$ as $n\to\infty$, which proves the desired result.
\end{proof}

Combining Lemma \ref{KinfMLE_lem1}, \ref{KinfMLE_thm1} and \ref{KinfMLE_thm2} together we immediately derive Theorem \ref{Ktoinf_thm12}.
Our next step is to show $\hat\be$ is close enough to $\bc$ in their expected likelihood (Theorem \ref{Ktoinfty_thm3a}).
First we show a lemma that bridges likelihood with true parameter $\pg^0$ to that with the MLE $\hat\pg$.


\begin{lemma} Assume MLE $\hat\pg$ is consistent. Then under
 Conditions \ref{cond:zbd} and \ref{avgdeg_logn3}, for any $\be$,
\label{KinfMLE_lemhat}
\begin{align*}
| PL(A,Z;\be,\hat\pg) - PL(A,Z;\be,\pg^0)| = o_p(M).
\end{align*}
\end{lemma}
\begin{proof}[Proof of Lemma \ref{KinfMLE_lemhat}] 
Since $PL(A,Z;\be,\pg) = \sum_{i<j} (A_{ij} \log \hat B_{e_ie_j} - A_{ij})$, we could see that 
\begin{equation}
\label{lemhat_eq1}
PL(A,Z;\be,\hat\pg) - PL(A,Z;\be,\pg^0) = \sum_{i<j} A_{ij} \log \frac{E_{e_ie_j}({\be,\pg^0})}{E_{e_ie_j}({\be,\hat\pg})}.
\end{equation} 
Under the assumption $||\bz_{ij}||_{\infty} \leq \zeta$, we have 
\begin{equation}
\label{lemhat_eq2}
\left|\log \frac{E_{e_ie_j}({\be,\pg^0})}{E_{e_ie_j}({\be,\hat\pg})} \right| \leq \zeta||\hat\pg-\pg^0||_1 = o_p(1).
\end{equation}
Since $\{A_{ij}\}_{i<j}$ are independent, by Bernstein's inequality for Poisson variables (Lemma A.1 and A.2), for any constant $a>0$
\begin{align*}
\Pr(\sum_{i<j} A_{ij} \geq (a+1) M) \leq \exp\left( -\frac{a^2M^2}{2(M  \xi + aM\bar L)} \right) \leq \exp(-C_3 aM),
\end{align*}
i.e. $\sum_{i<j} A_{ij} = O_p(M)$. Combining that result with \eqref{lemhat_eq1} and \eqref{lemhat_eq2} finishes the proof.
\end{proof}

With Theorem \ref{Ktoinf_thm12} and Lemma \ref{KinfMLE_lemhat} we are ready to show Theorem \ref{Ktoinfty_thm3a} characterizing the MLE $\hat\be$ being close to the true $\bc$ in their population version profile likelihood, and furthermore, Theorem \ref{Ktoinfty_thm3b} stating the consistency of $\hat\be$ to $\bc$.

\begin{proof}[Proof of Theorem \ref{Ktoinfty_thm3a}]
First note that by Lemma \ref{KinfMLE_lem0} $PL_p(Z; \bc, \pg^0) = L_p(Z;\bc,B^0,\pg^0)  \geq PL_p(Z; \hat\be, \pg^0)$. Hence it suffices to upper bound $PL_p(Z; \bc, \pg^0) - PL_p(Z; \hat\be, \pg^0)$. By the definition of $\hat\be$, Lemma \ref{KinfMLE_lemhat} and equation \eqref{KinfMLE_lem234}, 
\begin{align*}
&PL_p(Z; \bc, \pg^0) - PL_p(Z; \hat\be, \pg^0) \\
\leq& PL(A,Z;\bc,\hat\pg) + |PL(A,Z;\bc,\hat\pg)-PL(A,Z;\bc,\pg^0)| + |PL(A,Z;\bc,\pg^0) - PL_p(Z;\bc,\pg^0)| \\
&- PL(A,Z;\hat\be,\hat\pg) + |PL(A,Z;\hat\be,\hat\pg)-PL(A,Z;\hat\be,\pg^0)| + |PL(A,Z;\hat\be,\pg^0)-PL_p(Z;\hat\be,\pg^0)| \\
\leq& o_p(M).
\end{align*}
\end{proof}

\begin{proof}[Proof of Theorem \ref{Ktoinfty_thm3b}]
We first define a partition $\Pi$ of the edge set $\{(i,j)\}_{i<j}$ to be a collection of disjoint subsets $T_1^\Pi ,...,T_R^\Pi$ such that $\cup_{r=1}^R T_{r}^\Pi = \{(i,j)\}_{i<j}$. We denote by $\Pi_{ij} = T_r^\Pi$ if $(i,j) \in T_r^\Pi$.
A example is that node class assignments ${\Pi(\be)}$ naturally induces a partition of the edge set $\{T^{\Pi(\be)}_{kl}\}_{1\leq k\leq l\leq K}$, in which case $\Pi_{ij} = T_{(e_i,e_j)}^{\Pi(\be)}$. In general a partition could be more flexible than one induced by class assignments. For any partition $\Pi = \{T_1^\Pi,...,T_R^\Pi\}$ of $\{(i,j)\}_{1\leq i<j\leq n}$, define $\tilde P_r := (\sum_{(i,j)\in T_r^\Pi} P_{ij} \ezg)/(\sum_{(i,j)\in T_r^\Pi} \ezg)$ which corresponds to the $\tilde B$ defined previously; and define 
\begin{equation}
\label{PLPi}
PL_p^*(Z;\Pi) := \sum_{i<j} (P_{ij}\ezg \log \tilde P_{\Pi_{ij}} - \tilde P_{\Pi_{ij}} \ezg) = \sum_{i<j} P_{ij} \ezg (\log \tilde P_{\Pi_{ij}}-1)
\end{equation}
which corresponds to the $PL_p(Z;\be,\pg)$ previously (we omitted the argument $\pg$ in \eqref{PLPi} since we are only dealing with $\pg^0$ now by Theorem \ref{Ktoinfty_thm3a}). When the partition is induced by a class assignment $\be$, it is easy to see that $PL_p(Z;\be,\pg^0) = PL_p^*(Z;\Pi(\be))$. Besides, by noting that $PL_p^*(Z;\Pi)$ is the optimal value of the problem $\max_Q L_p^*(Z;\Pi,Q) := \sum_{i<j} (P_{ij}\ezg \log Q_{ij} - Q_{ij} \ezg)$ subject to the constraint $Q_{i_1j_1} = Q_{i_2j_2}$ if $\Pi_{i_1j_1}= \Pi_{i_2j_2}$, we have the following property:
\begin{lemma}
\label{xifen}
Let $\Pi'$ be a refinement of partition $\Pi$ of set $\{(i,j) \}_{i<j}$, then $PL_p^*(Z;\Pi)\leq PL_p^*(Z;\Pi')$.
\end{lemma}
Next we want to construct a refinement $\Pi^*$ of $\Pi(\hat\be)$ such that
\begin{equation}
\label{thm3blema2}
PL_p(Z;\bc,\pg^0) - PL_p^*(Z;\Pi^*) = N(\hat\be)\Omega(M/n).
\end{equation}
Combining \eqref{thm3blema2} with \eqref{eq_Kinf_thm3a} and Lemma \ref{xifen} we have $ N(\hat\be)\Omega(M/n)=PL_p(Z;\bc,\pg^0) - PL_p^*(Z;\Pi^*) \leq PL_p(Z;\bc,\pg^0) - PL_p(Z;\hat\be,\pg^0)=o_p(M)$. Hence it suffices to show there exists a refinement such that \eqref{thm3blema2} holds. We construct $\Pi^*$ as follows. In each class $k$ of $\hat\be$, we take out pairs $(i,j)$ such that $\hat e_i = \hat e_j = k$ but $c_i\neq c_j$. Continue this process in this cluster of $\hat\be_{\cdot}=k$ until all nodes remaining in it have the same true class membership under $\bc$. Denote the total number of pairs we have taken out by $N_1$. Since by definition we see each node whose true class under $\bc$ is in the majority within its estimated class $\hat\be$ as being correctly classified, the total number of nodes remaining after the pairing process must be smaller than or equal to the number of correctly classified nodes, and hence $2N_1\geq N(\hat\be)$. Next, for each picked out pair $(i,j)$, select all nodes $h$ such that $D'(P_{ih}, P_{jh}) \geq CMK/(n^2)$ where $C$ is the constant from the second condition of the theorem. In $\Pi(\hat\be)$ we know that $\Pi(\hat\be)_{ih} = \Pi(\hat\be)_{jh} = T^{\Pi(\hat\be)}_{(\hat e_i, \hat e_h)}$; now we separate $T^{\Pi(\hat\be)}_{(\hat e_i, \hat e_h)}$ into $\{(i,h), (j,h)\}$ and $T^{\Pi(\hat\be)}_{(\hat e_i, \hat e_h)}\setminus \{(i,h), (j,h)\}$. Perform this separation for all $((i,j),h)$ such that $(i,j)$ pair is picked out in the first step and $D'(P_{ih}, P_{jh}) \geq CMK/(n^2)$, and the resulted refinement of $\Pi(\hat\be)$ is the $\Pi^*$ we wanted. To see this, denote the number of triples $((i,j),h)$ selected in the second step by $N_2$. By condition 2 in the theorem, for each pair $(i,j)$ there is at least one true class such that all nodes in it could form a selected triple with $(i,j)$. Hence $N_2\geq N_1 \min_a n_a(\bc) \geq N_1 \Omega(n/K)$. From \eqref{PLPi} we could calculate 
\begin{align*}
PL_p(Z;\bc,\pg^0) - PL_p^*(Z;\Pi^*) =& \sum_{i<j} P_{ij}\ezg\log\frac{P_{ij}}{\tilde P_{\Pi_{ij}^*}} \\
\geq& \sum_{((i,j),h) \text{picked out}} P_{ih} \log\frac{P_{ih}}{\tilde P_{\Pi_{ih}^*}} + P_{jh} \log\frac{P_{jh}}{\tilde P_{\Pi_{jh}^*}} \\
=& \sum_{((i,j),h) \text{picked out}} D'(P_{ih}, P_{jh})\\
\geq& N_2 \Omega(\frac{MK}{n^2}) \geq N_1 \Omega(\frac nK) \Omega(\frac{MK}{n^2}) \geq N(\hat\be) \Omega(\frac Mn)
\end{align*}
which shows \eqref{thm3blema2} and finishes the proof.
\end{proof}

\subsection{Theory for the Pseudo-Likelihood EM Algorithm for PCABM}
\label{sec::supple_plem}

\subsubsection{Derivation of the Algorithm}
The likelihood function under the covariate-adjusted model is 
\begin{equation} 
\mathcal L(\mathbf c, \pg, P, \pmb\pi |A,Z) \varpropto \prod_{i=1}^n \pi_{c_i} \prod_{i<j} B_{c_ic_j}^{A_{ij}} e^{A_{ij} \bz_{ij}^\top \pg } \exp(-B_{c_ic_j} e^{ \bz_{ij}^\top \pg }).
\end{equation}
Now fix some coefficient estimate $\hat\pg$ since we are interested in maximizing over the label assignments. 
Enlighted by the latent class nature of the problem, we design some kind of EM algorithm to more efficiently estimate the community labels.
Using the latent class variable $w_i^l = \mathbbm{1}{(c_i=l)} $, we could write the log-likelihood as
\begin{equation}
l(\mathbf w, P, \pmb\pi |A,Z) = \sum_{i=1}^n \sum_{l=1}^K w_i^l \left[ \log(\pi_l) - \frac12 \sum_{j=1}^n e^{ \bz_{ij}^\top \hat\pg } \sum_{k=1}^K B_{lk} w_j^k +\frac12 \sum_{j=1}^n A_{ij} \sum_{k=1}^K w_j^k \log B_{lk} \right].
\label{lw/lat}
\end{equation}
The log-likelihood \eqref{lw/lat} does not directly lend itself to an efficient EM algorithm since in the E-step one has to optimize over all the latent class variables jointly, which is also an NP-hard discrete optimization problem \citep{amini2013pseudo}. In order to separate all the latent class variables so that we can optimize them separately and even analytically, we approximate the $w_j^k$'s in the square bracket in \eqref{lw/lat} with an estimated community label assignment $\mathbf e = (e_1,...,e_n) \in \{1,...,K\}^n$, i.e., in the likelihood of each edge $A_{ij}$, we see node $i$ as from true community $c_i$ and node $j$ as from an estimated community $e_j$. After that approximation, the pseudo-likelihood reads
\begin{equation}
\begin{aligned}
l(\mathbf w, P, \pmb\pi |A,Z) &= \sum_{i=1}^n \sum_{l=1}^K w_i^l \left[ \log(\pi_l) - \frac12 \sum_{j=1}^n e^{ \bz_{ij}^\top \hat\pg } \sum_{k=1}^K B_{lk} \mathbbm{1}{(e_j = k)} +\frac12 \sum_{j=1}^n A_{ij} \sum_{k=1}^K \mathbbm{1}{(e_j = k)} \log B_{lk} \right] \\
&=  \sum_{i=1}^n \sum_{l=1}^K w_i^l \left[ \log(\pi_l) - \frac12\sum_{k=1}^K B_{lk} \sum_{j=1}^n e^{ \bz_{ij}^\top \hat\pg }  \mathbbm{1}{(e_j = k)} +\frac12\sum_{k=1}^K \log B_{lk} \sum_{j=1}^n A_{ij}  \mathbbm{1}{(e_j = k)}  \right] \\
&= \sum_{i=1}^n \sum_{l=1}^K w_i^l \left[ \log(\pi_l) - \frac12\sum_{k=1}^K \hat \Xi_{ik} B_{lk}  +\frac12\sum_{k=1}^K b_{ik} \log B_{lk}  \right]
\end{aligned}
\label{pl}
\end{equation}
where $\hat \Xi_{ik} = \sum_{j=1}^n e^{ \bz_{ij}^\top \hat\pg }  \mathbbm{1}{(e_j = k)}$ and $b_{ik} = \sum_{j=1}^n A_{ij}  \mathbbm{1}{(e_j = k)}$. For any fixed $\mathbf e$ the pseudo-likelihood \eqref{pl} can be maximized over $(P,\pmb\pi)$ via standard EM algorithm, and with the outcome of the EM algorithm one can update label assignment estimate $\mathbf e$. Iterating this alternating process gives the proposed Algorithm \ref{alg:pcabm}. 

The initializations of $\hat{\pmb\pi}$ and $\hat P$ are their MLE under a given class assignment $\be$.
We would like to remark that in the inner loop of the algorithm, standard EM algorithm theory guarantees the convergence of pseudo likelihood when $\be$ is fixed and maximizing over $(P,\pmb\pi)$.
Besides, in Algorithm \ref{alg:pcabm} we use the same $\hat\pg$ throughout the algorithm instead of updating $\hat\pg$ with each new label estimate, partly because the asymptotic property of $\hat\pg$ does not depend on the label estimate we use,
and partly because the algorithm for estimating $\hat\pg$ is much more time consuming than the PLEM algorithm itself. Empirically, we tried updating $\hat\pg$ every  iteration in Algorithm \ref{alg:pcabm}, and the performance is almost the same as Algorithm \ref{alg:pcabm} itself.

\subsubsection{Consistency Result for the Algorithm \ref{alg:pcabm}}

In this section we consider the two balanced communities case, i.e., $K=2$, and each class has $m=\frac n2$ nodes. We further assume that the initial labeling estimation $\be$ is also balanced, i.e. $\hat\pi_1 = \hat\pi_2 = \frac12$ and $\sum_{i=1}^n \mathbbm{1}{(e_i=1)} = \sum_{i=1}^n \mathbbm{1}{(e_i=2)} = m$. Parameter $\varsigma \in(0,1)\setminus\{\frac12\} $ is the proportion of correctly labeled nodes by $\be$, i.e. $\sum_{i=1}^n \mathbbm{1}{(e_i=1,c_i=1)} = \sum_{i=1}^n \mathbbm{1}{(e_i=2,c_i=2)} =\varsigma m$. 
To more clearly parametrize the connection probability matrix $B$ in terms of network sparsity  we write   $B = \frac{1}m  \begin{pmatrix}a&b\\b&a\end{pmatrix} $ which is a $2\times 2$ matrix with $\varphi_n = {a+b} \to\infty$ being the average degree parameter which characterizes the sparsity of the graph.

For our theoretical analysis, we show the consistency of the output after one E-step of the PLEM algorithm
under the two balanced communities setting. The first E-step of the algorithm actually compares $\hat\pi_{il} \varpropto \hat\pi_l \prod_{k=1}^K \exp(b_{ik} \log\hat B_{lk} -\hat \Xi_{ik} \hat B_{lk}), l = 1,2$, and sets $e_i$ to be the $l$ corresponding to the larger $\hat\pi_{il}$. More explicitly, suppose the initial estimator of $B$ is given by $\hat B = \frac1m \begin{pmatrix}\hat a&\hat b\\\hat b&\hat a\end{pmatrix} $. Then the first E-step gives:
\begin{equation}
e_i = 1\quad \text{if} \quad (b_{i1} - b_{i2}) \log\frac{\hat a}{\hat b} + (\hat \Xi_{i1} - \hat \Xi_{i2})\frac{\hat b - \hat a}m > 0
\end{equation}
and $e_i=2$ otherwise. Recall $\hat \Xi_{ik} = \sum_{j=1}^n e^{ \bz_{ij}^\top \hat\pg }  \mathbbm{1}{(e_j = k)}$ and $b_{ik} = \sum_{j=1}^n A_{ij}  \mathbbm{1}{(e_j = k)}$. Intuitively, if node $i$ belongs to community 1, then $b_{i1} - b_{i2}$ should be positive as there are more links inside a community than between communities, and $\hat \Xi_{i1} - \hat \Xi_{i2}$ should cancel out to be around 0. Thus, with some concentration argument, one would expect the estimated label $e_i$ to be 1 with high probability.

\paragraph{Directed Case.}

The undirected graph does not immediately lend itself to concentration inequalities since $A$ has dependent entries when it is subject to the symmetric constraint. Thus, we first consider a directed graph in which the $A_{ij}$'s are fully independent. 
The consistency is first established on the directed model, and then in the proof for the undirected model a coupling is introduced to connect with the directed case result.
The original undirected block model is (we see $\bc$ as a fixed parameter)
\begin{equation}
A_{ij}|Z \stackrel{indpt}\sim \text{Poisson}(B_{c_ic_j} \exp(\bz_{ij}^\top\pg^0))\ \text{and} \ A_{ij} = A_{ji}, \bz_{ij} = \bz_{ji} \ \text{for} \ i<j,
\label{undirmod}
\end{equation}
where $indpt$ means $A_{ij}, i<j$ are independent conditional on $Z$.
In the directed block model, the symmetry assumptions on $A$ and $Z$ are withdrawn and one has
\begin{equation}
\tilde A_{ij}|\tilde Z \stackrel{indpt}\sim \text{Poisson}(\tilde B_{c_ic_j} \exp(\tbz_{ij}^\top\pg^0))\  \text{for all} \ i\neq j,
\label{dirmod}
\end{equation}
where we use `tilde' to indicate the directed model. 
Besides, we assume $\tilde{ Z}$ is asymmetric. In detail, similarly to Condition \ref{cond:zbd}, we assume the following condition on $\tilde{ Z}$:

\begin{cond}[Directed Case]
$\{\tilde{\mathbf{z}}_{ij},i\neq j\}$ are i.i.d. and uniformly bounded, i.e., for $\forall i\neq j$, $\|\tilde{\mathbf{z}}_{ij}\|_{\infty}\leq\zeta$, where $\zeta>0$ is the constant in Condition \ref{cond:zbd}. 
\label{cond:zbd1}
\end{cond}

Some notations in the upcoming theorems are defined as follows:
\begin{equation}
\label{init_collection}
\mathcal E_n^{\varsigma} = \left\{\be\in \{1,2\}^n: \sum_{i=1}^n \mathbbm{1}{(e_i=1,c_i=1)} = \sum_{i=1}^n \mathbbm{1}{(e_i=2,c_i=2)} =\varsigma m \right\}
\end{equation}
is the collection of initial labelings with correct proportion $\varsigma$; 
\begin{equation}
\label{misclass}
M_n(\be) = \min_{\phi \in\{(1,2),(2,1) \} } \frac1n \sum_{i=1}^n \mathbbm{1}{(\hat c_i(\be) \neq \phi(c_i))}
\end{equation}
is the misclassification rate of the algorithm, in which $\hat c_i(\be)$ is the algorithm's output label estimate after one E-step with input initial labels $\be$. By weak consistency we mean $M_n(\be) \stackrel{p}{\to} 0$. Let $L_1= 2(e^2+1)$ be a constant, $L_2 = \mathbb E \ezg$(or $\mathbb E \etzg$), $h(p) = -p\log p - (1-p)\log(1-p), p \in(0,1)$, and $\kappa_{\varsigma}(n):= \frac1n[\log(\frac{n}{4\pi\varsigma(1-\varsigma)}) + \frac1{3n} ] = o(1)$. Then, we have the following theorem. 

\begin{thm}[Directed Case] \label{thm::plem_directed} In the directed Pairwise Covariate Adjusted block model \eqref{dirmod} with two balanced communities and balanced initial labeling estimation, assume $\varsigma\in(0,1)\setminus \{\frac12\}, \tilde B = \frac1m \begin{pmatrix}a&b\\b&a\end{pmatrix}, a>b$ and the initial estimator $\hat a > \hat b>0$ with $\hat a= O(\hat a - \hat b )=O(a-b)$. Further assume Condition \ref{cond:zbd1} holds. Then there exists a sequence $\{u_n\}\in\mathbb R^+$ s.t.
\begin{equation}
\label{thm::plem_directedun}
\log u_n + \log\log u_n \geq \log(\frac2e h(\varsigma)) + \frac{L^2_2}{36(\chi+L_1L_2)}(2\varsigma-1)^2\frac{(a-b)^2}a,
\end{equation}
\begin{equation}
\label{thm::plem_directedMn} \mathbb
P\left( \sup_{\be \in \mathcal E_n^{\varsigma}} \tilde M_n(\be) \geq \frac{2h(\varsigma)}{\log u_n} \right) \leq
\exp(-n[h(\varsigma) - \kappa_{\varsigma}(n)]).
\end{equation}

In particular, if $\frac{(a-b)^2}a \to \infty$ we have $u_n\to\infty$ and algorithm's label estimate is consistent.
\end{thm}

\begin{remark}
Recall that $\varphi_n = {a+b}$ is the average degree of the graph. The condition $\frac{(a-b)^2}a \to \infty$ is almost equivalent to the $\varphi_n\to\infty$ condition for weak consistency in previous works.
\end{remark}

\paragraph{Undirected Case.}

To prove for the original undirected block model, we slightly refine our proof by
first conditioning on $Z$. 
 In order to adjust to the symmetry of $Z$, instead of assuming the strong condition of all $\{\tilde Z_{ij}\}_{i,j=1}^{n,n}$'s being i.i.d., in the refined proof we only need $\{Z_{ij} \}_{j=1}^n$ to be independent for all fixed $i$ (and also $\{Z_{ij} \}_{i=1}^n$ to be independent for all fixed $j$), while $\{Z_{i\cdot}\}_{i=1}^n$'s (and $\{Z_{\cdot j}\}_{j=1}^n$'s) do not need to be independent, where $Z_{i\cdot} =  \{Z_{ij} \}_{j=1}^n$. Thus, in the coupling with the directed case, we can build the coupling conditional on a symmetric $Z$.
Besides, the following technical condition is imposed:
\begin{cond} There exists a constant $d_0 >1$ s.t.
\begin{equation}
\min\left\{ \frac{(a-b)^2(2\varsigma-1)^2L_2^2}{36a^2\chi}, 
\frac{(a-b)^2(2\varsigma-1)^2L_2^2}{144(\hat a - \hat b)^2\chi}(\log\frac{\hat a}{\hat b})^2 \right\} \geq 2d_0 h(\varsigma), 
\end{equation}
where $\chi$ is the constant defined after Lemma \ref{lem:phi}.
\label{cond2}
\end{cond}
Similar to condition (17) in \cite{amini2013pseudo},
Condition \ref{cond2} mainly requires $\varsigma$ to be bounded away from $\frac12$, i.e. the initial labeling should have an accuracy close enough to $1$. It also postulates $a-b$ and ${\hat a}/{\hat b}$ can not be too small, i.e. there should be a gap between the connection probabilities within and between communities. These assumptions are heuristic: since we are considering one E-step it is natural that the consistency depends on a reasonably good initial labeling assignment, and a significant enough gap between within and cross-class entries of $B$ is the basis for our block identification \citep{decelle2011asymptotic}. Last but not least, we remark that the constants (36,144) in Condition \ref{cond2} are loose, and we write them for the sake of technical simplicity.

\begin{thm}[Undirected case] \label{thm::plem_undirect} In the undirected Pairwise Covariate Adjusted block model \eqref{undirmod} with two balanced communities and balanced initial labeling estimation, assume $\varsigma\in(0,1)\setminus \{\frac12\}, B = \frac1m \begin{pmatrix}a&b\\b&a\end{pmatrix}, a>b$ and the initial estimator $\hat a > \hat b>0$ with $\hat a= O(\hat a - \hat b )=O(a-b)$. Further assume Conditions \ref{cond:zbd} and \ref{cond2} hold. Then there exists a sequence $\{u_n\}\in\mathbb R^+$ s.t.
\begin{equation}
\label{thm::plem_undirectun}
\log u_n + \log\log u_n \geq \log(\frac2e h(\varsigma)) + \frac{L^2_2}{72(\chi+L_1L_2)}(2\varsigma-1)^2\frac{(a-b)^2}a,
\end{equation}
\begin{equation}
\label{thm::plem_undirectMn} \mathbb
P\left( \sup_{\be \in \mathcal E_n^{\varsigma}}  M_n(\be) \geq \frac{4h(\varsigma)}{\log u_n} \right) \leq 2
\exp(-n[h(\varsigma) - \kappa_{\varsigma}(n)]) 
+ 2\exp\left( -n[(d_0 -1)h(\varsigma) - \kappa'_{\varsigma}(n)]\right)
,
\end{equation}
where $\kappa_{\varsigma}(n) := \frac1n[\log(\frac{n}{4\pi\varsigma(1-\varsigma)}) + \frac1{3n} ] = o(1)$ is defined as in Theorem \ref{thm::plem_directed} and $\kappa'_{\varsigma}(n) := \frac1n[\log(\frac{n}{4\pi\varsigma(1-\varsigma)}) + \log(2n) + \frac1{3n} ] = \kappa_{\varsigma}(n) + \frac{\log(2n)}n =  o(1)$.

In particular, if $\frac{(a-b)^2}a \to \infty$ we have $u_n\to\infty$ and the algorithm's label estimate is consistent.
\end{thm}

\subsubsection{Proof of Theorem \ref{thm::plem_directed} and \ref{thm::plem_undirect}}
\label{subsubsec:proof_plem}
By Lemma \ref{lem:phi},
throughout this subsection without loss of generality, we work under the following two conditions: (i) $|e^{\bz_{ij}^\top \pg^0}|$ ,$|e^{\bz_{ij}^\top \hat\pg}|$ and $|e^{\bz_{ij}^\top \hat\pg} - \mathbb E e^{\bz_{ij}^\top \hat\pg}|$ are bounded by a constant $\chi$; (ii) $\|\hat\pg - \pg^0\| = o(1) $. 

\begin{proof}[Proof of Theorem \ref{thm::plem_directed}]
We give the proof for the $\varsigma > \frac12$ case. For the $\varsigma<\frac12$ case it suffices to flip the labels of $\bc$ by taking permutation $\phi=(2,1)$ in \eqref{misclass}.
For simplicity we assume $c_i=1, i=1,...,m; c_i=2, i=m+1,...,n$. First, consider a single node $i$. Without loss of generality suppose $c_i=1$ since $c_i=2$ renders the same bounds. Then $\hat c_i(\be) = 1 \Leftrightarrow$
\begin{equation}
\label{xi}
\tilde \xi_i := (\tilde b_{i1} - \tilde b_{i2}) \log\frac{\hat a}{\hat b} + (\hat \Xi_{i1} - \hat \Xi_{i2})\frac{\hat b - \hat a}m > 0.
\end{equation}
To show consistency on this single node we seek to bound $\Pr(-\tilde\xi_i \geq 0)$. Since 
$$\tilde b_{i2} - \tilde b_{i1} = \sum_{j=1}^n \tilde A_{ij} \mathbbm{1}{(e_j=2)} - \sum_{j=1}^n \tilde A_{ij} \mathbbm{1}{(e_j=1)}
$$
and $\tilde A_{ij} \sim \text{Poisson}(\tilde B_{c_ic_j} \exp(\tbz_{ij}^\top\pg^0))$, by Bernstein's inequality for Poisson variable (Lemma \ref{lem:bernin} and \ref{lem:bern}), we have (first conditional on $Z$, then unconditional)
\begin{equation}
\label{concen1}
\begin{aligned}
\Pr\left(\tilde b_{i2} - \tilde b_{i1} \geq t_1 + \sum_{j=1}^n \tilde B_{c_ic_j} \etzg \sigma_j\right) &\leq \exp\left(-\frac{t_1^2}{2(\sum_{j=1}^n \tilde B_{c_ic_j} \etzg + L_1t_1) } \right)\\
&\leq \exp\left(-\frac{t_1^2}{2(2a \chi + L_1t_1) } \right)
\end{aligned}
\end{equation}
where $\sigma_j = \mathbbm{1}{(e_j=2)} - \mathbbm{1}{(e_j=1)}$. 
Further, since $\mathbb E(\sum_{j=1}^n \tilde B_{c_ic_j} \etzg \sigma_j) = (a-b)(1-2\varsigma)L_2$,
by Hoeffding's inequality for bounded variables, 
\begin{equation}
\label{concen2}
\begin{aligned}
\Pr\left(\sum_{j=1}^n \tilde B_{c_ic_j} \etzg \sigma_j \geq t_2 + (a-b)(1-2\varsigma)L_2 \right) \leq 
\exp\left(-\frac{m^2t_2^2}{2na^2\chi^2 } \right) = \exp\left(-\frac{mt_2^2}{4a^2\chi^2} \right).
\end{aligned}
\end{equation}
For the second half of \eqref{xi}, again applying Hoeffding for bounded variables, we have
\begin{equation}
\label{concen3}
\begin{aligned}
&~\Pr\left( \frac{\hat a - \hat b}m (\hat \Xi_{i1}-\hat \Xi_{i2})\geq t_3 \right)\\ 
\leq&~ \Pr\left( \frac{\hat a - \hat b}m (\sum_{j=1}^m e^{\bz_{ij}^\top\pg^0} - \sum_{j=m+1}^n e^{\bz_{ij}^\top\pg^0} )\geq \frac{t_3}{2} \right) + \Pr\left( \frac{\hat a - \hat b}m (\sum_{j=1}^n |e^{\bz_{ij}^\top\pg^0}-e^{\bz_{ij}^\top\hat\pg}|)\geq \frac{t_3}{2}  \right)
\\
\leq&
\exp\left(- \frac{mt_3^2}{16\chi^2 (\hat a -\hat b)^2 } \right)
\end{aligned}
\end{equation} for $t_3 = \Omega(\hat a - \hat b) $, where the second probability on the second line is $0$ by the conditions stated at the beginning of subsection \ref{subsubsec:proof_plem}.
Define $p_i(r) = \Pr(- \tilde \xi_i \geq r)$. Combining \eqref{concen1},\eqref{concen2},\eqref{concen3} with $t_1 = t_2 = \frac{(a-b)(2\varsigma-1)L_2}3, t_3 = \frac{(a-b)(2\varsigma-1)L_2}3 \log \frac{\hat a}{\hat b}$, we get
\begin{equation}
\label{pi0}
\begin{aligned}
p_i(0) =& \Pr\left( (\tilde b_{i2} - \tilde b_{i1}) \log\frac{\hat a}{\hat b} + (\hat \Xi_{i1} - \hat \Xi_{i2})\frac{\hat a - \hat b}m \geq 0 \right) \\
\leq& \exp\left( -\frac{ (a-b)^2(2\varsigma-1)^2L_2^2 }{ 36a\chi + 6 L_1(a-b)(2\varsigma-1)L_2 } \right) + \exp\left( -\frac{ m (a-b)^2(2\varsigma-1)^2L_2^2 }{ 36a^2\chi^2 } \right)\\
&+ \exp\left( -\frac{ m (a-b)^2(2\varsigma-1)^2L_2^2 }{ 144(\hat a - \hat b)^2\chi^2 } (\log \frac{\hat a}{\hat b})^2 \right) \\
\leq& \exp\left( -\frac{L_2^2}{ 36(\chi + L_1L_2) }(2\varsigma-1)^2\frac{(a-b)^2}{a} \right),
\end{aligned}
\end{equation}
where the last inequality holds when $n$ is large enough, since the scale of $a,b$ is the same as the average degree $\varphi_n$ which should go to $\infty$ much slower than $n$ in a sparse network (even could be slower than $\log n$ for usual weak consistency). Denote by the bound on the last line of \eqref{pi0} by $\bar \mu(0)$. 

The same argument applies to all nodes in both communities so we have $p_i(0) \leq \bar\mu$ for $i=1,...,n$.
Note $\tilde M_n(\be) \leq \frac1n\sum_{i=1}^m \mathbbm{1}{(-\tilde\xi_i \geq 0)} + \frac1n\sum_{i=m+1}^n \mathbbm{1}{(\tilde\xi_i \geq 0)}$ which is a sum of independent Bernoulli random variables by the asymmetry. Thus, applying a Chernoff bound (Lemma 5 in \cite{amini2013pseudo}) on this sum we have for any $u>\frac1e$,
\begin{equation}
\label{BiLeChernoff}
\Pr\left( \tilde M_n(\be) \geq eu\bar\mu(0) \right) \leq \Pr\left( \frac1n \sum_{i=1}^m \mathbbm{1}{(-\tilde\xi_i \geq 0)}+ \frac1n \sum_{i=m+1}^n \mathbbm{1}{(\tilde\xi_i \geq 0)} \geq eu\bar\mu(0)  \right) \leq \exp(-en\bar\mu(0) u\log u).
\end{equation}
Finally by a union bound over $\be$'s such that the initial correct classification rate is $\varsigma$ we obtain
\begin{equation}
\label{final1}
P[\sup_{\be\in\mathcal E_n^{\varsigma}} \tilde M_n(\be) \geq e u_n \bar \mu(0) ] \leq \exp\{n[h(\varsigma) - e \bar \mu(0)u_n \log u_n + \kappa_{\varsigma}(n) ] \}
\end{equation}
where the constant term comes from the cardinality bound $|\mathcal E_n^{\varsigma}| = {m \choose \varsigma m}^2 \leq
\exp(2m[h(\varsigma) + \kappa_{\varsigma}(2m)] )$. Take $u_n\log u_n = \frac{2h(\varsigma) }{e\bar\mu(0)}$ in \eqref{final1} so that $eu_n\bar\mu(0)\to 0$ as $a,b\to\infty$ while the right hand side bound of \eqref{final1} goes to 0 as $n\to\infty$, we derive \eqref{thm::plem_directedun} and \eqref{thm::plem_directedMn} which completes the proof of Theorem \ref{thm::plem_directed}.
\end{proof}

\begin{proof}[Proof of Theorem \ref{thm::plem_undirect}]
Again without loss of generality we consider the $\varsigma>\frac12$ case and assume $c_i=1,i=1,...,m; c_i = 2 , i=m+1,...,n$.
We establish a coupling between the above undirected model and a directed model conditional on the symmetric covariate matrix $Z$. Let 
\begin{equation}
\tilde A_{ij} | Z \stackrel{indpt}{\sim} \text{Poisson}(\frac12 B_{c_ic_j} \etzg)
\label{pardirmod}
\end{equation}
where $\tbz_{ij} = \tilde \bz_{ji} = \bz_{ij} = \bz_{ji}$, and all $\tilde A_{ij}, i\neq j\in \{1,...,n\}$ are independent. Note \eqref{pardirmod} is different from the directed model \eqref{dirmod} considered in the previous subsection as $\tilde Z$ is now subject to the symmetry constraint. Let
\begin{equation}
A_{ij} = A_{ji} = \tilde A_{ij} + \tilde A_{ji},
\label{coupling}
\end{equation}
then we have  
\begin{align*}
A_{ij}|Z \stackrel{indpt}\sim \text{Poisson}(B_{c_ic_j} \exp(\bz_{ij}^\top\pg^0))\ \text{and} \ A_{ij} = A_{ji}, \bz_{ij} = \bz_{ji} \ \text{for} \ i< j
\end{align*}
which is exactly model \eqref{undirmod}. In other words,
\eqref{coupling} defines a coupling between the directed model \eqref{pardirmod} and undirected model \eqref{undirmod}.

Consider the classification by E-step for a node $i$ whose $c_i = 1$. $\hat c_i(\be) = 1 \Leftrightarrow$
\begin{equation}
\label{xi_undir}
\xi_i := ( b_{i1} - b_{i2}) \log\frac{\hat a}{\hat b} + (\hat \Xi_{i1} - \hat \Xi_{i2})\frac{\hat b - \hat a}m > 0,
\end{equation}
where $ b_{ik}  = \sum_{j=1}^n  A_{ij} \mathbbm{1}{(e_j=k)} $ and $\hat \Xi_{ik} = \sum_{j=1}^n e^{ \bz_{ij}^\top \hat\pg }  \mathbbm{1}{(e_j = k)}$. Using relationship \eqref{coupling} we decompose $\xi_i$ as 
\begin{equation}
\label{xi_decomp}
\xi_i = \tilde \xi_{i*} + \tilde \xi_{*i} 
\end{equation}
where
\begin{align*}
\tilde \xi_{i*} &:= (\tilde b_{i1} - \tilde b_{i2}) \log\frac{\hat a}{\hat b} + (\hat \Xi_{i1} - \hat \Xi_{i2})\frac{\hat b - \hat a}{2m}; \\
\tilde b_{ik} &= \sum_{j=1}^n  \tilde A_{ij} \mathbbm{1}{(e_j=k)}, k = 1,2; \\
\tilde \xi_{*i} &:= (\tilde b_{1i} - \tilde b_{2i}) \log\frac{\hat a}{\hat b} + (\hat \Xi_{1i} - \hat \Xi_{2i})\frac{\hat b - \hat a}{2m}; \\
\tilde b_{ki} &:= \sum_{j=1}^n  \tilde A_{ji} \mathbbm{1}{(e_j=k)} , k= 1,2;\\
\hat \Xi_{ki} &:= \sum_{j=1}^n e^{\bz_{ji}^\top \hat\pg} \mathbbm{1}{(e_j=k)}, k=1,2.
\end{align*}
Note that $\hat \Xi_{ik} = \hat \Xi_{ki}$ since $Z$ is symmetric, so we will not use the notation $\hat \Xi_{ki}$ and will only use $\hat \Xi_{ik}$ below for clarity.

Equation \eqref{xi_decomp} indicates $\mathbbm{1}{(-\xi_i \geq 0)} \leq \mathbbm{1}{(-\tilde\xi_{i*}\geq 0)} + \mathbbm{1}{(-\tilde\xi_{*i}\geq 0)}$. Thus, in order to bound the misclassification rate $M_n(\be) \leq \frac1n\sum_{i=1}^m \mathbbm{1}{(-\xi_i \geq 0)} + \frac1n\sum_{i=m+1}^n \mathbbm{1}{(\xi_i \geq 0)}$ it suffices to bound 
\begin{align*}
\tilde M_{n*}(\be) &:= \frac1n \sum_{i=1}^m \mathbbm{1}{(-\tilde\xi_{i*}\geq 0)} + \frac1n \sum_{i=m+1}^n \mathbbm{1}{(\tilde\xi_{i*}\geq 0)} \quad \text{and} \\
\tilde M_{*n}(\be) &:= \frac1n \sum_{i=1}^m \mathbbm{1}{(-\tilde\xi_{*i}\geq 0)} + \frac1n \sum_{i=m+1}^n \mathbbm{1}{(\tilde\xi_{*i}\geq 0)}.
\end{align*}
We went through all those coupling and decomposition arguments because $\xi_i$'s are not independent due to the symmetry constraint
but conditional on $Z$ the $\tilde \xi_{i*}$'s are independent (and so are $\tilde \xi_{*i}$'s). Thus, we can derive concentration bounds on $\tilde M_{n*}(\be)$ and $\tilde M_{*n}(\be)$ respectively (conditional on $Z$) and unite them together to achieve our final result.
Now we repeat the steps in the proof of theorem \ref{thm::plem_directed}, but with a modification by conditional on $Z$ arguments to adapt to the symmetric $Z$.

Same as in \eqref{concen1} we have conditional on any given $Z$
\begin{equation}
\label{concen21}
\begin{aligned}
\Pr\left( \left. \tilde b_{i2} - \tilde b_{i1} \geq t_1' + \sum_{j=1}^n \frac12 B_{c_ic_j} \ezg \sigma_j\right|Z \right) &\leq \exp\left(-\frac{(t_1')^2}{2(\sum_{j=1}^n \frac12 B_{c_ic_j} \ezg + L_1t_1') } \right)\\
&\leq \exp\left(-\frac{(t_1')^2}{2(a \chi + L_1t_1') } \right).
\end{aligned}
\end{equation}
Consider the set of $Z$'s that belong to the compliments of events in \eqref{concen2} and \eqref{concen3}:
\begin{equation}
\label{Di*}
D_{i*} := \left\{  Z_{i\cdot} \left| \sum_{j=1}^n \frac12 B_{c_ic_j} \ezg \sigma_j < t_2' + \frac{(a-b)(1-2\varsigma)L_2}2;\ \frac{\hat a - \hat b}{2m} (\hat \Xi_{i1}-\hat \Xi_{i2}) <t_3'
\right. \right\}
\end{equation}
where $Z_{i\cdot} = \{\bz_{ij}\}_{j=1}^n$ and $t_2' = \frac{(a-b)(2\varsigma-1)L_2}6, t_3' = \frac{(a-b)(2\varsigma-1)L_2}6 \log \frac{\hat a}{\hat b}$. Then let $t_1' = \frac{(a-b)(2\varsigma-1)L_2}6$ we have 
\begin{equation}
\label{concen212}
\begin{aligned}
\Pr\left( \left. \tilde b_{i2} - \tilde b_{i1} \geq t_1' + \sum_{j=1}^n \frac12 B_{c_ic_j} \ezg \sigma_j\right|Z\in D_{i*} \right) 
&\leq \exp\left(-\frac{(t_1')^2}{2(a \chi + L_1t_1') } \right) \\
&= \exp\left( -\frac{ (a-b)^2(2\varsigma-1)^2L_2^2 }{ 72a\chi + 12 L_1(a-b)(2\varsigma-1)L_2 } \right) \\
&\leq \exp\left( -\frac{L_2^2}{ 72(\chi + L_1L_2) }(2\varsigma-1)^2\frac{(a-b)^2}{a} \right)
\end{aligned}
\end{equation}
Denote by the bound on the last line of \eqref{concen212} by $\bar \mu_1(0)$. By the definition \eqref{Di*}, 
\begin{align*}
\Pr\left(\left. -\tilde\xi_{i*} \geq 0\right| Z\in D_{i*} \right) \leq
\Pr\left( \left. \tilde b_{i2} - \tilde b_{i1} \geq t_1' + \sum_{j=1}^n \frac12 B_{c_ic_j} \ezg \sigma_j\right|Z\in D_{i*} \right)
\leq \bar \mu_1(0). 
\end{align*}
Conditional on $Z$, $\tilde\xi_{i*}$'s are independent, so same as in \eqref{BiLeChernoff} but conditional on a fixed $Z \in \cap_{i=1}^n D_{i*}$ ($D_{i*}$'s for $c_i=2$ are correspondingly defined) we obtain
\begin{equation}
\label{BiLeChernoff2}
\Pr\left(\left. \tilde M_{n*}(\be) \geq eu\bar\mu_1(0) \right| Z \in \cap_{i=1}^n D_{i*} \right) \leq \exp(-en\bar\mu_1(0) u\log u).
\end{equation}
Note that \eqref{concen2} and \eqref{concen3} still hold as long as $\bz_{ij},j=1,...,n$ are independent when fixing any $i$, we can upper bound the probability of the event $\left(\cap_{i=1}^n D_{i*}\right)^c$ by (by canceling out a $\frac12$; $D_{i*}^c$ corresponds exactly to the events in \eqref{concen2} and \eqref{concen3})
\begin{equation}
\label{PofDi*}
\begin{aligned}
\Pr\left(\left(\cap_{i=1}^n D_{i*}\right)^c\right) &\leq \sum_{i=1}^n \Pr(D_{i*}^c)\\ &\leq
\sum_{i=1}^n \left[ \exp\left(-\frac{mt_2^2}{4a^2\chi^2} \right) + \exp\left(- \frac{mt_3^2}{16\chi^2 (\hat a -\hat b)^2 } \right) \right] \\ 
&= \exp\left(-\frac{mt_2^2}{4a^2\chi^2} + \log n \right) + \exp\left(- \frac{mt_3^2}{16\chi^2 (\hat a -\hat b)^2 } + \log n \right)\\
&\leq \exp(-2md_0h(\varsigma)+ \log(2n))
\end{aligned}
\end{equation}
where $t_2,t_3$ are defined as in the proof of Theorem \ref{thm::plem_directed}, and the last inequality comes from Condition \ref{cond2}.
Thus, combining \eqref{BiLeChernoff2} and \eqref{PofDi*}, and by applying a same argument on $\tilde M_{n*}(\be)$, we have
\begin{align*}
\Pr(M_n(\be) \geq 2eu\bar\mu_1(0)) \leq& \Pr(\tilde M_{n*}(\be)\geq eu\bar\mu_1(0)) + \Pr(\tilde M_{*n}(\be)\geq eu\bar\mu_1(0)) \\
\leq& \Pr\left(\left. \tilde M_{n*}(\be) \geq eu\bar\mu_1(0) \right| Z \in \cap_{i=1}^n D_{i*} \right) + \Pr\left(\left(\cap_{i=1}^n D_{i*}\right)^c\right)\\
&+ \Pr\left(\left. \tilde M_{*n}(\be) \geq eu\bar\mu_1(0) \right| Z \in \cap_{i=1}^n D_{*i} \right) + \Pr\left(\left(\cap_{i=1}^n D_{*i}\right)^c\right) \\
\leq& 2\left[ \exp(-en\bar\mu_1(0) u\log u) + \exp(-2md_0h(\varsigma)+ \log(2n)) \right]
\end{align*}

Finally, by applying the same union bound as in \eqref{final1} and again taking $u_n\log u_n = \frac{2h(\varsigma) }{e\bar\mu(0)}$(which results in \eqref{thm::plem_undirectun}), we get
\begin{equation}
\label{final2} \mathbb
P\left(\sup_{\be\in\mathcal E_n^{\varsigma}} M_n(\be) \geq \frac{4h(\varsigma)}{\log u_n} \right) 
\leq 2 \exp\{n[-h(\varsigma) + \kappa_{\varsigma}(n) ] \} + 2\exp\{n[-(d_0-1)h(\varsigma) + \kappa_{\varsigma}(n)] + \log(2n)\}
\end{equation}
which is exactly the result of \eqref{thm::plem_undirectMn}.
\end{proof}

\subsection{Proof of Theorem~\ref{THM:SC}}\label{sec:supple:spectralbd}

From Theorem~\ref{THM:ASY}, it is not hard to show the following error bound for $\hat{\boldsymbol{\gamma}}$.

\begin{lemma}\label{lem:eta}
    For any constant $\eta>0$, $\exists$ positive constants $C_\eta$ and $v_\eta$ s.t., $\emph{Pr}(\sqrt{n\rho_n}\|\boldsymbol{\gamma}^0-\hat{\boldsymbol{\gamma}}\|_\infty>\eta)<C_\eta\exp(-v_\eta n)$.
\end{lemma}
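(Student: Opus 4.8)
The statement is billed as a corollary of Theorem~\ref{THM:ASY}, but an asymptotic normality statement cannot by itself deliver an exponential tail bound, so my plan is to re-run the proof of Theorems~\ref{CGM}--\ref{THM:ASY} with the Bernstein-type concentration inequalities (Lemmas~\ref{lem:bernin} and~\ref{lem:bern}) replacing the law of large numbers and central limit theorem used there. The starting point is the exact identity already derived in that proof, namely $\hat{\boldsymbol{\gamma}}-\boldsymbol{\gamma}^0 = -\big[\ell''_{\mathbf{e}}(\bar{\boldsymbol{\gamma}})/(N_n\rho_n)\big]^{-1}\big[\ell'_{\mathbf{e}}(\boldsymbol{\gamma}^0)/(N_n\rho_n)\big]$, with $\bar{\boldsymbol{\gamma}}$ on the segment between $\boldsymbol{\gamma}^0$ and $\hat{\boldsymbol{\gamma}}$. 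Writing $\varphi_n=n\rho_n$, the target event $\{\sqrt{\varphi_n}\|\hat{\boldsymbol{\gamma}}-\boldsymbol{\gamma}^0\|_\infty>\eta\}$ is the same as $\{\|\hat{\boldsymbol{\gamma}}-\boldsymbol{\gamma}^0\|_\infty>\eta/\sqrt{\varphi_n}\}$; since $\|\hat{\boldsymbol{\gamma}}-\boldsymbol{\gamma}^0\|_\infty$ has natural scale $1/\sqrt{N_n\rho_n}$, the requested threshold sits at order $\sqrt{N_n/n}\asymp\sqrt{n}$ standard deviations, which is precisely what produces the $\exp(-v_\eta n)$ rate through a Gaussian-type tail.

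First I would control the Hessian factor. By Lemma~\ref{lem:phi}, outside an event of probability at most $C_\phi\exp(-v_\phi N_n\rho_n)\le C_\phi\exp(-v_\phi n)$ we have $\|\bar{\boldsymbol{\gamma}}-\boldsymbol{\gamma}^0\|_\infty\le\phi$ for a fixed small $\phi$. On this event $\ell''_{\mathbf{e}}(\bar{\boldsymbol{\gamma}})/(N_n\rho_n)$ concentrates around the negative-definite limit $-\Sigma_\infty(\bar{\boldsymbol{\gamma}})$: its entries are built from the bounded empirical quantities $\hat{\theta}^{kl},\hat{\boldsymbol{\mu}}^{kl},\hat{\Sigma}^{kl}$ and from $O_{kl}(\mathbf{e})/(N_n\rho_n)$, the latter concentrating via the $X(\mathbf{e})$ bound of Lemma~\ref{lem:con} at the single fixed $\mathbf{e}$, so no union bound is needed. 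Together with Condition~\ref{cond:zpd}, this yields with probability $1-C\exp(-vn)$ an operator-norm bound $\|[\ell''_{\mathbf{e}}(\bar{\boldsymbol{\gamma}})/(N_n\rho_n)]^{-1}\|\le C_H$, whence on the good event $\|\hat{\boldsymbol{\gamma}}-\boldsymbol{\gamma}^0\|_\infty\le C_H\|\ell'_{\mathbf{e}}(\boldsymbol{\gamma}^0)\|_\infty/(N_n\rho_n)$ (dimensions $p,K$ fixed, so all norms are equivalent up to constants). It then suffices to bound $\mathrm{Pr}\big(\|\ell'_{\mathbf{e}}(\boldsymbol{\gamma}^0)\|_\infty>x\big)$ with $x:=(\eta/C_H)N_n\sqrt{\rho_n/n}\asymp n^{3/2}\sqrt{\rho_n}$.

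Next I would split $\ell'_{\mathbf{e}}(\boldsymbol{\gamma}^0)=\big[\ell'_{\mathbf{e}}(\boldsymbol{\gamma}^0)-\boldsymbol{\mu}_\ell(\mathbf{e},\boldsymbol{\gamma}^0)\big]+\boldsymbol{\mu}_\ell(\mathbf{e},\boldsymbol{\gamma}^0)$, the first piece being the conditional centering given $Z$. Conditionally on $Z$ the centered part equals $\sum_{i<j}(A_{ij}-\lambda_{ij})\big[\mathbf{z}_{ij}-\hat{\boldsymbol{\mu}}^{e_ie_j}/\hat{\theta}^{e_ie_j}\big]$, a sum of $N_n$ independent coordinates with bounded multipliers and centered Poisson fluctuations of variance $\lambda_{ij}=O(\rho_n)$, so total conditional variance $M\asymp N_n\rho_n$ and Bernstein parameter $\bar L$ by Lemmas~\ref{lem:bern} and~\ref{lem:bernin}. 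Applying Lemma~\ref{lem:bernin} coordinatewise at level $x$ gives a tail $\exp\!\big(-x^2/(2(M+x\bar L))\big)$; since $x^2/M\asymp N_n/n\asymp n$ and $x\bar L/M\asymp\varphi_n^{-1/2}\to0$ under $\varphi_n\to\infty$, this is $\exp(-\Theta(\eta^2 n))$, the claimed rate. It then remains to show the conditional mean $\boldsymbol{\mu}_\ell(\mathbf{e},\boldsymbol{\gamma}^0)$ is of strictly smaller order than $x$, and here I would exploit the exact within-block cancellation: if $\bar B_{c_ic_j}$ were constant across each $\mathbf{e}$-block the mean would vanish identically, so $\boldsymbol{\mu}_\ell$ reduces to contributions driven either by the mean-zero brackets $\mathbf{z}_{ij}e^{\mathbf{z}_{ij}^T\boldsymbol{\gamma}^0}-(\boldsymbol{\mu}/\theta)e^{\mathbf{z}_{ij}^T\boldsymbol{\gamma}^0}$ (a $\rho_n$-weighted Hoeffding sum of order $n\rho_n$) or by the block deviations $\hat{\boldsymbol{\mu}}^{kl}/\hat{\theta}^{kl}-\boldsymbol{\mu}/\theta=O(1/n)$ (again of order $n\rho_n$). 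Since $n\rho_n/x\asymp\sqrt{\rho_n/n}\to0$, the mean is negligible against the threshold with exponentially high probability, and combining the three estimates with a union bound over the $p$ coordinates and the exceptional Hessian event completes the proof.

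The main obstacle, and the only step that goes beyond recycling the proof of Theorem~\ref{THM:ASY}, is the sharp control of $\boldsymbol{\mu}_\ell(\mathbf{e},\boldsymbol{\gamma}^0)$: the crude estimate from that proof yields only $\boldsymbol{\mu}_\ell=o(N_n\rho_n)$, whereas here it must be pinned at $O(n\rho_n)=o(x)$, which forces one to make the within-block cancellation explicit rather than treating $\hat{\boldsymbol{\mu}}^{kl}/\hat{\theta}^{kl}$ as if it had already converged to $\boldsymbol{\mu}/\theta$.
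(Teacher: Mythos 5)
Your proposal is correct in outline, and it does strictly more than the paper does: the paper offers no proof of this lemma at all, asserting only that it is ``a direct corollary of Theorem~\ref{THM:ASY}.'' You are right that this assertion cannot stand on its own---convergence in distribution gives no non-asymptotic tail control, let alone an $\exp(-v_\eta n)$ bound at a threshold sitting $\sqrt{n}$ standard deviations out---so a Bernstein-based re-run of the argument behind Theorems~\ref{CGM} and~\ref{THM:ASY} is exactly what is needed, and your bookkeeping is right: the score threshold $x\asymp N_n\sqrt{\rho_n/n}$ against conditional variance $M\asymp N_n\rho_n$ gives $x^2/M\asymp n$ with the linear Bernstein term negligible once $\varphi_n\to\infty$, and the exact within-block identity $\sum_{(i,j)\in s_{\mathbf{e}}(k,l)}e^{\mathbf{z}_{ij}^T\boldsymbol{\gamma}^0}\bigl[\mathbf{z}_{ij}-\hat{\boldsymbol{\mu}}^{kl}/\hat{\theta}^{kl}\bigr]=0$ is precisely the cancellation that makes the conditional mean of the score negligible against $x$; this last point is the one genuinely new ingredient relative to the published argument, and you identify it correctly. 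Two caveats for a complete write-up. First, your localization of $\bar{\boldsymbol{\gamma}}$ leans on Lemma~\ref{lem:phi}, which the paper also states without proof as a ``direct corollary''; since that lemma concerns a fixed threshold $\phi$ it is strictly weaker and can be proved independently (uniform Bernstein concentration of $\ell_{\mathbf{e}}(\cdot)/(N_n\rho_n)$ over a compact set together with the identifiability supplied by Condition~\ref{cond:zpd}), but that step must be written out or the chain of ``corollaries'' becomes circular. Second, the orders you quote for the two pieces of $\boldsymbol{\mu}_\ell(\mathbf{e},\boldsymbol{\gamma}^0)$---order $n\rho_n$, coming from $O(1/n)$ block deviations---are the in-probability orders; at the $1-\exp(-v n)$ confidence level Hoeffding over the $O(n^2)$ pairs only controls deviations of order $n^{3/2}$, so both pieces inflate to $O(n^{3/2}\rho_n)$. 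This is still $o(x)=o(n^{3/2}\sqrt{\rho_n})$ because $\rho_n\to0$, so the conclusion survives, but the comparison should be carried out at that scale rather than at the typical one.
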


To prove Theorem \ref{THM:SC}, we first state a concentration result for directed, $\pg^0$ adjusted adjacency matrix (Theorem \ref{newthm21}), then derive Theorem \ref{THM:SC} based on Theorem \ref{newthm21}, and finally give a proof of Theorem \ref{newthm21}.

\begin{thm}[Concentraion for directed, $\pg^0$ adjusted adjacency matrix; A covariate adjusted, Poisson variant of Theorem 2.1 of \cite{le2017concentration}] \label{newthm21}
Let $A$ be the adjacency matrix generated by the directed PCABM $(M, B, Z, \pg^0)$.
Assume Condition \ref{cond:zbd} holds; and $\Bm\leq C_{\bar B}$.
Also let 
\begin{equation} 
\label{def_d}
d = \max_{i,j} n P_{ij} = \max_{i,j} n B_{c_ic_j}.
\end{equation}
Consider the adjusted adjacency matrix $A^{0}$  derived from $\pg^0$, i.e. $A^0_{ij} = A_{ij} / \exp(\bz^\top_{ij} \pg^0)$. For any $r>1$, the following holds with probability at least $1- 3n^{-r}$: Consider any subset consisting of at most $10n/d$ vertices, and reduce the weights of the edges incident to those vertices in an arbitrary way.
Denote the adjacency matrix of the new (weighted) graph by $A^{0R}$;
let $d^{(R)} $ be the maximal row and column $l_1$ norm of $A^{0R}$. Then $A^{0R}$ satisfies
\begin{equation} \label{specbd:A'R_assym}
\|A^{0R} - P\| \leq Cr\sqrt{\xi^3} (\sqrt{d} + \sqrt{d^{(R)}})
\end{equation}
where $C$ is a constant that does not depend on $\xi$. Moreover, the bound \eqref{specbd:A'R_assym} still holds when $d^{(R)}$ is the maximal row and column $l_2$ norm of $A^{0R}$.
\end{thm}
 In this result and in the rest of this section, $C$ denotes an absolute constant whose value may be different from line to line.
\begin{proof}[Proof of Theorem \ref{THM:SC}]
By Lemma \ref{lem:eta}, 
\begin{equation}
\label{gammaconcen}
\|\pg^0-\hat\pg\|_{\infty} \leq \eta / \sqrt{n\rho_n}
\end{equation}
 with probability at least $1- c_{\eta} \exp(-v_{\eta} n)$. All of our arguments in this proof are conditioned on event \eqref{gammaconcen}.

We first prove the result for directed case, i.e. $A_{ij} \sim \text{Poisson}(B_{c_ic_j} \exp(\bz_{ij}^\top \pg^0))$ independent for all $i\neq j$, and then use a coupling argument to extend the result to undirected case.

In the directed setting, we apply Thoerem \ref{newthm21} on $A^0$, with the ``arbitrary set of vertices incident to reweighted edges'' chosen as the set $\mathcal{I}^R = \{i \in [n] : d_i' \geq \lambda^R d'  \} $, and  edges incident to those vertices reweighted by $A_{ij}^{0R} = A_{ij}^0 \sqrt{\lambda_i\lambda_j}, \lambda_i = \min\{\lambda^R d' /d_i', 1 \} $. Thus the $A^{0R} $ in Theorem \ref{newthm21} is the same as the $A^{0R} $ as in Algorithm \ref{alg:scwa}.
To show this choice is valid, we need to verify the condition that $|\mathcal{I}^R| \leq 10n /d $. By Lemma \ref{newlem35}, with probability $1-n^{-r}$ there are at most $10n/d $ rows in $A^{0} $ with $l_1$ norm $\geq 0.8\xi^2 r d $. Then  by \eqref{gammaconcen}, there are at most $10n/d $ rows in $A'$ with $l_1$ norm $\geq 0.8 \xi^2 r d \exp(p\zeta \eta/\sqrt{n\rho_n})  $, i.e., there are at most $10n/d$ nodes $i$ such that $d_i' \geq 0.8\xi^2 r d \exp(p\zeta \eta/\sqrt{n\rho_n}) $.
On the other hand, by a Bernstein's inequality (Lemma \ref{lem:bernin}), we have $d^{(0)} \geq c_r(c_{\bar B}/C_{\bar B} ) d $ with probability $1-n^{-r} $, where $d^{(0)} $ is the average degree of $A^0$, and $c_r$ is a constant depending on $r$. Combining that with \eqref{gammaconcen}, we have $d' \geq \exp(-p\zeta \eta/\sqrt{n\rho_n}) c_r(c_{\bar B}/C_{\bar B} ) d  $. Thus, by choosing $\lambda^R = 0.8\xi^2 r \exp(2p\zeta \eta/\sqrt{n\rho_n}) C_{\bar B} / (c_rc_{\bar B}) $, we have 
$$
|\mathcal{I}^R| \leq |\{ i:d_i' \geq \lambda^R \exp(-p\zeta \eta/\sqrt{n\rho_n}) c_r(c_{\bar B}/C_{\bar B} ) d   \}  | = |\{i: d_i' \geq 0.8\xi^2 r d \exp(p\zeta \eta/\sqrt{n\rho_n}) \}|\leq 10n/d.
$$

Now by Theorem \ref{newthm21} it suffices to bound $\|A'^R - A^{0R} \|$. Let $w_{ij}=\sqrt{\lambda_i\lambda_j} \in[0,1]$ be the weight imposed on edge $A_{ij}$.

For any $\bx \in \mathbb R^n$ with $\|\bx\|_2=1$, 
\begin{equation}
\label{diffspec}
\begin{aligned}
| \bx^\top (A'^R-A^{0R}) \bx | =& \left| \sum_{i,j} x_ix_j\left( A_{ij} / \exp(\bz_{ij}^\top\hat\pg) - A_{ij} / \exp(\bz_{ij}^\top\pg^0) \right) w_{ij} \right| \\
\leq& \sum_{i,j} |x_i||x_j| \left| A^{0R}_{ij} \left[ \exp\{\bz_{ij}^\top (\pg^0-\hat\pg) \}-1 \right] \right| \\
<& \sum_{i,j} 2|x_i||x_j| A_{ij}^{0R} \left| \bz^\top_{ij} (\pg^0-\hat\pg) \right| \\
\leq& \frac{2}{\sqrt{n\rho_n}} \sum_{i,j} |x_i||x_j|A_{ij}^{0R}\\
\leq& \frac{2}{\sqrt{n\rho_n}} ||A^{0R}||
\end{aligned}
\end{equation}  
in which the third line is due to $|e^t-1|<2|t|$ when $|t|<1$, and the fourth line is due to \eqref{gammaconcen} and $\eta = (p\zeta)^{-1} $. From \eqref{diffspec} we get $ \| A'^R-A^{0R} \| \leq 2\|A^{0R}\|/\sqrt{n\rho_n}$. Furthermore, we have 
\begin{equation}
\label{Pspec}
\|P\|\leq \|P\|_F \leq \sqrt{n^2\rho_n^2 \Bm^2}\leq n\rho_n C_{\bar B}.
\end{equation} 
Combining \eqref{Pspec} with Theorem \ref{newthm21}, we could bound $\|A^{0R}\|$ by
$\|A^{0R}\| \leq \|A^{0R}-P\| + \|P\| \leq Cr\sqrt{\xi^3}(\sqrt{d} + \sqrt{d^{(R)}}) + n\rho_n C_{\bar B} $. Thus, 
\begin{equation}
\begin{aligned}
\|A'^R - P\|\leq&
\| A'^R-A^{0R} \| + \|A^{0R}-P\| \\
\leq&  2\|A^{0R}\|/\sqrt{n\rho_n} + \|A^{0R}-P\| \\
\leq& 2C_{\bar B} \sqrt{n\rho_n} + Cr\sqrt{\xi^3} (1+\frac1{\sqrt{n\rho_n}})(\sqrt{d}+\sqrt{d^{(R)}}).
\end{aligned} 
\label{A'r_dir_concen}
\end{equation}
It is not hard to see that after performing the previously stated reduce-weight regularization on $A^0$, the resulting $A^{0R}$ has row and column $l_2$ norms bounded by $d$ up to a constant with probability $1-n^{-r}$:
\begin{align*}
\sum_{j}{ A_{ij}^0}^2 \lambda_i\lambda_j \leq&  (\lambda^R)^2 \sum_{j}{ A_{ij}^0}^2 \cdot \frac{d'}{d_i'} \frac{d'}{d_j'}\\
\leq& (\lambda^R)^2e^{\frac{2p\eta\zeta}{\sqrt{n\rho_n}}} \sum_{j}{ A_{ij}^0}^2 \cdot \frac{d^{(0)}}{d_i^{(0)}} \frac{d^{(0)}}{d_j^{(0)}} \\
\leq& (\lambda^R)^2e^{\frac{2p\eta\zeta}{\sqrt{n\rho_n}}} {d^{(0)}}^2 \sum_j A_{ij}^0 / d_i^{(0)} \\
\leq& C d^2
\end{align*}
where $d_i^{(0)}$ and $d^{(0)}$ are $i$th node degree and average degree of $A^0$; in the last step a Poisson concentration (Lemma \ref{lem:bernin}) is used to obtain ${d^{(0)}}^2 \leq C_r d^2 $ with probability $1-n^{-r}$ for some constant $C_r$ depending on $r$.
Hence the concentration \eqref{A'r_dir_concen} reads $\|A'^R - P\|\lesssim\sqrt{\varphi_n}$.

Now we bridge our result for the directed case to the undirected case with a coupling approach \citep{amini2013pseudo}. Consider the directed model
\begin{align*}
\label{dirmodel}
\tilde A_{ij}|Z \stackrel{indpt}{\sim} \text{Poisson}(\frac12B_{c_ic_j}\exp(\bz_{ij}^\top\pg^0)) \quad \text{for} \quad \forall i\neq j.
\end{align*}
Now let $A_{ij} = \tilde A_{ij} + \tilde A_{ji}$. Then the resulting adjacency matrix $A = \{A_{ij}\} $ satisfies:
(1) $A_{ij} = A_{ji}$ for all $i<j$; (2) $A_{ij},i<j$ are all independent; and (3) $A_{ij}|Z {\sim} \text{Poisson}(B_{c_ic_j}\exp(\bz_{ij}^\top\pg^0))$. Thus, the $\{A_{ij}\}$ defined this way follows our original undirected PCABM. 
Denote by $\tilde A'^R$ the adjusted version of $\tilde A$ on the new weighted graph. The result for directed case gives $\|\tilde A'^R - P/2\| \lesssim 
\sqrt{\varphi_n} $, so that a triangle inequality $\|A'^R - P\| \leq \|\tilde A'^R - P/2\| + \|\tilde A'^{R\top} - P/2\|$ proves the statement of Theorem \ref{THM:SC} for the undirected case.
\end{proof}

\subsubsection{Proof of Theorem \ref{newthm21}}

\begin{thm}[Graph decomposition for the covariate adjusted Poisson adjacency matrix; counterpart of Theorem 2.6 in \cite{le2017concentration}] \label{newthm26}
 Consider the adjusted adjacency matrix $A^{0}$  derived from $\pg^0$ in Theorem \ref{newthm21}. Under the same assumptions of Theorem \ref{newthm21}, for any $r>1$, the following holds with probability at least $1- 3n^{-r}$: One can decompose the set of edges $[n] \times [n]$ into three classes $\mathcal N, \mathcal R$ and $\mcC$ so that the following properties are satisfied for the adjusted adjacency matrix $A^{0}$:
\begin{enumerate}
\item The graph concentrates on $\mcN$, namely $\|(A^{0} - \mathbb E A^{0})_{\mcN} \| \leq Cr\sqrt{\xi^3 d}$.
\item Each row of $A^0_{\mcR}$ and each column of $A^0_{\mcC}$ has $l_1$ norm $\leq 32r\xi^2$.
\item $\mcR$  intersects at most $\sqrt2n/(\sqrt2-1)d$ columns, and $\mcC$ intersects at most $\sqrt2n/(\sqrt2-1)d$ rows of $[n] \times [n]$.
\end{enumerate}
Moreover, the same result also holds for the second property being replaced by ``each row of $A^{0}_{\mcR}$, each column of $A^{0}_{\mcC}$ has $l_0$ norm $\leq 32r\xi$.''
\end{thm}
An illustration of the graph decomposition in Theorem \ref{newthm26} is given in Figure \ref{decomp26fig} (the picture comes from \cite{le2017concentration}). We put those well-concentrated edges into $\mcN$, while the number of the not well-behaved, high-degree node attached edges are bounded as $\mcR$ and $\mcC$ have bounded column and row norms.

\begin{figure}[H]
\centerline{
\includegraphics[ width=1.5 in,]{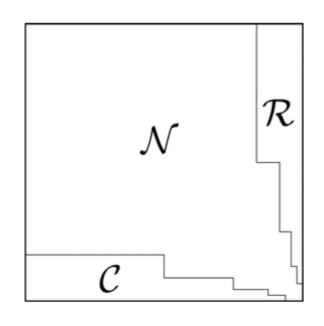}
}
\caption{An illustration of the graph decomposition in Theorem \ref{newthm26}.}
\label{decomp26fig}
\end{figure}

Several lemmas are established as steps for 
the proof of the decomposition Theorem \ref{newthm26}. 

\begin{lemma}[Concentration in $l_\infty\to l_2$ norm]
\label{newlem33}
Let $1 \leq m \leq n$ and $\alpha \geq m/n$. Then for $r \geq 1$ the following holds with probability at least $1 - n^{-r} $. Consider a block $I \times J$ of size $m \times m$. Let $I'$ be the set of indices of the rows of $A^{0}_{I\times J}$ whose $l_1$ norm $\leq \alpha d$. Then
\begin{align*}
\| (A^{0}-\mathbb E A^{0})_{I'\times J} \|_{\infty\to2} \leq Cg(\xi) \sqrt{ \alpha d m r \log(en/m) }
\end{align*}
where $g(\xi) = \sqrt{(1+2\bar L)\xi}$ is a constant that is a function of $\xi$ with $\bar L=2(e^2+1)$.
\end{lemma}
\begin{proof}[Proof of Lemma \ref{newlem33}]
Note that centralized Poisson random variables satisfy the Bernstein condition $\mathbb E |A_{ij}-\mathbb E A_{ij}|^k \leq \frac12 \mathbb E (A_{ij}-\mathbb E A_{ij})^2 L^{k-2} k!$ (Lemma \ref{lem:bern}). By the scale invariance of the Bernstein condition, the adjusted adjacency matrix elements also satisfy the Bernstein condition with the constant $L$ replaced by $L\xi$. Thus, defining 
\begin{align*}
x\in\{1,-1\}^m, \quad X_i := \sum_{j\in J} (A_{ij}^0- \mathbb E A_{ij}^0)x_j, \quad \eta_i:= \mathbbm{1}(\sum_{j\in J} A_{ij}^0 \leq \alpha d), 
\end{align*}
one can recover equations (3.5) and (3.6) in the proof of Lemma 3.3 of \cite{le2017concentration} by
\begin{align*}
\Pr(|X_i\eta_i|>tm) \leq 2\exp\left( \frac{-mt^2/2 }{\xi d/n + \bar L \xi t } \right) \leq 2\exp\left( \frac{-m^2t^2/2 }{\xi \alpha d + 2\bar L\xi\alpha d } \right)
\end{align*}
where the last inequality is because $\alpha \geq m/n$ and
\begin{align*}
|X_i\eta_i| \leq \sum_{j\in J} A_{ij}^0 + \mathbb E A^{0}_{ij} \leq \alpha d + m\cdot \frac dn \leq 2\alpha d.
\end{align*}
Thus, $|X_i\eta_i|$ has sub-gaussian norm at most $\sqrt{(1+2\bar L)\xi\alpha d}$, and the rest of the proof follows from the proof of Lemma 3.3 in \cite{le2017concentration}.
\end{proof}

Combining Lemma \ref{newlem33} with Theorem 3.2 (Grothendieck-Pietsch factorization, sub-matrix version) of \cite{le2017concentration}, we immediately get the following Lemma \ref{newlem34}.

\begin{lemma}[Concentration in spectral norm]
\label{newlem34}
Let $1 \leq m \leq n$ and $\alpha \geq m/n$. Then for $r \geq 1$ the following holds with probability at least $1 - n^{-r} $. Consider a block $I \times J$ of size $m \times m$. Let $I'$ be the set of indices of the rows of $A^{0}_{I\times J}$ whose $l_1$ norm $\leq \alpha d$. Then there exists a subset $J' \subseteq J$ of at least $3m/4$ columns such that 
\begin{align*}
\| (A^{0}-\mathbb E A^{0})_{I'\times J'} \| \leq C \sqrt{\xi \alpha d  r \log(en/m) }.
\end{align*}
\end{lemma}

The following Lemma \ref{newlem35} shows that most rows satisfy the condition for $I'$ in Lemma \ref{newlem34}. The proof of this Lemma involves three steps: first bound the probability of each row having large $l_1$ norm; then bound the number of high $l_1$ norm rows by seeing it as the sum of independent Bernoulli variables; finally apply a union bound for $m, I$ and $J$.
In the first step of the proof we need to deal with the covariates as well as the Poisson edges, which is different from the Erdos-Renyi scenario in \cite{le2017concentration}.

\begin{lemma}[Most rows have $l_1$ norm $\leq O(\alpha d)$]
\label{newlem35}
Let $1 \leq m \leq n$ and $\alpha \geq \sqrt{m/n}$. Then for $r \geq 1$ the following holds with probability at least $1 - n^{-r} $. Consider a block $I \times J$ of size $m \times m$.
Then all but $m/\alpha d$ rows of $A^{0}_{I\times J}$ have $l_1$ norm $\leq 8 \xi^2r\alpha d$.
\end{lemma}
\begin{proof}[Proof of Lemma \ref{newlem35}]
Fix a block $I\times J$, and denote by $d_i$ the $l_1$ norm of the $i$-th row of $A^{0}_{I\times J}$, i.e. $d_i = \sum_{j\in J} A^{0}_{ij}$. We apply a Poisson variable's Chernoff inequality (\cite{vershynin2018high}, p20) to bound $d_i$:
\begin{align*}
\begin{aligned}
\Pr(d_i \geq 8r\alpha d \xi^2 ) =& \Pr\left( \sum_{j\in J} \frac{A_{ij}}{\exp(\bz_{ij}^\top \pg^0) \xi} \geq 8r\alpha d \xi \right)\\
\leq& \Pr\left(\sum_{j\in J} A_{ij} \geq 8r\alpha d \xi\right)\\
\leq& \left( \frac{8r\alpha d \xi }{ e \cdot \frac{md}n\cdot \frac{\sum_{j\in J} \exp(\bz_{ij}^\top\pg^0) }{m} } \right)^{-8r\alpha d \xi}\\
\leq& \left(\frac{2\alpha n}m \right)^{-8r\alpha d},
\end{aligned}
\end{align*}
in which we used $\sum_{j\in J} \exp(\bz_{ij}^\top\pg^0) /m \leq \xi$ and $\xi\geq 1$, and all the inequalities should be understood as first conditioning on and then averaging out the covariates $Z$. The rest of the proof follows from the proof of Lemma 3.5 in \cite{le2017concentration}.
\end{proof}

\begin{lemma}[For block of large row $l_1$ norm, most columns have $O(1)$ $l_1$ norm; and most columns have $O(1)$ $l_0$ norm]
\label{newlem36}
Let $1 \leq m \leq n$ and $\alpha \geq \sqrt{m/n}$. Then for $r \geq 1$ the following holds with probability at least $1 - n^{-r} $. Consider a block $I \times J$ of size $k \times m$ with some $k \leq m/\alpha d$. 
Then all but $m/4$ columns of $A^{0}_{I\times J}$ have $l_1$ norm $\leq 32r\xi^2$. Moreover, all but $m/4$ columns of $A^{0}_{I\times J}$ have $l_0$ norm $\leq 32r\xi$. 
\end{lemma}
\begin{proof}[Proof of Lemma \ref{newlem36} ]
Fix a block $I\times J$, and denote by $d_j$ the $l_1$ norm of the $j$-th column of $A^{0}_{I\times J}$, i.e. $d_j = \sum_{i\in I} A^{0}_{ij}$. We apply a Poisson variable's Chernoff inequality (\cite{vershynin2018high}, p20) to bound $d_j$:
\begin{align*}
\Pr(d_j \geq 32r\xi^2 ) =& \Pr\left( \sum_{i\in I} \frac{A_{ij}}{\exp(\bz_{ij}^\top \pg^0) \xi} \geq 32r \xi \right)\\
\leq& \Pr\left(\sum_{j\in J} A_{ij} \geq 32r \xi\right)\\
\leq& \left( \frac{32r \xi }{ e \cdot \frac{kd}n\cdot \frac{\sum_{i\in I} \exp(\bz_{ij}^\top\pg^0) }{k} } \right)^{-32r\xi}\\
\leq& \left(\frac{10\alpha n}m \right)^{-32r}.
\end{align*}
For the ``moreover'' part, denote by $d_j'$ the $l_0$ norm, i.e. the number of none zero elements, of the $j$-th column of $A^{0}_{I\times J}$. By a similar argument as above, we could bound $d_j'$ by
\begin{align*}
\Pr(d_j' \geq 32r\xi) 
\leq \Pr\left(\sum_{j\in J} A_{ij} \geq 32r \xi\right)
\leq \left(\frac{10\alpha n}m \right)^{-32r}.
\end{align*}
Again all the inequalities should be understood as first conditioning on and then averaging out the covariates $Z$. The rest of the proof follows from the proof of Lemma 3.6 in \cite{le2017concentration}.
\end{proof}

Combining Lemma \ref{newlem34}, \ref{newlem35} and \ref{newlem36} we could get the following Lemma \ref{newlem37}, which gives the decomposition of one block. 

\begin{lemma}[Decomposition of one block]
\label{newlem37}
Let $1 \leq m \leq n$ and $\alpha \geq \sqrt{m/n}$. Then for $r \geq 1$ the following holds with probability at least $1 - 3n^{-r} $. Consider a block $I \times J$ of size $m \times m$. 
Then there exists an exceptional sub-block $I_1 \times J_1$ with dimensions at most $m/2 \times m/2$ such that the remaining part of the block, that is $(I \times J) \setminus (I_1 \times J_1)$, can be decomposed into three classes $\mcN , \mcR \subset (I \setminus I_1) \times J$ and $\mcC \subset I \times (J \setminus J_1)$ so that the following hold:
\begin{enumerate}
  \item The graph concentrates on $\mcN$, i.e. $\|(A^{0}-\mathbb E A^{0})_{\mcN}\| \leq Cr\sqrt{\xi^3\alpha d \log(en/m)}$.
  \item Each row of $A^{0}_{\mcR}$, each column of $A^{0}_{\mcC}$ has $l_1$ norm $\leq 32r\xi^2$.
  \item $\mcR$  intersects at most $m/\alpha d$ columns, and $\mcC$ intersects at most $m/\alpha d$ rows of $I \times J$.
\end{enumerate}
Moreover, the same result also holds for the second property being replaced by ``each row of $A^{0}_{\mcR}$, each column of $A^{0}_{\mcC}$ has $l_0$ norm $\leq 32r\xi$.''
\end{lemma}

Repeatingly apply Lemma \ref{newlem37} to the `exceptional' block in each iteration we would finally arrive at Theorem \ref{newthm26}. And with the decomposition in Theorem \ref{newthm26} we could prove Theorem \ref{newthm21} by bounding the spectral norm separately on $\mcN, \mcR$, and $\mcC$. The proof of Lemma \ref{newlem37}, Theorem \ref{newthm26} and Theorem \ref{newthm21} are all same as in \cite{le2017concentration}, except for some changes in the constants. Thus, we omit the proof for these three results.

\subsection{Proof of Theorem \ref{THM:ecv_consistency}}
\label{sec:supple:proof_chooseK}
First we give a spectral concentration bound on the matrix completion estimator based on the subsampled adjacency matrix under PCABM, which might be of independent interest.
\begin{thm}\label{THM:cv_spectralbd_trueK}[Spectral concentration of $\hat A'_K $ in Algorithm \ref{ecv_k_pcabm}]
Let $A$ be the adjacency matrix generated by the undirected PCABM $(M, B, Z, \pg^0)$.
Assume Conditions \ref{cond:zbd}, \ref{cond:zpd} hold, $\varphi_n \geq C_1 \log(n)$ for some absolute constant $C_1$, and the number of communities $K$ is fixed. 
Further assume each element of $\bar B$ is bounded from above by a constant $C_{\bar B}$, i.e. $\Bm\leq C_{\bar B}$.
Then for any $r>1$ and a constant training proportion $p\in (0,1]$, there exists a constant $\tilde C$ depending on $p,K,\xi, r, C_1$ and $C_{\bar B}$, such that with probability at least $1 - 4n^{-r} - C_\eta \exp(-v_\eta n ) $
(where $\eta = (p\zeta)^{-1}$, $C_{\eta}$, and $v_{\eta}$ are constants in Lemma \ref{lem:eta}),
$\hat A'_{K}$ in Algorithm \ref{ecv_k_pcabm}
 satisfies
\begin{equation}
\label{specbd:A'R_ecv}
\|\hat A'_{K} - P\| \leq \tilde C\sqrt{\varphi_n}.
\end{equation}
\end{thm}

\subsubsection{Proof of the Theorems \ref{THM:cv_spectralbd_trueK} and \ref{THM:ecv_consistency}}
In this subsection we give proofs for the main results Theorem \ref{THM:cv_spectralbd_trueK} and \ref{THM:ecv_consistency} for the ECV algorithm for selecting $K$. In the next subsection we give the proofs for supporting results used in proving those two theorems.
\begin{proof}[Proof of Theorem \ref{THM:cv_spectralbd_trueK}]
The following theorem is an extension from the matrix operator norm concentration in \cite{lei2015consistency} to the case of Poisson edge, covariate adjusted, and with subsampling.
\begin{thm}[Spectral bound of subsampled, adjusted Poisson random matrices]
    \label{THM:SC_Omega}
    Let $A$ be the adjacency matrix generated by PCABM $(M,B,Z,\boldsymbol{\gamma}^0)$, and the adjusted adjacency matrix $A'$ is defined by $A_{ij}' = A_{ij}\exp(-\bz_{ij}^\top \hat\pg) $.
    Let $\Omega$ be an index matrix for a set of node pairs selected independently with probability $p\in(0,1] $, with $\Omega_{ij} = 1$ if the node pair $(i, j)$ are selected and $0$ otherwise.
    Assume Conditions \ref{cond:zbd} and \ref{cond:zpd} hold, and $\varphi_n\geq C_1\log n$. 
    Further assume each element of $\bar B$ is bounded from above by a constant $C_{\bar B}$, i.e. $\Bm\leq C_{\bar B}$.
    Then for any constant $r>0$, there exists a constant $C$ depending only on $\xi,r,C_1$ and $C_{\bar B} $ such that $\|P_\Omega(A'-P)\|\leq C\sqrt{\varphi_n}$
    with probability at least $1 - 3n^{-r} - C_\eta \exp(-v_\eta n ) $
(where $\eta = (p\zeta)^{-1}$, $C_{\eta}$, and $v_{\eta}$ are constants in Lemma \ref{lem:eta}).
\end{thm}
The proof of Theorem \ref{THM:SC_Omega} is given in the next subsection.
As pointed out in \cite{lei2015consistency}, the above operator norm concentration bound is sharper than a matrix Bernstein \citep{tropp2012user} by a $\log n $ factor.

Now we consider the estimator from subsample $\hat A_K'$. Recall $\hat A_K'$ is derived from the truncated SVD: $\hat A'_K = S_H \left(\frac1p P_{\Omega}A', K\right)$. Following the analysis in \cite{li2020network}, we decompose 
\begin{equation}
\label{eq:hatA_K'_Lidecomp}
\begin{aligned}
\| \hat A_K' - P\| \leq& \|\hat A_K' - \frac1p P_{\Omega} A'\| + \| \frac1p P_{\Omega} A' - P \| \\
\leq& 2 \| \frac1p P_{\Omega} A' - P \| \\
\leq& 2  \| \frac1p P_{\Omega} (A' - P) \| + 2 \| \frac1p P_{\Omega} P -P \|   
\end{aligned}
\end{equation}
in which the second line is because $\hat A_K'$ minimizes the $L_2$ distance from $\frac1p P_{\Omega} A'$ to the set of rank $k$ matrices. The first term in the third line of \eqref{eq:hatA_K'_Lidecomp} is controlled by Theorem \ref{THM:SC_Omega}; for the second term in the third line, using the same arguments in \cite{li2020network}, we can bound it by
\begin{align*}
\| \frac1p P_{\Omega} P -P \|  \leq 4C(\delta,C_1)\sqrt{\frac{nK}p}\|P\|_\infty \leq 4C(\delta,C_1)\sqrt{\frac{nK}p} \rho_n C_{\bar B}
\end{align*}
with probability at least $1- n^{-\delta} $ for any $\delta>0$.
Substituting those results into \eqref{eq:hatA_K'_Lidecomp}, we get
\begin{align*}
\|\hat A_K' - P\| \leq \tilde C \sqrt{\varphi_n}
\end{align*}
with probability at least $1 - O(n^{-r}) $,
where the
constant $\tilde C$ depends on $p,K,\xi, r, C_1$ and $C_{\bar B}$.
\end{proof}

\bigskip

\begin{proof}[Proof of Theorem \ref{THM:ecv_consistency}]

We follow the same strategy as the proof of Theorem 3 in \cite{li2020network}. First we have the following two lemmas on the classification error of spectral clustering based on $\hat A'_{\hat K} $. Lemma \ref{lem_ecv_trueKclass} says for the true $K$, the number of errors in the classification of $\hat \be_K^{(m)}$ is at most $O(n\varphi_n^{-1})$ in each community; on the other hand, by Lemma \ref{lem_ecv_hatK<Kclass}, when $\hat K < K$, there are two sets of edges each of cardinality at least order $O(n^2)$ such that their true community labels are different, but their labels in $\hat\be_{\hat K}^{(m)} $ are the same.

\begin{lemma}[Classification error of an ECV split estimate under true $K$]
\label{lem_ecv_trueKclass}
Let $A$ be the adjacency matrix generated by the undirected PCABM $(M, B, Z, \pg^0)$ with $K$ blocks.
Assume the conditions in Theorem \ref{THM:ecv_consistency} hold, with $\varphi_n/\log n \to \infty$.
Let $\hat\be_K^{(m)} $ be the output of spectral clustering on $\hat A'_K$ defined in Algorithm \ref{ecv_k_pcabm} under the true $K$. Then $\hat\be_K^{(m)} $ coincides with the
true $\bc$ on all but $O(n/\varphi_n)$ nodes within each of the $K$ communities (up to a permutation of block labels), 
with probability tending to $1$.
\end{lemma}

\begin{lemma}[Classification error of an ECV split estimate under a $K' <K$]
\label{lem_ecv_hatK<Kclass}
Let $A$ be the adjacency matrix generated by the undirected PCABM $(M, B, Z, \pg^0)$ with $K$ blocks.
Assume the conditions in Theorem \ref{THM:ecv_consistency} hold. Consider one split of ECV. Suppose $\hat\be$ clusters the nodes into $K'$ communities, where $K' <K$. 
Recall $G_k = \{i : c_i = k\}$, and similarly let $\hat G_k$ be communities corresponding to an estimated label vector $\hat\be$.
Define $I_{k_1k_2} =(G_{k_1} \times G_{k_2} )\cap \Omega^c $ and $\hat I_{k_1k_2} =(\hat G_{k_1} \times \hat G_{k_2} )\cap \Omega^c $. 
Then with probability tending to $1$, there must exist $l_1, l_2, l_3 \in [K]$ and $k_1, k_2 \in [K']$ such that

$1$. $|\hat I_{k_1k_2} \cap  I_{l_1l_2}| \geq \tilde c n^2 $.

$2$. $|\hat I_{k_1k_2} \cap  I_{l_1l_2}| \geq \tilde c n^2 $.

$3$. $\bar B_{(l_1,l_2)}\neq\bar B_{(l_1,l_3)} $ where $\bar B_{(ij)} $ denotes the $(i,j)$-th element of $\bar B$.
\end{lemma}

Given the spectral concentration bound in Theorem \ref{THM:cv_spectralbd_trueK}, the proofs of Lemmas \ref{lem_ecv_trueKclass} and \ref{lem_ecv_hatK<Kclass} are essentially the same as Proposition 1 and Lemma 4 in \cite{li2020network}, and are hence omitted.
Using the above two results we upper and lower bound the loss in $\Omega^c$ under true $K$ and $\hat K<K$, respectively in the following two lemmas; the difference in those two bounds would guarantee the consistency of the ECV choose $K$ algorithm.
Recall the loss function we use in Algorithm \ref{ecv_k_pcabm} could be the scaled negative log-likelihood (snll) 
$$ L_1(A,K) = \sum_{(i,j)\in \Omega^c} snll(A_{ij}\nezhg, \hat B_{\hat e_i \hat e_j} )$$ 
or the scaled $L_2$ loss 
$$
L_2(A,K) = \sum_{(i,j)\in \Omega^c} sl_2(A_{ij}\nezhg, \hat B_{\hat e_i \hat e_j} ), $$
where 
\begin{equation}
\begin{aligned}
snll(A_{ij}\nezhg, \hat B_{\hat e_i \hat e_j} )=&
\left[\hat B_{\hat e_i \hat e_j}  - A_{ij} \nezhg \log \hat B_{\hat e_i \hat e_j}\right],\\
sl_2(A_{ij}\nezhg, \hat B_{\hat e_i \hat e_j})=&
\left[A_{ij}\nezhg   -  \hat B_{\hat e_i \hat e_j}\right]^2.
\end{aligned}
\label{eq::loss_def}
\end{equation}

To upper or lower bound the above losses under $\hat K=K$ and $\hat K <K$ circumstances, we compare them to their oracle counterparts defined as
$$ L_{1,0}(A,K) = \sum_{(i,j)\in \Omega^c} snll(A_{ij}\nezhg,  B_{c_i c_j} ) \
\text{ or } \
L_{2,0}(A,K) = \sum_{(i,j)\in \Omega^c} sl_2(A_{ij}\nezhg,  B_{c_i c_j} ).
$$ 

\begin{lemma}[Upper bound of $L_2$ and nll loss under true $K$]
\label{lem_ecv_trueKloss}
Let $A$ be the adjacency matrix generated by the undirected PCABM $(M, B, Z, \pg^0)$ with $K$ blocks.
Assume the conditions in Theorem \ref{THM:ecv_consistency} hold, with $\varphi_n/\log n \to \infty$.
Let $ \hat\be=\hat\be_K^{(m)} $ be the output of spectral clustering on $\hat A'_K$ defined in Algorithm \ref{ecv_k_pcabm} under the true $K$. Then for scaled $L_2$ loss we have
$$
L_2(A,K) - L_{2,0}(A,K) \leq O_P(n\rho_n);
$$
for snll loss, when additionally assuming lower boundedness of $\bar B$ as in Theorem \ref{THM:ecv_consistency}, we have
$$
L_1(A,K) - L_{1,0}(A,K) \leq O_P(n).
$$
\end{lemma}

\begin{lemma}[Upper bound of $L_2$ and nll loss under $K' <K$]
\label{lem_ecv_hatK<Kloss}
Let $A$ be the adjacency matrix generated by the undirected PCABM $(M, B, Z, \pg^0)$ with $K$ blocks.
Assume the conditions in Theorem \ref{THM:ecv_consistency} hold, with $\varphi_n/\log n \to \infty$.
Let $ \hat\be=\hat\be_{K'}^{(m)} $ be the output of spectral clustering on $\hat A'_{K'}$ defined in Algorithm \ref{ecv_k_pcabm} for a $ K' < K$. Then for scaled $L_2$ loss, there exists some constant $c$ s.t.\
$$
L_2(A, K') - L_{2,0}(A,K) \geq cn^2\rho_n^2;
$$
for snll loss, when additionally assuming $\varphi_n/n^{1/3} \to\infty$ and the lower boundedness of $\bar B$ as in Theorem \ref{THM:ecv_consistency},  there exists some constant $c$ s.t.\
$$
L_1(A,K') - L_{1,0}(A,K) \geq cn^2\rho_n^2.
$$
\end{lemma}

By comparing the bounds in Lemma \ref{lem_ecv_trueKloss} and \ref{lem_ecv_hatK<Kloss}, we can see that $\Pr(L(A,K') > L(A,K) ) \to 1$, for scaled $L_2$ loss when $\varphi_n/\log n \to \infty $, and also for snll loss when $\varphi_n/\sqrt n\to \infty $.
\end{proof}

\subsubsection{Proof of Supporting Theorems and Lemmas}
\begin{proof}[Proof of Theorem \ref{THM:SC_Omega}]
It suffices to consider $\tilde A_{ij}:= (P_{\Omega}A)_{ij}= A_{ij}\Omega_{ij}\sim$ Poisson$(\tilde P_{ij}\ezg) $, where $\tilde P_{ij} = \Omega_{ij} P_{ij} $.
For notational simplicity in the proof of this theorem we just write $A_{ij}$ and $P_{ij}$ for $\tilde{A}_{ij}$ and $\tilde P_{ij} $.
The following notations will be used in the proof.
\begin{itemize}
    \item $W=A'-P$ and denote by $w_{ij}$ the $(i,j)$-th entry of $W$.
    \item Let $\mathcal{S}=\{\mathbf{x}\in\mathbb{R}^n:\|\mathbf{x}\|_2\leq 1\}$ be the Euclidean ball of radius $1$. 
\end{itemize}

The proof of Theorem~\ref{THM:SC_Omega} is adapted from \cite{lei2015consistency}, so we will skip the common part and only clarify the modifications. The main idea is to bound 
\begin{equation}\label{eq:bound}
\sup_{\mathbf{x,y}\in \mathcal{S}}|\mathbf{x}^T(A'-P)\mathbf{y}|.
\end{equation}
The proof consists of three steps: discretization, bounding the light pairs, and bounding the heavy pairs. Discretization is to reduce \eqref{eq:bound} to the problem of bounding the supremum of $\mathbf{x}^T(A'-P)\mathbf{y}$ for $\mathbf{x},\mathbf{y}$ in a finite set of grid points in $\mathcal{S}$. Then we divide $\mathbf{x},\mathbf{y}$ into light and heavy pairs, and bound them respectively.  

\paragraph{Discretization}
For fixed $\delta_n\in(0,1)$, define 
\[\mathcal{T}=\{\mathbf{x}=(x_1,\cdots,x_n)\in \mathcal{S}:\sqrt{n}x_i/\delta_n\in\mathbb{Z},\forall i\},\]
where $\mathbb{Z}$ stands for the set of integers. The following lemma is the same as Lemma B.1 in \cite{lei2015consistency} and we will skip the proof.

\begin{lemma}
    $\mathcal{S}_{1-\delta_n}\subset convhull(\mathcal{T})$. As a consequence, for all $W\in\mathbb{R}^{n\times n}$,
    \[\|W\|\leq(1-\delta_n)^{-2}\sup_{\mathbf{x},\mathbf{y}\in \mathcal{T}}|\mathbf{x}^TW\mathbf{y}|.\]
\end{lemma}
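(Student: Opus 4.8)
The plan is to prove the convex-hull inclusion $\mathcal{S}_{1-\delta_n}\subset\mathrm{convhull}(\mathcal{T})$ by an explicit coordinate-wise rounding argument, and then to deduce the operator-norm bound from the bilinearity of $\mathbf{x}^TW\mathbf{y}$ over convex combinations together with a simple rescaling of the ball.

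First I would fix an arbitrary $\mathbf{x}=(x_1,\dots,x_n)$ with $\|\mathbf{x}\|_2\le 1-\delta_n$ and display it as a convex combination of grid points. Since the grid spacing is $\delta_n/\sqrt{n}$, for each coordinate $i$ there is an integer multiple $a_i$ of $\delta_n/\sqrt{n}$ with $a_i\le x_i\le a_i+\delta_n/\sqrt{n}$, so I may write $x_i=(1-t_i)a_i+t_i\bigl(a_i+\delta_n/\sqrt{n}\bigr)$ for some $t_i\in[0,1]$. Expanding the trivial identity $\prod_{i=1}^n[(1-t_i)+t_i]=1$ exhibits $\mathbf{x}$ as a convex combination of the $2^n$ corner points $\mathbf{v}$ obtained by choosing, in each coordinate, either $a_i$ or $a_i+\delta_n/\sqrt{n}$, the weight of a given corner being $\prod_i w_i$ with $w_i=1-t_i$ or $t_i$ according to that choice. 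Each such corner lies on the grid, and since it differs from $\mathbf{x}$ by at most $\delta_n/\sqrt{n}$ in every coordinate, $\|\mathbf{v}-\mathbf{x}\|_2\le\sqrt{n}\cdot\delta_n/\sqrt{n}=\delta_n$, whence $\|\mathbf{v}\|_2\le\|\mathbf{x}\|_2+\delta_n\le 1$ and $\mathbf{v}\in\mathcal{T}$. Thus $\mathbf{x}\in\mathrm{convhull}(\mathcal{T})$.

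For the consequence I would use $\|W\|=\sup_{\mathbf{x},\mathbf{y}\in\mathcal{S}}|\mathbf{x}^TW\mathbf{y}|$, and, because $\mathcal{S}_{1-\delta_n}=(1-\delta_n)\mathcal{S}$, the scaling identity $\sup_{\mathbf{x},\mathbf{y}\in\mathcal{S}_{1-\delta_n}}|\mathbf{x}^TW\mathbf{y}|=(1-\delta_n)^2\|W\|$. On the other hand, writing any $\mathbf{x},\mathbf{y}\in\mathcal{S}_{1-\delta_n}$ as convex combinations $\mathbf{x}=\sum_k\alpha_k\mathbf{u}_k$ and $\mathbf{y}=\sum_l\beta_l\mathbf{u}'_l$ of points of $\mathcal{T}$, bilinearity gives $|\mathbf{x}^TW\mathbf{y}|\le\sum_{k,l}\alpha_k\beta_l|\mathbf{u}_k^TW\mathbf{u}'_l|\le\sup_{\mathbf{u},\mathbf{u}'\in\mathcal{T}}|\mathbf{u}^TW\mathbf{u}'|$ since the weights sum to one. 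Combining the two facts yields $(1-\delta_n)^2\|W\|\le\sup_{\mathbf{x},\mathbf{y}\in\mathcal{T}}|\mathbf{x}^TW\mathbf{y}|$, which rearranges to the stated inequality.

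This lemma is elementary and I do not expect a real obstacle. The only point needing care is the passage from the coordinate-wise ($\ell_\infty$) rounding error $\delta_n/\sqrt{n}$ to the $\ell_2$ estimate $\|\mathbf{v}-\mathbf{x}\|_2\le\delta_n$, which is precisely what forces the shrinkage from radius $1$ to radius $1-\delta_n$ and keeps every corner inside the unit ball (hence in $\mathcal{T}$); a coarser bound would push corners outside $\mathcal{S}$. As the text notes, this is identical to Lemma B.1 of \cite{lei2015consistency}, so I would simply cite that argument rather than reproduce the routine details.
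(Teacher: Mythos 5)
Your proof is correct. The paper itself gives no argument here---it states that this is Lemma B.1 of Lei and Rinaldo (2015) and explicitly skips the proof---and your coordinate-wise rounding to the $2^n$ corners of the grid cell (with the $\ell_\infty$-to-$\ell_2$ conversion forcing the $1-\delta_n$ shrinkage), followed by bilinearity and the scaling identity $\sup_{\mathbf{x},\mathbf{y}\in\mathcal{S}_{1-\delta_n}}|\mathbf{x}^TW\mathbf{y}|=(1-\delta_n)^2\|W\|$, is precisely the standard argument behind that cited lemma.
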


For any $\mathbf{x},\mathbf{y}\in\mathcal{T}$, we have 
$$\mathbf{x}^T(A'-P)\mathbf{y}=\sum_{1\leq i,j\leq n}x_iy_j(A'_{ij}-P_{ij}).$$
We only need bound the above quantity now. We divide $(x_i,y_j)$ into \emph{light pairs} $\mathcal{L}=\{(i,j):|x_iy_j|\leq\sqrt{\varphi_n}/n\}$ and \emph{heavy pairs} $\mathcal{H}=\{(i,j):|x_iy_j|>\sqrt{\varphi_n}/n\}$. We will show that the tail for light pairs can be bounded exponentially while heavy pairs have a heavier tail. Thus, the rate of the latter one dominates.

\paragraph{Bounding the light pairs}

\begin{lemma}\label{lem:light}
    Under estimation error condition, for $c>0$, there exist constants $C_c, v_c>0$ s.t. 
    \[P\left(\sup_{x,y\in \mathcal{T}}\left|\sum_{(i,j)\in\mathcal{L}(x,y)}x_iy_jw_{ij}\right|\geq c\sqrt{\varphi_n}\right)\leq C_c\exp\left[-\left(v_c-\log\left(\frac{7}{\delta}\right)\right)n\right].\]
\end{lemma}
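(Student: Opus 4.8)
The plan is to fix a pair $(\mathbf{x},\mathbf{y})\in\mathcal{T}\times\mathcal{T}$, bound the light-pair sum $\sum_{(i,j)\in\mathcal{L}(x,y)}x_iy_jw_{ij}$ by a Bernstein argument, and then union bound over the finite grid $\mathcal{T}$. Throughout I would work on the high-probability event $\mathcal{E}_\eta=\{\sqrt{n\rho_n}\|\hat{\boldsymbol{\gamma}}-\boldsymbol{\gamma}^0\|_\infty\le\eta\}$ furnished by Lemma~\ref{lem:eta}, whose complement has probability at most $C_\eta\exp(-v_\eta n)$ and is therefore negligible against the target rate. On $\mathcal{E}_\eta$ the adjustment factor $e^{-\mathbf{z}_{ij}^T\hat{\boldsymbol{\gamma}}}$ lies between $\beta_u^{-1}(1-o(1))$ and $\beta_l^{-1}(1+o(1))$ uniformly in $(i,j)$, so each $A'_{ij}=A_{ij}e^{-\mathbf{z}_{ij}^T\hat{\boldsymbol{\gamma}}}$ is a uniformly bounded multiple of the Poisson variable $A_{ij}$.

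First I would recenter. Writing $w_{ij}=(A'_{ij}-\mathbb{E}[A'_{ij}\mid Z,\hat{\boldsymbol{\gamma}}])+(\mathbb{E}[A'_{ij}\mid Z,\hat{\boldsymbol{\gamma}}]-P_{ij})$, the second, deterministic, piece is the bias created by estimating $\boldsymbol{\gamma}^0$. Since $|\mathbb{E}[A'_{ij}\mid Z,\hat{\boldsymbol{\gamma}}]-P_{ij}|=P_{ij}\,|e^{\mathbf{z}_{ij}^T(\boldsymbol{\gamma}^0-\hat{\boldsymbol{\gamma}})}-1|=O(\rho_n\|\hat{\boldsymbol{\gamma}}-\boldsymbol{\gamma}^0\|_\infty)$ and $\sum_{(i,j)\in\mathcal{L}}|x_iy_j|\le(\sum_i|x_i|)(\sum_j|y_j|)\le n\|\mathbf{x}\|_2\|\mathbf{y}\|_2\le n$ by Cauchy--Schwarz, the total bias over any light-pair set is at most $C\eta\sqrt{n\rho_n}=C\eta\sqrt{\varphi_n}$ on $\mathcal{E}_\eta$. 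Choosing $\eta$ small relative to $c$ (which is permitted, as $\eta$ is an arbitrary constant in Lemma~\ref{lem:eta}) makes this at most $(c/2)\sqrt{\varphi_n}$, so it suffices to control the centered sum at level $(c/2)\sqrt{\varphi_n}$.

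For the centered part I would invoke the Bernstein inequality of Lemma~\ref{lem:bernin}. With $(\mathbf{x},\mathbf{y})$ fixed, the summands $x_iy_j(A'_{ij}-\mathbb{E}A'_{ij})$ are independent across pairs, and each centered, rescaled Poisson satisfies the moment condition of Lemma~\ref{lem:bern} with effective constant $L\asymp\sqrt{\varphi_n}/n$ on the light set (where $|x_iy_j|\le\sqrt{\varphi_n}/n$); this uniform moment bound is precisely what replaces the boundedness available in the Bernoulli case of \cite{lei2015consistency}. The variance proxy is $M=\sum_{(i,j)\in\mathcal{L}}x_i^2y_j^2\,\mathrm{Var}(A'_{ij})\le C\rho_n\|\mathbf{x}\|_2^2\|\mathbf{y}\|_2^2\le C\rho_n=C\varphi_n/n$. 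Inserting $x=(c/2)\sqrt{\varphi_n}$ into $\exp\left(-\frac{x^2}{2(M+xL)}\right)$ cancels the factor $\varphi_n$ and yields a single-pair bound of the form $\exp(-v_cn)$, with $v_c$ depending only on $c$, $\bar L$, and $\|\bar{B}\|_{\max}$.

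Finally I would take the union bound. The grid satisfies $\log|\mathcal{T}|\le(n/2)\log(7/\delta)$, so summing the single-pair estimate over $\mathcal{T}\times\mathcal{T}$ contributes a factor $\exp(n\log(7/\delta))$, producing the advertised $C_c\exp[-(v_c-\log(7/\delta))n]$. The residual data-dependence of $\hat{\boldsymbol{\gamma}}$ can be absorbed by a polynomial-size net over the $O((n\rho_n)^{-1/2})$-neighborhood of $\boldsymbol{\gamma}^0$, transferring the fixed-$\boldsymbol{\gamma}$ bound via the Lipschitzness of $e^{-\mathbf{z}_{ij}^T\boldsymbol{\gamma}}$; this net only inflates the probability by a polynomial factor and leaves the exponential rate intact. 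The main obstacle is exactly this entanglement: $A'_{ij}$ depends on $\hat{\boldsymbol{\gamma}}$, a function of the entire matrix $A$, so the entries are not independent as in \cite{lei2015consistency}, and the summands are unbounded. The two devices that resolve it are conditioning on $\mathcal{E}_\eta$ to freeze the scaling factors within a shrinking neighborhood and the uniform Poisson Bernstein condition of Lemma~\ref{lem:bern}; the secondary delicate point is keeping the $\hat{\boldsymbol{\gamma}}$-bias below $(c/2)\sqrt{\varphi_n}$ by shrinking $\eta$, as in the recentering step.
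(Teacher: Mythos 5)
Your proposal is correct in substance, but it resolves the key difficulty --- the dependence of $A'_{ij}$ on the data-driven $\hat{\boldsymbol{\gamma}}$ --- by a genuinely different device than the paper. The paper writes $w_{ij}=w'_{ij}+\delta_{ij}$ with $w'_{ij}=A_{ij}e^{-\mathbf{z}_{ij}^T\boldsymbol{\gamma}^0}-P_{ij}$ and $\delta_{ij}=A_{ij}\bigl(e^{-\mathbf{z}_{ij}^T\hat{\boldsymbol{\gamma}}}-e^{-\mathbf{z}_{ij}^T\boldsymbol{\gamma}^0}\bigr)$: the first piece is centered at the \emph{true} $\boldsymbol{\gamma}^0$, so its summands are genuinely independent and Bernstein (Lemmas~\ref{lem:bernin} and~\ref{lem:bern}) applies directly; the second piece is dominated pointwise on the event $\{\|\hat{\boldsymbol{\gamma}}-\boldsymbol{\gamma}^0\|_\infty\leq\eta/\sqrt{n\rho_n}\}$ by a small deterministic multiple of $A_{ij}$ via $|e^t-1|<2|t|$, so controlling it reduces to a Bernstein bound for $\sum u_{ij}A_{ij}$ --- again independent summands, and no covering of the $\boldsymbol{\gamma}$-space is ever needed. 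You instead center at $\hat{\boldsymbol{\gamma}}$ itself, splitting into a fluctuation term plus a deterministic bias $P_{ij}\bigl(e^{\mathbf{z}_{ij}^T(\boldsymbol{\gamma}^0-\hat{\boldsymbol{\gamma}})}-1\bigr)$ that you kill with Cauchy--Schwarz and a small choice of $\eta$, and you restore independence for the fluctuation term by a polynomial-size net over the $O((n\rho_n)^{-1/2})$-ball around $\boldsymbol{\gamma}^0$. That works (the net is indeed polynomial since $p$ is fixed, so the $\exp(-v_cn)$ rate survives), and your bias accounting is arguably more transparent than the paper's; the cost is the extra net argument and the need to be careful that $\mathbb{E}[A'_{ij}\mid Z,\hat{\boldsymbol{\gamma}}]$ is only heuristically $\lambda_{ij}e^{-\mathbf{z}_{ij}^T\hat{\boldsymbol{\gamma}}}$ --- conditioning on $\hat{\boldsymbol{\gamma}}$ perturbs the law of $A_{ij}$, so the net-plus-Lipschitz step is not optional polish but the load-bearing part of your argument and should be written out, whereas the paper's decomposition sidesteps this entirely. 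The remaining ingredients (Lemma~\ref{lem:eta} for the good event, the Poisson Bernstein condition with $L\asymp\sqrt{\varphi_n}/n$ on the light set, variance proxy $O(\varphi_n/n)$, and the $e^{n\log(7/\delta)}$ union bound over $\mathcal{T}$) match the paper's.
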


\begin{proof}[Proof of Lemma \ref{lem:light}]
Define $w'_{ij}=A_{ij}e^{-\mathbf{z}_{ij}^T\boldsymbol{\gamma}^0}-P_{ij}$ and $\delta_{ij}=A_{ij}e^{-\mathbf{z}_{ij}^T\hat{\boldsymbol{\gamma}}}-A_{ij}e^{-\mathbf{z}_{ij}^T\boldsymbol{\gamma}^0}$, then  $w_{ij}=w'_{ij}+\delta_{ij}$, and notice that
    $$\text{Pr}(\sum w_{ij}>2t)\leq \text{Pr}(\sum w'_{ij}>t)+\text{Pr}(\sum \delta_{ij}>t),$$
    so we could bound two parts respectively. Also, denoting $u_{ij} = x_i y_j + x_j y_i $, we keep in mind that
    $$\sum_{i<j}u_{ij}^2\leq\sum_{i<j}2(x_i^2y_j^2+x_j^2y_i^2)\leq2\sum_{1\leq i,j\leq n}x_i^2y_j^2=2\|x\|_2^2\|y\|_2^2\leq2.$$

\emph{Step 1 : Bound $w'_{ij}$.}

Let $\beta_l \leq \ezg \leq \beta_u $. (One could take $\beta_u=\beta_l^{-1} = \xi $.)
Since
\begin{align*}
    \sum_{i<j}\mathbb{E}[(w'_{ij}u_{ij})^2|\mathbf{z}_{ij}]=&\sum_{i<j}u_{ij}^2\lambda_{ij}e^{-2\boldsymbol{z}_{ij}^T\boldsymbol{\gamma}^0}=\sum_{i<j}u_{ij}^2P_{ij}e^{-\boldsymbol{z}_{ij}^T\boldsymbol{\gamma}^0}\\
    \leq&\rho_n\beta_l^{-1}\|\bar{B}\|_{\max}\sum_{i<j}u^2_{ij}\leq2\beta_l^{-1}\|\bar{B}\|_{\max}\rho_n
\end{align*}
define $M=2\beta_l^{-1}\|\bar{B}\|_{\max}\rho_n$, $L=2\bar{L}\sqrt{\varphi_n}(n\beta_l)^{-1}$ and $x=c\sqrt{\varphi_n}$, we could applying Lemma~\ref{lem:bernin} to $u_{ij}w'_{ij}$ to get
\begin{align*}
\text{Pr}(\sum_{i<j}w'_{ij}u_{ij}\geq c\sqrt{\varphi_n})\leq&\exp\left(-\frac{c^2\varphi_n}{4c\bar{L}\varphi_n(n\beta_l)^{-1}+4\rho_n\beta_l^{-1}\|\bar{B}\|_{\max}}\right)\\
=&\exp\left(-\frac{c^2\beta_ln}{4c\bar{L}+4\|\bar{B}\|_{\max}}\right)
\end{align*}

\emph{Step 2 : Bound $\delta_{ij}$.}

We consider two cases $\|\boldsymbol{\gamma}^0-\hat{\boldsymbol{\gamma}}\|_{\infty}>\eta/\sqrt{n\rho_n}$ and $\|\boldsymbol{\gamma}^0-\hat{\boldsymbol{\gamma}}\|_{\infty}\leq\eta/\sqrt{n\rho_n}$ separately. Conditioning on the second case, by choosing $\eta<(p\zeta)^{-1}$, we have
\begin{align*}
u_{ij}\delta_{ij}\leq&|u_{ij}||[A_{ij}e^{-\mathbf{z}_{ij}^T\boldsymbol{\gamma}^0}(e^{\mathbf{z}_{ij}^T(\boldsymbol{\gamma}^0-\hat{\boldsymbol{\gamma}})}-1)]|<2|u_{ij}||[A_{ij}e^{-\mathbf{z}_{ij}^T\boldsymbol{\gamma}^0}\mathbf{z}_{ij}^T(\boldsymbol{\gamma}^0-\hat{\boldsymbol{\gamma}})]|\\
<&2\zeta\eta p|u_{ij}|A_{ij}/(\beta_l\sqrt{n\rho_n})<2|u_{ij}|A_{ij}/(\beta_l\sqrt{n\rho_n})
\end{align*}
The first inequality is due to $|e^t-1|<2|t|$ when $|t|<1$. Define $M=2\beta_u\|\bar{B}\|_{\max}\rho_n\geq\sum_{i<j}u_{ij}^2\lambda_{ij}=\text{var}(\sum_{i<j}u_{ij}A_{ij}|Z)$, $L=2\bar{L}\sqrt{\varphi_n}/n$ and $x=c\sqrt{\varphi_nn\rho_n}$, by Lemma~\ref{lem:bernin}, 
\begin{align*}
&\text{Pr}(\sum_{i<j}|u_{ij}|(A_{ij}-P_{ij})>c\sqrt{\varphi_nn\rho_n})\\
\leq&\exp\left(-\frac{c^2\varphi_nn\rho_n}{4\|\bar{B}\|_{\max}\beta_u\rho_n+4c\sqrt{n\rho_n}\varphi_n\bar{L}/n}\right)\\
=&\exp\left(-\frac{c^2n\sqrt{n\rho_n}}{4\|\bar{B}\|_{\max}\beta_u/\sqrt{n\rho_n}+4c\bar{L}}\right)\\
\leq&\exp\left(-\frac{c^2n\sqrt{n\rho_n}}{4\|\bar{B}\|_{\max}\beta_u+4c\bar{L}}\right)
\end{align*}

Because
$$\sum_{i<j}|u_{ij}|P_{ij}\leq\rho_n\|\bar{B}\|_{\max}\beta_u\sum_{i<j}|u_{ij}|\leq\sqrt{2N_n}\rho_n\|\bar{B}\|_{\max}\beta_u\leq n\rho_n\|\bar{B}\|_{\max}\beta_u,$$
we have
\begin{align*}
    \exp\left(-\frac{c^2n\sqrt{n\rho_n}}{4\|\bar{B}\|_{\max}\beta_u+4c\bar{L}}\right)\geq&\text{Pr}(\sum_{i<j}|u_{ij}|(A_{ij}-P_{ij})>c\sqrt{\varphi_nn\rho_n})\\
    \geq&\text{Pr}(\sum_{i<j}|u_{ij}|A_{ij}>(c+\|\bar{B}\|_{\max}\beta_u)\varphi_n),
\end{align*}
which is equivalent to $\text{Pr}(\sum_{i<j}u_{ij}A_{ij}>c\varphi_n)\leq\exp(-C_cn\sqrt{\varphi_n})$, where $C_c$ is constant.

Thus, for $\eta<(p\zeta)^{-1}$,
\begin{align*}
&\text{Pr}\left(\sum_{i<j}\delta_{ij}u_{ij}>c\sqrt{\varphi_n}\right)\\
=&\text{Pr}\left(\sum_{i<j}\delta_{ij}u_{ij}>c\sqrt{\varphi_n}\middle\vert\|\boldsymbol{\gamma}^0-\hat{\boldsymbol{\gamma}}\|_{\infty}\leq\frac{\eta}{\sqrt{n\rho_n}}\right)\text{Pr}\left(\|\boldsymbol{\gamma}^0-\hat{\boldsymbol{\gamma}}\|_{\infty}\leq\frac{\eta}{\sqrt{n\rho_n}}\right)\\
&+\text{Pr}\left(\sum_{i<j}\delta_{ij}u_{ij}>c\sqrt{\varphi_n}\middle\vert\|\boldsymbol{\gamma}^0-\hat{\boldsymbol{\gamma}}\|_{\infty}>\frac{\eta}{\sqrt{n\rho_n}}\right)\text{Pr}\left(\|\boldsymbol{\gamma}^0-\hat{\boldsymbol{\gamma}}\|_{\infty}>\frac{\eta}{\sqrt{n\rho_n}}\right)\\
\leq&\text{Pr}\left(\sum_{i<j}\delta_{ij}u_{ij}>c\sqrt{\varphi_n}\middle\vert\|\boldsymbol{\gamma}^0-\hat{\boldsymbol{\gamma}}\|_{\infty}\leq\frac{\eta}{\sqrt{n\rho_n}}\right)+C_\eta\exp(-v_\eta n)\\
\leq&\text{Pr}\left(\sum_{ij}|u_{ij}|A_{ij}>c\beta_l\varphi_n/2\right)+C_\eta\exp(-v_\eta n)\\
\leq&\exp(-C_cn\sqrt{n\rho_n})+C_\eta\exp(-v_\eta n)\\
\leq& C_{c,\eta}\exp(-v_{c,\eta}n),
\end{align*}
where $C_{c,\eta}$ and $v_{c,\eta}$ are two constants determined by $c$ and $\eta$.

By a standard volume argument we have $|T|\leq e^{n\log(7/\delta)}$ (see Claim 2.9 of \cite{feige2005spectral}), so the desired result follows from the union bound.
\end{proof}

\paragraph{Bounding the heavy pairs}
By the same argument as in Section 4 of Supplement to \cite{lei2015consistency}, to bound $\sup_{\mathbf{x},\mathbf{y}\in \mathcal{T}}|\sum_{(i,j)\in\mathcal{H}(x,y)}x_iy_jw_{ij}|$, it suffices to show
$$\sum\limits_{(i,j)\in\mathcal{H}}x_iy_jA'_{ij}=O(\sqrt{\varphi_n})$$
with high probability. Since $A'_{ij}=A_{ij}e^{-\mathbf{z}_            {ij}^T\boldsymbol{\gamma}^0}-(A_{ij}e^{-\mathbf{z}_            {ij}^T\boldsymbol{\gamma}^0}-A_{ij}e^{-\mathbf{z}_            {ij}^T\hat{\boldsymbol{\gamma}}})\leq A_{ij}\beta_l^{-1}(1+o_p(1))$, we only need to show 
$$\sum\limits_{(i,j)\in\mathcal{H}}x_iy_jA_{ij}=O(\sqrt{\varphi_n})$$
with high probability. 
The above relationship can be shown very similarly to the proof of Lemma 4.3 in the Supplement to \cite{lei2015consistency}. Particularly, to obtain a same result as their Lemma 4.1, we apply the Bernstein bound of Lemma \ref{lem:bern}; to prove a same result as their Lemma 4.2, we apply the Chernoff bound for Poisson tails in Exercise 2.3.3 of \cite{vershynin2018high} in the place of Corollary A.1.10 of \cite{alon2016probabilistic}; the rest of the proof of their Lemma 4.3 is exactly the same in our setting.
Thus we get the following lemma.

\begin{lemma}(Heavy pair bound). For any given $r>0$, there exists a constant $C_r$ such that 
    $$\sup_{x,y\in T}|\sum\limits_{(i,j)\in\mathcal{H}}x_iy_jw_{ij}|\leq C_r\sqrt{\varphi_n}$$
    with probability at least $1-2n^{-r}$.
\end{lemma}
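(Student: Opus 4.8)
The plan is to reduce the heavy-pair sum for the adjusted matrix $A'$ to the Bernoulli edge-counting argument of \cite{lei2015consistency}, leaning on the reductions already assembled immediately before the lemma. Recall $w_{ij}=A'_{ij}-P_{ij}$. First I would dispose of the deterministic part: on the heavy set $\mathcal{H}=\{(i,j):|x_iy_j|>\sqrt{\varphi_n}/n\}$ every pair satisfies $|x_iy_j|\le (n/\sqrt{\varphi_n})x_i^2y_j^2$, so $\sum_{(i,j)\in\mathcal{H}}|x_iy_j|\le (n/\sqrt{\varphi_n})\|\mathbf{x}\|^2\|\mathbf{y}\|^2\le n/\sqrt{\varphi_n}$; combined with $P_{ij}\le\rho_n\|\bar{B}\|_{\max}\beta_u$ this yields $\sum_{(i,j)\in\mathcal{H}}|x_iy_j|P_{ij}=O(\sqrt{\varphi_n})$ deterministically. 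It therefore suffices to control the random sum $\sum_{(i,j)\in\mathcal{H}}x_iy_jA'_{ij}$.

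Next I would carry out the two reductions sketched in the text. Conditioning on the event $\{\|\hat{\boldsymbol{\gamma}}-\boldsymbol{\gamma}^0\|_\infty\le\eta/\sqrt{n\rho_n}\}$ supplied by Lemma~\ref{lem:eta}, linearizing $e^{-\mathbf{z}_{ij}^T(\hat{\boldsymbol{\gamma}}-\boldsymbol{\gamma}^0)}$ shows the adjustment factor is $1+o(1)$ uniformly in $(i,j)$, giving $A'_{ij}\le A_{ij}\beta_l^{-1}(1+o(1))$. Then the truncation $A''_{ij}=\max(A_{ij},1)\sim\mathrm{Bernoulli}(1-e^{-\lambda_{ij}})$ agrees with $A_{ij}$ up to a factor $1+o(1)$, since $\text{Pr}(A_{ij}\ge 2)=O(\lambda_{ij}^2)$ is negligible under sparsity. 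Hence it is enough to bound $\sum_{(i,j)\in\mathcal{H}}x_iy_jA''_{ij}$ for the Bernoulli matrix $A''$.

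At this point the problem is exactly the heavy-pair problem of \cite{lei2015consistency}. The only model-specific input their argument requires is that $A''$ has independent Bernoulli entries with maximum expected degree of order $\varphi_n$, which I would verify from $\max_i\sum_{j\neq i}(1-e^{-\lambda_{ij}})\le\max_i\sum_{j\neq i}\lambda_{ij}\le n\rho_n\|\bar{B}\|_{\max}\beta_u=O(\varphi_n)$. Their proof then performs a dyadic decomposition of $\mathcal{H}$ according to the magnitudes of $|x_i|$, $|y_j|$ and of the block edge counts, and within each block invokes the bounded-discrepancy (edge-counting) lemma, which on an event of probability at least $1-2n^{-r}$ prevents the edge count $e(I,J)$ between any vertex subsets from exceeding a constant multiple of its mean plus a $\sqrt{\varphi_n}$-type term. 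Reusing that bookkeeping verbatim gives $\sum_{(i,j)\in\mathcal{H}}x_iy_jA''_{ij}=O(\sqrt{\varphi_n})$ uniformly over $\mathbf{x},\mathbf{y}\in\mathcal{T}$, and combining with the $P$-contribution and the $1+o(1)$ factors produces the stated bound.

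The hard part will not be the combinatorics, which is inherited from \cite{lei2015consistency}, but making the two reductions rigorous and uniform in the presence of the covariate adjustment. One must confirm that the event of Lemma~\ref{lem:eta} holds with probability at least $1-n^{-r}$ (this is where $\varphi_n\ge C\log n$ enters) so that the conditioning does not degrade the target probability, and one must check that both the linearization of $e^{-\mathbf{z}_{ij}^T(\hat{\boldsymbol{\gamma}}-\boldsymbol{\gamma}^0)}$ and the Poisson-to-Bernoulli passage perturb the heavy-pair sum only by a factor $1+o(1)$ \emph{simultaneously} over the exponentially many grid points of $\mathcal{T}$. Once these uniform $o(1)$ controls are secured, the heavy-pair bound follows directly from the Bernoulli analysis.
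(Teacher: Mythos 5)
Your proposal is correct and takes essentially the same route as the paper: it handles the deterministic $P$ contribution by the standard Lei--Rinaldo argument, reduces $A'$ to $A$ via the high-probability event $\|\hat{\boldsymbol{\gamma}}-\boldsymbol{\gamma}^0\|_\infty\le\eta/\sqrt{n\rho_n}$, truncates to the Bernoulli matrix $A''$, and then invokes the heavy-pair combinatorics of \cite{lei2015consistency} after checking the $O(\varphi_n)$ expected-degree condition. Your write-up is in fact slightly more explicit than the paper's at the deterministic step and the degree verification, but the underlying argument is the same.
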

\end{proof}


\begin{proof}[Proof of Lemma \ref{lem_ecv_trueKloss}]
Same as in the proof of Theorem 3 in \cite{li2020network}, we consider the following sets of edges:
\begin{align*}
T_{k_1,k_2,l_1,l_2} =& \{(i,j)\in\Omega^c: c_i = l_1, \hat e_i = k_1, c_j = l_2, \hat e_j = k_2 \},\\
U_{k_1,k_2,l_1,l_2} =& \{(i,j)\in\Omega: c_i = l_1, \hat e_i = k_1, c_j = l_2, \hat e_j = k_2 \},
\end{align*}
$T_{k_1,k_2,\cdot,\cdot} = \cup_{l_1,l_2} T_{k_1,k_2,l_1,l_2} $, and $T_{\cdot,\cdot,l_1,l_2},U_{k_1,k_2,\cdot,\cdot}$ which are defined similarly. By arguments in \cite{li2020network}, we have $|U_{k_1,k_2,\cdot,\cdot}|\geq cn^2 $ for some constant $c$. 
Define 
$$EU_{k_1,k_2,l_1,l_2} := \sum_{(i,j)\in U_{k_1,k_2,l_1,l_2}} \exp(\bz_{ij}^\top \pg^0), \quad
\hat EU_{k_1,k_2,l_1,l_2} := \sum_{(i,j)\in U_{k_1,k_2,l_1,l_2}} \exp(\bz_{ij}^\top \hat\pg), $$
and similarly $EU_{k_1,k_2,\cdot,\cdot}, ET_{k_1,k_2,l_1,l_2}, ET_{k_1,k_2,\cdot,\cdot}$, and their hat version. 
Then we have for $k_1\neq k_2$,
\begin{align*}
|\hat B_{k_1k_2} - B_{k_1k_2}| =& \left|\frac{\sum_{U_{k_1,k_2,\cdot,\cdot}} A_{ij} }{\hat EU_{k_1,k_2,\cdot,\cdot}} - B_{k_1k_2}  \right| \\
\leq&\frac{\hat EU_{k_1,k_2,k_1,k_2}}{\hat EU_{k_1,k_2,\cdot,\cdot}} \left|\frac{\sum_{U_{k_1,k_2,k_1,k_2}} A_{ij} }{EU_{k_1,k_2,k_1,k_2}} - \frac{\sum_{U_{k_1,k_2,k_1,k_2}} A_{ij} }{\hat EU_{k_1,k_2,k_1,k_2}} \right|  \\
 &+\frac{\hat EU_{k_1,k_2,k_1,k_2}}{\hat EU_{k_1,k_2,\cdot,\cdot}} \left|\frac{\sum_{U_{k_1,k_2,k_1,k_2}} A_{ij} }{EU_{k_1,k_2,k_1,k_2}} - B_{k_1k_2}  \right|
+\left|\left(1-\frac{\hat EU_{k_1, k_2, k_1, k_2}}{\hat EU_{k_1, k_2, \cdot, \cdot}}\right) B_{k_1 k_2}\right| \\
&+ \frac{\hat EU_{k_1, k_2, \cdot, \cdot} - \hat EU_{k_1, k_2, k_1, k_2}}{\hat EU_{k_1, k_2, \cdot , \cdot}}\left|\frac{\sum_{U_{k_1, k_2, \cdot , \cdot} \slash U_{k_1, k_2, k_1, k_2}} A_{i j}}{\hat EU_{k_1, k_2, \cdot , \cdot} -\hat EU_{k_1, k_2, k_1, k_2}}\right| \\
\leq& O_P(\frac{\sqrt{\rho_n}}{n^{2.5}})+ O_P(\sqrt{\frac{\rho_n}{n^2} } ) + O_P(\frac{1}{\varphi_n} )O_P(\rho_n) + O_P(\frac{1}{\varphi_n} )O_P(\rho_n) = O_P(\frac1n),
\end{align*}
in which the $O_P(\sqrt{\frac{\rho_n}{n^2}})$ bound is due to a Bernstein inequality, and in each bound of $\hat EU$ we use Condition \ref{cond:zbd} and Lemma \ref{lem:eta}. For $k_1=k_2, \hat{B}_{k_1 k_2}$ is the average over $U_{k_1, k_2, \cdot, \cdot} \cap\{(i, j): i<j\}$ which makes both its denominator and nominator half of those in the above calculation, and the same concentration holds.

Now we decompose (for a general $L$ that could be the snll loss $L_1$ or the scaled $L_2$ loss $L_2$)
\begin{align*}
 L(A, K)-L_{0}(A, K)
=&\sum_{k_1, k_2} \sum_{(i, j) \in T_{k_1, k_2, k_1, k_2}}\left[\ell\left(A_{i j}\nezhg, \hat{B}_{k_1 k_2} \right)-\ell\left(A_{i j}\nezhg, B_{k_1 k_2}\right)\right] \\
&+\sum_{\left(k_1, k_2\right) \neq\left(l_1, l_2\right)} \sum_{(i, j) \in T_{k_1, k_2, l_1, l_2}}\left[\ell\left(A_{i j}\nezhg, \hat{B}_{k_1 k_2}\right)-\ell\left(A_{i j}\nezhg, B_{l_1 l_2}\right)\right] \\
=&:\mathcal{I}+\mathcal{II},
\end{align*}
in which the loss $\ell$ could be snll or scaled $L_2$ loss defined in \eqref{eq::loss_def}.

For the (scaled) $L_2 \operatorname{loss} \ell(x,y)=(x-y)^2$, we have
$$
\left|\ell\left(x, y_1\right)-\ell\left(x, y_2\right)\right| \leq 2\left(|x|+\left|y_2\right|+\left|y_1-y_2\right|\right)\left|y_1-y_2\right|.
$$
Thus, we can bound $\mathcal{I}$ and $\mathcal{II}$ respectively by
\begin{align*}
|\mathcal{I}| \leq& 2 \sum_{k_1,k_2 } \sum_{(i,j)\in T_{k_1,k_2,k_1,k_2}}  |\hat B_{k_1k_2} - B_{k_1k_2}| (A_{ij}\nezhg + B_{k_1k_2} + |\hat B_{k_1k_2} - B_{k_1k_2}| )\\ \leq& O_P(n^2) O_P(\frac1n) O_P(\rho_n) = O_P(n\rho_n),\\
|\mathcal{II}| \leq& 2 \sum_{(k_1,k_2)\neq (l_1,l_2) } \sum_{(i,j)\in T_{k_1,k_2,l_1,l_2}}(\hat B_{k_1k_2} + B_{l_1l_2}) (A_{ij}\nezhg + B_{l_1l_2} + \hat B_{k_1k_2} )\\\leq& O_P(\frac{n}{\rho_n}) O_P(\rho_n^2) = O_P(n\rho_n ),
\end{align*}
where we use the bound on $\hat\pg$ from Condition \ref{cond:zbd} and Lemma \ref{lem:eta}. 
Hence, for $L_2$ loss, we have $L_2(A,K) - L_{2,0}(A,K) \leq O_P(n\rho_n) $.

For snll loss $\ell(x,y) = y - x\log y $, we similarly have
$$
\left|\ell\left(x, y_1\right)-\ell\left(x, y_2\right)\right| \leq \left|y_1-y_2\right| + x \frac{|y_1 - y_2| }{\min(y_1,y_2) }.
$$ 
Same as in \cite{li2020network}, we assume $|\hat B_{k_1k_2} - B_{k_1k_2}| \leq B_{k_1k_2}/2$, which can be seen from Lemma \ref{lem:eta} when $n$ is sufficiently large. Then we could bound
\begin{align*}
|\mathcal{I}| \leq& \sum_{k_1, k_2} \sum_{(i, j) \in T_{k_1, k_2, k_1, k_2}} \left[ 1 + 2A_{ij}\nezhg/B_{k_1k_2} \right] |\hat B_{k_1k_2} - B_{k_1k_2} | \leq O_P(n^2)O_P(\frac1n) = O_P(n);\\
|\mathcal{II}| \leq& \sum_{\left(k_1, k_2\right) \neq\left(l_1, l_2\right)} \sum_{(i, j) \in T_{k_1, k_2, l_1, l_2}} \left[ 1 + 2A_{ij}\nezhg/c_{\bar B} \right] (|\hat B_{k_1k_2}| + |B_{l_1kl_2} |) \leq O_P(\frac{n}{\rho_n}) O_P(\rho_n) =  O_P(n).
\end{align*}
Hence, for the snll loss, we have $L_1(A,K) - L_{1,0}(A,K) \leq O_P(n) $.
\end{proof}

\begin{proof}[Proof of Lemma \ref{lem_ecv_hatK<Kloss}]
Without loss of generality, assume the $k_1, k_2$ and $l_1, l_2, l_3$ in Lemma \ref{lem_ecv_hatK<Kclass} are 1,2 and 3,4,5 respectively. (But keep in mind that one or two among 3,4,5 could be the same as 1,2.) We have
\begin{equation}
\begin{aligned}
& L\left(A, K^{\prime}\right)-L_0(A, K)= \sum_{k_1, k_2, l_1, l_2} \sum_{(i, j) \in T_{k_1, k_2, l_1, l_2}}\left[\ell\left(A_{i j}\nezhg, \hat{P}_{i j}\right)-\ell\left(A_{i j}\nezhg, B_{l_1 l_2}\right)\right] \\
=& \sum_{(i, j) \in T_{1,2,3,4}}\left[\ell\left(A_{i j}\nezhg, \hat{B}_{12}\right)-\ell\left(A_{i j}\nezhg, B_{34}\right)\right]+\sum_{(i, j) \in T_{1,2,3,5}}\left[\ell\left(A_{i j}\nezhg, \hat{B}_{12}\right)-\ell\left(A_{i j}\nezhg, B_{35}\right)\right] \\
&+\sum_{\left(k_1, k_2, l_1, l_2\right) \notin\{(1,2,3,4),(1,2,3,5)\}} \sum_{(i, j) \in T_{k_1,k_2,l_1,l_2}} \left[\ell\left(A_{i j}\nezhg, \hat{P}_{i j}\right)-\ell\left(A_{i j}\nezhg, B_{l_1 l_2}\right)\right].
\end{aligned}
\label{eq:decomp_lem_losshatK<K_1}
\end{equation}
For both the scaled $L_2$ and the snll losses and any index set $T$, the function of the form
$$
f(p)=\sum_{(i, j) \in T} \ell\left(A_{i j}\nezhg, p\right)
$$
is always minimized when $p=\frac{ \sum_{(i, j) \in T} A_{i j}\nezhg }{ |T|  } $. Applying this in the above decomposition \eqref{eq:decomp_lem_losshatK<K_1}, we have
$$
\begin{aligned}
&L\left(A, K^{\prime}\right)-L_0(A, K)\\  \geq& \sum_{(i, j) \in T_{1,2,3,4}}\left[\ell\left(A_{i j}\nezhg, \hat{p}\right)-\ell\left(A_{i j}\nezhg, B_{34}\right)\right]+\sum_{(i, j) \in T_{1,2,3,5}}\left[\ell\left(A_{i j}\nezhg, \hat{p}\right)-\ell\left(A_{i j}\nezhg, B_{35}\right)\right] \\
&+\sum_{\left(k_1, k_2, l_1, l_2\right) \notin\{(1,2,3,4),(1,2,3,5)\}}\sum_{(i, j) \in T_{k_1, k_2, l_1, l_2}}\left[\ell\left(A_{i j}\nezhg, \hat{p}_{k_1, k_2, l_1, l_2}\right)-\ell\left(A_{i j}\nezhg, B_{l_1 l_2}\right)\right]\\
:=&\mathcal{I I I}+\mathcal{I} \mathcal{V}+\mathcal{V},
\end{aligned}
$$
where $\hat p_{k_1,k_2,l_1,l_2}:= { \sum_{(i, j) \in T_{k_1,k_2,l_1,l_2}} A_{i j}\nezhg }/{  |T_{k_1,k_2,l_1,l_2}|  }  $, and $\hat{p}$ denotes\\ ${ \sum_{(i, j) \in T_{1,2,3,4} \cup T_{1,2,3,5}} A_{i j}\nezhg } / { |T_{1,2,3,4} \cup T_{1,2,3,5}| }$. Note that $\hat{p}=t \hat{p}_1+(1-t) \hat{p}_2$, where $\hat{p}_1=\hat{p}_{1,2,3,4}$ and $\hat{p}_2=\hat{p}_{1,2,3,5}$ and $t=\frac{\left|T_{1,2,3,4}\right|}{\left|T_{1,2,3,4 }\right|+\left|T_{1,2,3,5}\right|}$. Also let $p_1=B_{34}$ and $p_2=B_{35}$. 

We first bound the term $\mathcal{III}$, in the same fashion as in \cite{li2020network}, but taking the covariate adjusting into account.
Define $f(\mathbf{x}, p)=\sum_{(i, j) \in T_{1,2,3,4}} \ell\left(A_{i j}\nezhg, p\right)$, where $\mathbf{x}=\left\{A_{i j}\nezhg\right\}_{T_{1,2,3,4}}$.
Then $\mathcal{III}$ could be written as
$$
\mathcal{I I I}=f\left(\mathbf{x}, t \hat{p}_1+(1-t) \hat{p}_2\right)-f\left(\mathbf{x}, p_1\right) .
$$
By the property of strong convexity \citep{boyd2004convex}, for both losses we have
$$
f\left(\mathbf{x}, t \hat{p}_1+(1-t) \hat{p}_2\right) \geq f\left(\mathbf{x}, \hat{p}_1\right)+\frac{m}{2}(1-t)^2\left|\hat{p}_1-\hat{p}_2\right|^2,
$$
in which $m =2 |T_{1,2,3,4}| $ for the scaled $L_2$ loss, and $m$ can be chosen as $\sum_{(i,j)\in T_{1,2,3,4}} A_{ij}\nezhg / C_{\bar B}^2 $ for the snll loss.
Now term $\mathcal{III}$ can be bounded by
\begin{equation}
\label{eq:III_in_hatK<Kloss}
f\left(\mathbf{x}, t \hat{p}_1+(1-t) \hat{p}_2\right)-f\left(\mathbf{x}, p_1\right) \geq
\frac{m}2(1-t)^2 |\hat p_1 - \hat p_2|^2 + f(\mathbf{x},\hat p_1) - f(\mathbf{x}, p_1).
\end{equation}
As is argued in \cite{li2020network}, $|\hat p_1 - \hat p_2|$ is lower bounded by $c_{K'}\rho_n$ for some constant $c_{K'}$, since $|\hat p_1 - p_1| $ and $|\hat p_2 - p_2| $ are bounded by $O_P(\sqrt{\frac{\rho_n}{n^2}})$ by a Bernstein inequality (note that $\hat p_1 = \hat p_{1,2,3,4}$ and $p_1 = B_{34}$).
Moreover, the $f(\mathbf{x},\hat p_1) - f(\mathbf{x}, p_1)$ could be bounded by 
\begin{align*}
|f(\mathbf{x},\hat p_1) - f(\mathbf{x}, p_1)|\leq& \sum_{(i,j)\in T_{1,2,3,4}} 2|\hat p_1 -p_1 | (A_{ij} + p_1 + |\hat p_1 - p_1| )\\ \leq& O_P(n^2) O_P(\sqrt{\frac{\rho_n}{n^2}} ) O_P(\rho_n) = O_P(n\rho_n^{3/2} )
\end{align*}
for scaled $L_2$ loss, and
\begin{align*}
|f(\mathbf{x},\hat p_1) - f(\mathbf{x}, p_1)|\leq& \sum_{(i,j)\in T_{1,2,3,4}} 2|\hat p_1 -p_1 | (1+2 A_{ij}\nezhg /c_{\bar B} )\\ \leq& O_P(n^2) O_P(\sqrt{\frac{\rho_n}{n^2}} ) O_P(1) = O_P(n\rho_n^{1/2} )
\end{align*}
for the snll loss. To sum up, the $m(1-t)^2|\hat p_1-\hat p_2|^2/2 $ term in \eqref{eq:III_in_hatK<Kloss} dominates the rate, and we get there exists some constant $c$ s.t.\
$$
\mathcal{III} \geq cn^2\rho_n^2
$$
for scaled $L_2$ loss, and also for snll loss under the condition that $\rho_n^{-1} = o(n^{2/3})$ (which is weaker than the $\varphi_n/\sqrt n\to\infty$ condition in our lemma and theorem). The term $\mathcal{IV}$ is bounded in exactly the same way as $\mathcal{III}$. 

Next, consider term $\mathcal{V}$. For the scaled $L_2$ loss, same as in \cite{li2020network} we obtain $\mathcal{V}\geq -O_P(\rho_n) $. For the snll loss,
\begin{align*}
\mathcal{V} \geq - \sum_{\left(k_1, k_2, l_1, l_2\right) \notin\{(1,2,3,4),(1,2,3,5)\}}\sum_{(i, j) \in T_{k_1, k_2, l_1, l_2}}|\hat p_{k_1k_2l_1l_2} - B_{l_1l_2} | \left[ 1 + 2 A_{ij}\nezhg / c_{\bar B}  \right] \geq -O_P(\sqrt{\rho_n n^2}).
\end{align*}
Combining all the bounds on terms $\mathcal{III},\mathcal{IV}$ and $\mathcal{V}$, we have  there exists some constant $c$ s.t.\
$$
L(A,K')\geq L_0(A,K) + c n^2\rho_n^2
$$
for scaled $L_2$ loss, and also for snll loss under the condition that $\varphi_n/ n^{1/3}\to\infty $.
\end{proof}

\section{Feature Selection}
\label{sec:ecv_z}
In the covariate-adjusted model, pairwise covariates $Z$ and the class labels $\bc$ are independent. On the contrary, in the covariates-confounding model, the covariates distribution is governed by the community labels. An interesting question to ask is what will happen if the covariates used in 
fitting 
the covariates-adjusted model are correlated with the community information.

Consider the following example: $n=500, \bar{B} = \bigl(\begin{smallmatrix} 2 & 1 \\ 1 & 2 \end{smallmatrix}\bigr), \rho_n = 4\log n /n $. A PCABM network is generated from $A_{ij} \sim \text{Poisson}(B_{c_ic_j} \exp(z_{ij}\gamma^0))$, where we have one pairwise covariate $z_{ij} \sim \text{Poisson}(0.09)$ and $\gamma^0=2$. However, when fitting the model, a ``false'' covariate $Z'$ is also included where $z_{ij}'\sim \text{Poisson}(0.09) + 0.6(\bar{B}_{c_ic_j}-1.5)r(1-r^2)^{-1/2}$, which makes the Pearson correlation between $[z_{ij}', i<j]$ and $[P_{ij}, i<j]$ to be $r$. We consider evaluating the community detection performance of fitting PCABM under three scenarios: (1) using only the true covariate $z_{ij}$, (2) using only the false covariate $z_{ij}'$, (3) using both covariates. We vary the correlation $r$ from 0 to 1 to see how it will impact the community detection accuracy.
As shown in Figure~\ref{fig:cov}, as the correlation increases, when fitting the model with $Z'$ or with $\{Z,Z'\} $, community detection performance becomes worse. Why does this happen? As far as we could understand, the reason is that when the false covariate is correlated with the matrix $P$, it will contribute substantially to fitting the model. When estimating $\pg$ the MLE mistakenly recognized the effect of $B_{c_ic_j}$ as the effect of $\exp(z'_{ij}\gamma)$, so that the $\pg$ estimate is very biased, and as a result what we get after adjusting for such a covariate will contain less community information. This is demonstrated in Table \ref{tab:gamma_corr}: when fitting PCABM with both covariates, the coefficient of $Z'$ is more and more biased as the correlation between $Z'$ and $B$ grows.
 In another word, the MLE cannot distinguish the effects of $B_{c_ic_j}$ and $\exp(z'_{ij}\gamma)$. 

\begin{figure}[h]
    \centering
  \includegraphics[width=.5\linewidth]{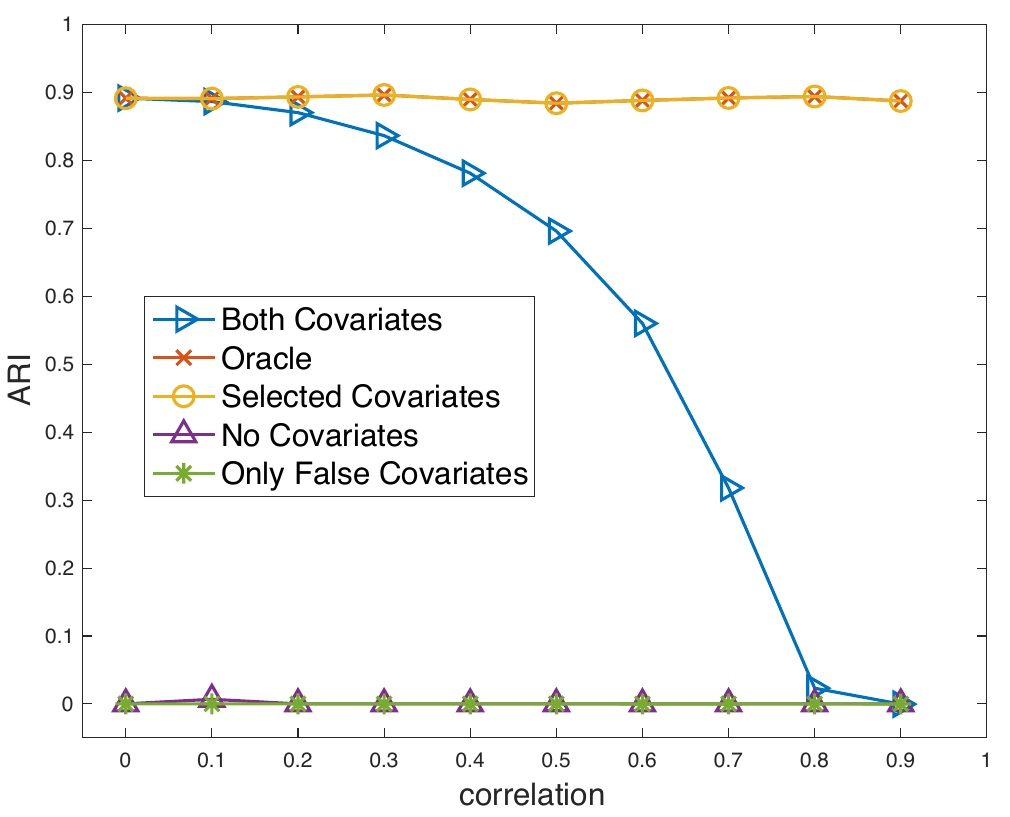}  
    \caption{Simulation results for adding covariate of different correlation with community structure.}
    \label{fig:cov}
\end{figure}

\begin{table}[htbp]
    \centering
    \caption{Estimation of $\hat{\boldsymbol{\gamma}}$ when including both covariates $\{Z,Z'\}$. Averaged over 100 replicates.}
    {\small
    \begin{tabular}{c|cccccccccc}
        \hline
        $r$&0&0.1&0.2&0.3&0.4&0.5&0.6&0.7&0.8&0.9\\
        \hline
        $\gamma(Z)$&2.001&2.001&2.000&2.002&2.002&2.001&2.000&1.999&2.001&2.003\\
        $\gamma(Z')$&0.000&0.108&0.198&0.282&0.351&0.412&0.456&0.485&0.489&0.424\\
        \hline
    \end{tabular}
    }
    \label{tab:gamma_corr}
\end{table}

However, one would imagine that the prediction of an inaccurate $(\pmb\gamma,\be)$ estimate on a test set will not be as satisfactory. In particular, the out-of-sample likelihood of $(\pmb\gamma,\be)$ estimation under the $\{Z,Z'\}$ model should be smaller than its counterpart of the model that only has the true covariate ${Z}$. 
The edge cross-validation model selection procedure introduced above is a very suitable framework to compare out-of-sample likelihoods when fitting different subsets of covariates. Therefore, we expect the ECV approach could select the best subset of covariates, and pick out the confounding $Z'$ in this case. An ECV algorithm for selecting covariates is given in Algorithm  \ref{ecv_z_pcabm}. It basically uses a forward selection method, with a stopping criterion defined by the convergence of out-of-sample likelihood.
Applying this variable selection procedure to the example introduced at the beginning of this subsection, where we used $p=0.9,N_{rep} = 5$ and $\epsilon_L=0.1$, the frequencies of selecting $Z$ or $Z'$ in 100 replicates when $Z'$ has different correlation $r$ with $B$ are presented in Table \ref{tab:select_prop}. We can see that the algorithm almost perfectly selects the true covariate $Z$ and screens out the false covariate $Z'$. As a result, the performance of spectral clustering using selected covariates is very close to using only true covariates (oracle), and is much better than fitting both covariates, as is shown in Figure \ref{fig:cov}. Thus, we conclude that while fitting false correlated covariates could be very harmful to clustering performance under PCABM, Algorithm \ref{ecv_z_pcabm} could screen out these false covariates, which makes the whole clustering procedure more robust to confounding variables.

\begin{table}[H]
    \centering
    \caption{Number of times that  covariate $Z$ or $Z'$ is selected by Algorithm \ref{ecv_z_pcabm} in 100 replicates.}
    {\small
    \begin{tabular}{c|cccccccccc}
        \hline
        $r$&0&0.1&0.2&0.3&0.4&0.5&0.6&0.7&0.8&0.9\\
        \hline
        $Z$&100&    100&    100&   100&    100&    100&    100 & 100&  100&100\\
        $Z'$&0 &   0&    0&    0&    0 &   0&    0&    0&    1&    0\\
        \hline
    \end{tabular}
    }
    \label{tab:select_prop}
\end{table} 

\begin{algorithm}[htbp]
        \SetKwInOut{Input}{input}
        \SetKwInOut{Output}{output}
  \caption{ECV for stepwise covariate selection in PCABM}
  \label{ecv_z_pcabm}
  \Input{Adjacency matrix $A$, covariates $\bz_{ij} \in \mathbb R^{d}, i,j = 1,...,n $,   number of communities $K$, training proportion $p$, number of replications / folds $N_{rep}$, constant $\epsilon_L$.}
  \Output{Selected covariate index set $S$.}
Initialize selected covariate index set $S = \emptyset $, and the best loss $L_{old} =$ some large number.\\
\While{$L_{old}$ has not converged}{
  \For{$m=1$ \KwTo $N_{rep}$}
  {
  Randomly choose a subset of node pairs $\Omega$: selecting each pair $(i, j), i < j$ independently
  with probability $p$, and adding $(j, i)$ if $(i, j)$ is selected.\\
  \For{$d_1$ in $[d]\setminus S $ }{
  Let $S' = S \cup \{d_1\}$. Consider the model with covariates $Z_{S'}$. \\
   Calculate MLE $\hat\pg_{S'}^{(\Omega)} $ with $P_{\Omega} A, P_{\Omega} Z$: Optimize
  $l_{\be}^{(\Omega)}(\pg) = \sum_{(i,j)\in \Omega} A_{ij} \bz^\top_{ij} \pg -\frac12 \sum_{kl} O^{(\Omega)}_{kl}(\be) \log E_{kl}^{(\Omega)}(\be,\pg)$ with $\bz_{ij}$ restricted on $S'$.
  \\
  Apply matrix completion to $P_{\Omega} A'$ with rank constraint $K$ to obtain $\hat A'_{d_1}$, where $A'$ denotes the adjacency matrix adjusted by $\hat\pg$.\\
  Run spectral clustering on $\hat A'_{d_1}$ to obtain the estimated membership vector $\hat\be^{(m)}_{d_1}$.
  Estimate the probability matrix $\hat B^{(m)}_{d_1}$: $ \hat B_{kl}(\hat\be,\hat\pg) = O_{kl}^{(\Omega)}(\hat\be)/E_{kl}^{(\Omega)}(\hat\be,\hat\pg) $.\\
Evaluate the corresponding losses $L^{(m)}_{d_1}$, by applying the loss function L with the 
 estimated parameters to $A_{ij} , (i, j) \in \Omega^c $.
    }}
  Let $L_{d_1} = \sum_{m=1}^{N_{rep}} L_{d_1}^{(m)} / N_{rep} $ for all $d_1 \in [d]\setminus S$.
   If $L_{old} - \min_{d_1 \in [d]\setminus S} L_{d_1} >  \epsilon_L |L_{old}|$, let 
   $d^* = \arg\min_{d_1 \in [d]\setminus S} L_{d_1}$, $S = S \cup \{d^*\} $, and set $L_{old} = L_{d^*}$. Otherwise, claim  $L_{old}$ as converged and stop the algorithm.}
  
\end{algorithm}

\section{Additional Simulation and Real Data Results}
\label{sec:supple:addi_simu_realdata}
\subsection{Estimating $\bm{\gamma}$ with Random Initial $\be$}
\label{subsec:supple:addi_randominit_e}
Here, we present in Table \ref{tab:gamma} the simulation results on the estimation of $\boldsymbol{\gamma}$ for Section \ref{subsec::simu-gamma} under random initial community assignments. It is very similar to Table \ref{tab:gamma_K=1} when we ignore the community structure.

\begin{table}[H]
    \centering
    \caption{Simulated results over 100 replicates of $\hat{\boldsymbol{\gamma}}$, displayed as mean (standard deviation).}
    {\small
    \begin{tabular}{c|ccccc}
        \hline
        $n$&$\boldsymbol{\gamma}^0_1=0.4$&$\boldsymbol{\gamma}^0_2=0.8$&$\boldsymbol{\gamma}^0_3=1.2$&$\boldsymbol{\gamma}^0_4=1.6$&$\boldsymbol{\gamma}^0_5=2$\\ \hline
        {100}&0.399 (0.0414) &0.797 (0.0354) &1.197 (0.0455) &1.596 (0.0467) &1.995 (0.0484)\\
         \hline
        {300}&0.399 (0.0205)&0.801 (0.0151)&1.199 (0.0227)&1.603 (0.0217)&2.000 (0.0234)\\
        \hline
        {500}&0.395 (0.0131)&0.799 (0.0118)&1.197 (0.0173)&1.599 (0.0140)&2.002 (0.0147)\\
        \hline
    \end{tabular} }
    \label{tab:gamma}
\end{table}

\subsection{PCABM Clustering Visualization in the School Friendship Data}
\label{subsec:supple:schoolfriend_visualize}
In Figure~\ref{fig:friend}, school and ethnicity are targeted communities, respectively. We use different shades to distinguish true communities. Predicted communities are separated by the middle dash line so that the ideal split would be shades vs.\ tints on two sides. By these criteria, our model performs pretty well in both cases.
\begin{figure}[p]
    \centering
    \begin{subfigure}{.9\textwidth}
        \centering
        \includegraphics[width=\linewidth]{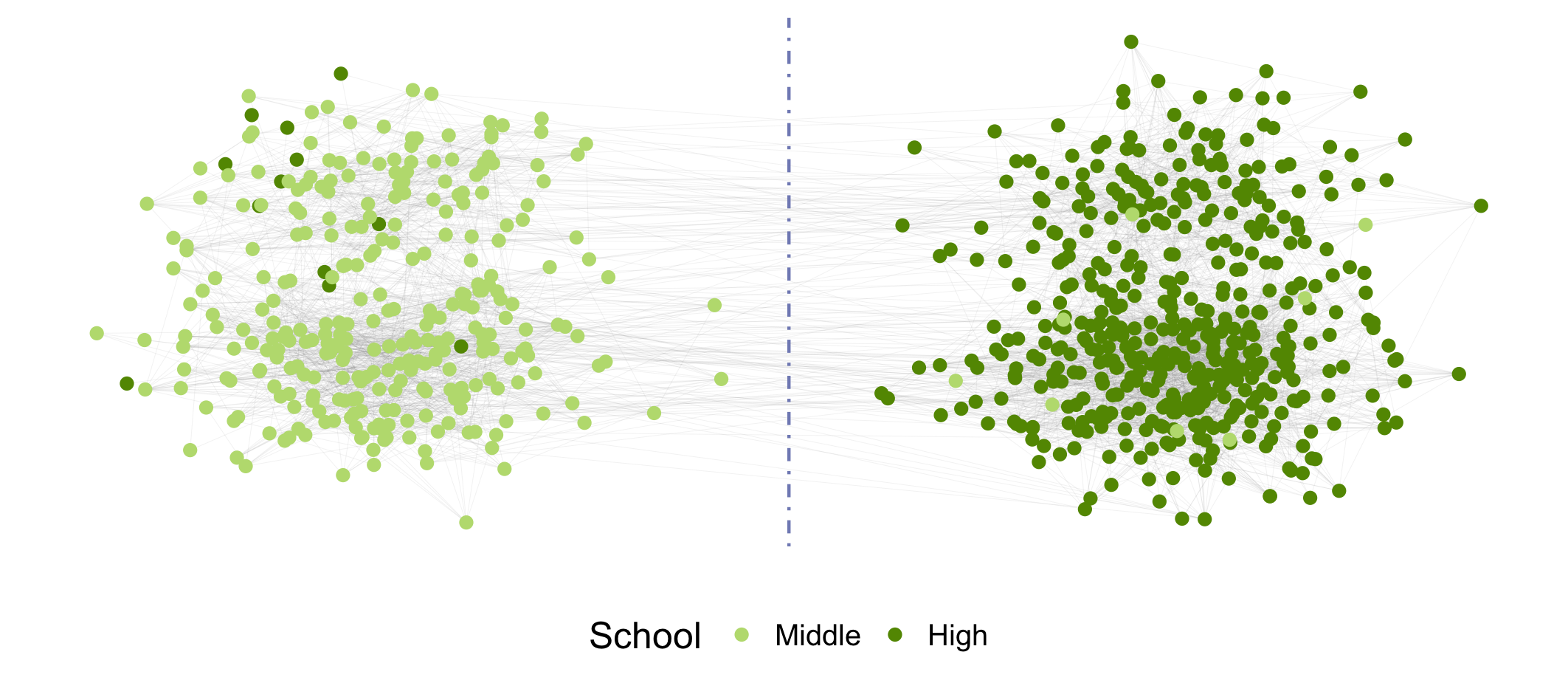}
        \label{fig:sfig1}
    \end{subfigure}%
    
    \begin{subfigure}{.9\textwidth}
        \centering
        \includegraphics[width=\linewidth]{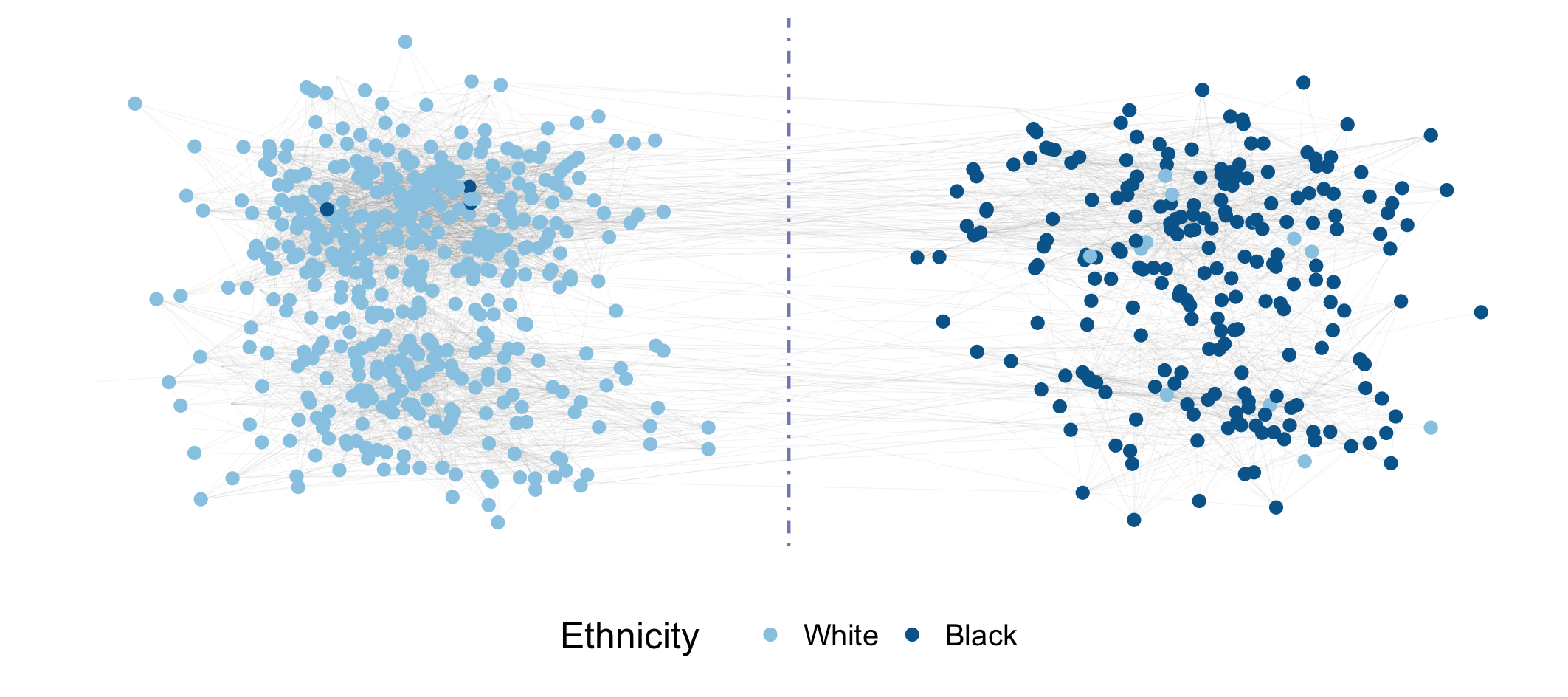}
        \label{fig:sfig2}
    \end{subfigure}
    
    \caption{Community detection with different pairwise covariates. From top to bottom, we present community prediction results for school and ethnicity.}
    \label{fig:friend}
\end{figure}

\end{document}